\newcommand{\mbZ}{\mathbb Z}
\newcommand{\mbC}{\mathbb C}
\newcommand{\cP}{\mathcal P}
\newcommand{\oM}{\overline{\mathcal M}}
\newcommand{\tg}{\widetilde g}
\newcommand{\tu}{{\widetilde u}}
\newcommand{\og}{\overline g}
\newcommand{\oh}{\overline h}
\newcommand{\hLambda}{\widehat\Lambda}
\def\cM{{\mathcal{M}}}
\def\oM{{\overline{\mathcal{M}}}}
\def\mbQ{{\mathbb Q}}
\def\d{{\partial}}
\newcommand{\<}{\left<}
\renewcommand{\>}{\right>}
\newcommand{\eps}{\varepsilon}
\newcommand{\str}{\mathrm{str}}
\newcommand{\cA}{\mathcal A}
\newcommand{\hcA}{\widehat{\mathcal A}}
\newcommand{\DR}{\mathrm{DR}}
\newcommand{\DZ}{\mathrm{DZ}}
\newcommand{\even}{\mathrm{even}}
\newcommand{\ct}{\mathrm{ct}}
\renewcommand{\th}{\widetilde h}
\newcommand{\Coef}{\mathrm{Coef}}
\newcommand{\Desc}{\mathrm{Desc}}
\newcommand{\ST}{\mathrm{ST}}
\newcommand{\tv}{\widetilde v}
\renewcommand{\top}{\mathrm{top}}
\newcommand{\red}{\mathrm{red}}
\newcommand{\cQ}{\mathcal{Q}}
\newcommand{\Ch}{\mathrm{Ch}}
\newcommand{\dil}{\mathrm{dil}}
\newcommand{\MST}{\mathrm{MST}}
\newcommand{\AMST}{\mathrm{AMST}}
\newcommand{\gl}{\mathrm{gl}}
\newcommand{\oG}{{\overline G}}
\newcommand{\oH}{{\overline H}}
\newcommand{\tC}{\widetilde C}
\newcommand{\Li}{\mathrm{Li}}
\newcommand{\cD}{{\mathcal{D}}}
\newcommand{\wk}{\mathrm{wk}}
\newcommand{\rt}{\mathrm{rt}}
\renewcommand{\t}{\mathrm{t}}
\newcommand{\pol}{\mathrm{pol}}
\newcommand{\exc}{\mathrm{exc}}
\newcommand{\reg}{\mathrm{reg}}
\newcommand{\AST}{\mathrm{AST}}
\newcommand{\Spl}{\mathrm{Spl}}
\newcommand{\Con}{\mathrm{Con}}
\newcommand{\tR}{\widetilde{R}}
\newtheorem{theorem}{Theorem}[section]
\newtheorem{proposition}[theorem]{Proposition}
\newtheorem{lemma}[theorem]{Lemma}
\newtheorem{corollary}[theorem]{Corollary}
\newtheorem{conjecture}[theorem]{Conjecture}
\theoremstyle{definition}
\newtheorem{definition}[theorem]{Definition}
\newtheorem{remark}[theorem]{Remark}
\def\&{\vspace{-5pt}&}
\numberwithin{equation}{section}
\title{Integrable systems of double ramification type}
\author{Alexandr Buryak}
\address{A.~Buryak:\newline School of Mathematics, University of Leeds, Leeds, LS2 9JT, United Kingdom}
\email{a.buryak@leeds.ac.uk}
\author{Boris Dubrovin}
\address{B.~Dubrovin:\newline SISSA, via Bonomea 265, Trieste 34136, Italy}
\email{dubrovin@sissa.it}
\author{J\'er\'emy Gu\'er\'e}
\address{J.~Gu\'er\'e:\newline Institut Fourier, Universit\'e de Grenoble Alpes,\newline
100 rue des Math\'ematiques, 38610 Gi\`eres, France}
\email{jeremy.guere@univ-grenoble-alpes.fr}
\author{Paolo Rossi}
\address{P.~Rossi:\newline Dipartimento di Matematica ``Tullio Levi-Civita'', Universit\`a degli Studi di Padova,\newline
Via Trieste 63, 35121 Padova, Italy}
\email{paolo.rossi@math.unipd.it}
\begin{document}

\begin{abstract}
In this paper we study various aspects of the double ramification (DR) hierarchy, introduced by the first author, and its quantization. We extend the notion of tau-symmetry to quantum integrable hierarchies and prove that the quantum DR hierarchy enjoys this property. We determine explicitly the genus $1$ quantum correction and, as an application, compute completely the quantization of the $3$- and $4$-KdV hierarchies (the DR hierarchies for Witten's $3$- and $4$-spin theories). We then focus on the recursion relation satisfied by the DR Hamiltonian densities and, abstracting from its geometric origin, we use it to characterize and construct a new family of quantum and classical integrable systems which we call of double ramification type, as they satisfy all of the main properties of the DR hierarchy. In the second part, we obtain new insight towards the Miura equivalence conjecture between the DR and Dubrovin-Zhang hierarchies, via a geometric interpretation of the correlators forming the double ramification tau-function. We then show that the candidate Miura transformation between the DR and DZ hierarchies (which we uniquely identified in our previous paper) indeed turns the Dubrovin-Zhang Poisson structure into the standard form. Eventually, we focus on integrable hierarchies associated with rank-$1$ cohomological field theories and their deformations, and we prove the DR/DZ equivalence conjecture up to genus $5$ in this context. 

\end{abstract}

\maketitle

\tableofcontents

\markboth{A. Buryak, B. Dubrovin, J. Gu\'er\'e, P. Rossi}{Integrable systems of double ramification type}

\section{Introduction}
The Dubrovin-Zhang (DZ) hierarchy \cite{DZ05} is an integrable system of Hamiltonian PDEs associated to any given semisimple cohomological field theory (CohFT). As an important property, it is tau-symmetric and we can then define its partition function as the tau-function of its topological solution. The DZ hierarchy plays a central role in generalizing to any semisimple CohFT the notion, underlying the Witten-Kontsevich theorem \cite{Wit91,Kon92}, which states that the partition function of the CohFT should correspond to the topological tau-function of some integrable Hamiltonian tau-symmetric hierarchy of evolutionary PDEs.\\

The double ramification (DR) hierarchy has been introduced in \cite{Bur15} by the first author and is another integrable system of Hamiltonian PDEs, associated to any given cohomological field theory (CohFT). It does not require any semisimplicity condition and it is also defined for partial CohFTs, satisfying  weaker axioms, see \cite{BDGR16}.
At the heart of its construction lies the double ramification cycle $\DR_g(a_1,\ldots,a_n)$, which is the push-forward to the moduli space of stable curves $\oM_{g,n}$ of the virtual fundamental cycle of the moduli space of rubber stable maps to $\mathbb{P}^1$ relative to $0$ and $\infty$, with ramification profile (orders of poles and zeros) given by $(a_1,\ldots,a_n)\in \mathbb{Z}^n$.\\

We prove in \cite{BDGR16} that the DR hierarchy is also tau-symmetric and we define its partition function as the tau-function of its string solution.
The DR/DZ equivalence conjecture \cite{BDGR16} predicts the existence (and unicity) of a normal Miura transformation under which the partition function of a given CohFT equals the associated DR partition function.
As a consequence, we recover in the semisimple case the original conjecture from \cite{Bur15} that the DR and DZ hierarchies are Miura equivalent.\\

One application of the DR/DZ equivalence conjecture, when proved true, is to give a quantization of any Dubrovin-Zhang hierarchy.
Indeed, the DR hierarchy has a natural quantization, constructed in \cite{BR15} and recalled in Section \ref{section:quantumDRH}.
In this paper, we prove that the quantum DR hierarchy is also tau-symmetric and we define a quantum tau-function.
In the limit when the quantum parameter $\hbar$ tends to zero, we recover results from \cite{BDGR16}.
We also study the first quantum correction in genus $1$ and, as an application, we completely determine the quantum DR hierarchies associated to the Witten's $3$- and $4$-spin theories.\\

One of the most striking property of the quantum DR hierarchy is that it can be recovered recursively from the knowledge of one Hamiltonian, usually denoted $\overline{G}_{1,1}$, via the recursion equations of Theorem \ref{theorem:recursion}, proved in \cite{BR15}.
Conversely, any Hamiltonian $\overline{H}$ compatible with these recursion equations in the sense of Theorem \ref{theorem:recursion->integrability} produces a unique quantum integrable tau-symmetric hierarchy.
An integrable hierarchy obtained in this way is said to be of double ramification type.
As an example, we study the dispersionless quantum deformations of DR type of the Riemann hierarchy and suggest they are in one-to-one correspondence with the DR hierarchies associated with CohFTs of rank $1$.\\

Starting from Section \ref{section:geometric formula}, we go back to the classical DR hierarchy and to the DR/DZ equivalence conjecture.
In Theorem \ref{theorem:main geometric formula} we give a very explicit and geometric formula for the coefficients of the DR partition function, called the DR correlators.
This formula is used in Section \ref{section:Miura for DZ} towards the DR/DZ equivalence conjecture.
More precisely, we prove in Theorem \ref{theorem:Miura for DZ} that the candidate Miura transformation between the two, which we uniquely identified in \cite{BDGR16}, indeed transforms the Hamiltonian operator $K^{\mathrm{DZ}}$ of the DZ hierarchy to the standard operator~$\eta \partial_x$ used in the DR hierarchy, giving a new evidence for the conjecture.\\

To conclude, we give various results about the DR and DZ hierarchies associated to CohFTs of rank $1$.
In particular, we show that the DR hierarchy is a standard deformation of the Riemann hierarchy in the sense of~\cite{DLYZ16} and we prove the existence of a normal Miura transformation that reduces the Dubrovin-Zhang hierarchy to its unique standard form, proving one of the conjectures from~\cite{DLYZ16} about tau-symmetric deformations of the Riemann hierarchy.
Lastly, we prove that the DR/DZ equivalence conjecture holds for rank-$1$ CohFTs at the approximation up to genus $5$.

\subsection{Acknowledgements}

We would like to thank Andrea Brini, Guido Carlet, Rahul Pandharipande, Sergey Shadrin and Dimitri Zvonkine for useful discussions. A. B. has received funding from the European Union's Horizon 2020 research and innovation programme under the Marie Sk\l odowska-Curie grant agreement No. 797635 and was also supported by Grant ERC-2012-AdG-320368-MCSK and Grant RFFI-16-01-00409. J. G. was supported by the Einstein foundation. P.~R.~was partially supported by a Chaire CNRS/Enseignement superieur 2012-2017 grant.

%%%%%%%%%%%%%%%%%%%%%%%%%%%%%%%%%%%%%%%%%%%%%%%%%%%%%%%%%%%%%%%%%%%%%
%%%%%%%%%%%%%%%%%%%%%%%%%%%%%%%%%%%%%%%%%%%%%%%%%%%%%%%%%%%%%%%%%%%%%

\section{Double ramification hierarchy}\label{section:DR hierarchy}

In this section we recall the main definitions and results from~\cite{Bur15,BR14,BR15}. The classical double ramification (DR) hierarchy is a system of commuting Hamiltonians on an infinite dimensional phase space that can be heuristically thought of as the loop space of a fixed vector space. The entry datum for this construction is a cohomological field theory (CohFT) in the sense of Kontsevich and Manin~\cite{KM94} or, more in general, a partial CohFT in the sense of \cite{LRZ} (the definition of a partial CohFT is the same as the one for a CohFT apart from the loop axiom, which is not required in the first). For actual CohFTs (not just partial), in \cite{BR15} a quantization was constructed for the classical double ramification hierarchy, dubbed quantum double ramification (qDR) hierarchy.

\subsection{Formal loop space} 

Let $V$ be an $N$-dimensional vector space and $\eta$ a symmetric bilinear form on it. The loop space of $V$ will be defined somewhat formally by describing its ring of functions. Following~\cite{DZ05} (see also~\cite{Ros10}), let us consider formal variables~$u^\alpha_i$, $\alpha=1,\ldots,N$, $i=0,1,\ldots$, associated to a basis $e_1,\ldots,e_N$ of $V$. Always just at a heuristic level, the variable $u^\alpha:=u^\alpha_0$ can be thought of as the component $u^\alpha(x)$ along $e_\alpha$ of a formal loop $u\colon S^1\to V$, where~$x$ is the coordinate on $S^1$, and the variables $u^\alpha_{x}:=u^\alpha_1, u^\alpha_{xx}:=u^\alpha_2,\ldots$ as its $x$-derivatives. We then define the ring $\cA_N$ of {\it differential polynomials} as the ring of polynomials $f(u^*;u^*_x,u^*_{xx},\ldots)$ in the variables~$u^\alpha_i, i>0$, with coefficients in the ring of formal power series in the variables $u^\alpha=u^\alpha_0$ (when it does not give rise to confusion, we will use the symbol $*$ to indicate any value, in the appropriate range, of the sub or superscript). We can differentiate a differential polynomial with respect to $x$ by applying the operator $\partial_x := \sum_{i\geq 0} u^\alpha_{i+1} \frac{\partial}{\partial u^\alpha_i}$ (in general, we use the convention of sum over repeated Greek indices, but not over repeated Latin indices). Finally, we consider the quotient~$\Lambda_N$ of the ring of differential polynomials first by constants and then by the image of~$\partial_x$, and we call its elements {\it local functionals}. A local functional, that is the equivalence class of a differential polynomial~$f=f(u^*;u^*_x,u^*_{xx},\ldots)$, will be denoted by $\overline{f}=\int f dx$. Let us introduce a grading $\deg u^\alpha_i = i$ and define $\cA^{[k]}_N$ and $\Lambda^{[k]}_N$ as the subspaces of degree~$k$ of $\cA_N$ and of~$\Lambda_N$ respectively.

Differential polynomials and local functionals can also be described using another set of formal variables, corresponding heuristically to the Fourier components $p^\alpha_k$, $k\in\mbZ$, of the functions $u^\alpha=u^\alpha(x)$. Let us, hence, define a change of variables
\begin{gather}\label{eq:u-p change}
u^\alpha_j = \sum_{k\in\mbZ} (i k)^j p^\alpha_k e^{i k x},
\end{gather}
which allows us to express a differential polynomial $f(u;u_x,u_{xx},\ldots)$ as a formal Fourier series in $x$ where the coefficient of $e^{i k x}$ is a power series in the variables $p^\alpha_j$ (where the sum of the subscripts in each monomial in $p^\alpha_j$ equals $k$). Moreover, the local functional~$\overline{f}$ corresponds to the constant term of the Fourier series of $f$.

Let us describe a natural class of Poisson brackets on the space of local functionals. Given an $N\times N$ matrix~$K=(K^{\mu\nu})$ of differential operators of the form $K^{\mu\nu} = \sum_{j\geq 0} K^{\mu\nu}_j \partial_x^j$, where the coefficients $K^{\mu\nu}_j$ are differential polynomials and the sum is finite, we define
$$
\{\overline{f},\overline{g}\}_{K}:=\int\left(\frac{\delta \overline{f}}{\delta u^\mu}K^{\mu \nu}\frac{\delta \overline{g}}{\delta u^\nu}\right)dx,
$$
where we have used the variational derivative $\frac{\delta \overline{f}}{\delta u^\mu}:=\sum_{i\geq 0} (-\partial_x)^i \frac{\partial f}{\partial u^\mu_i}$. Imposing that such bracket satisfies the anti-symmetry and the Jacobi identity will translate, of course, into conditions for the coefficients~$K^{\mu \nu}_j$. An operator that satisfies such conditions will be called Hamiltonian. A standard example of a Hamiltonian operator is given by $\eta \partial_x$. The corresponding Poisson bracket $\{\cdot,\cdot\}_{\eta\d_x}$ will sometimes be denoted just by $\{\cdot,\cdot\}$ when no confusion arises. Such Poisson bracket also has a nice expression in terms of the variables $p^\alpha_k$:
\begin{gather}\label{eq:bracket of p's}
\{p^\alpha_k, p^\beta_j\}_{\eta \partial_x} = i k \eta^{\alpha \beta} \delta_{k+j,0}.
\end{gather}

Finally, we will need to consider extensions $\hcA_N$ and $\hLambda_N$ of the spaces of differential polynomials and local functionals. Introduce a new variable~$\eps$ with $\deg\eps = -1$. Then $\hcA^{[k]}_N$ and $\hLambda^{[k]}_N$ are defined, respectively, as the subspaces of degree~$k$ of $\hcA_N:=\cA_N[[\eps]]$ and of~$\hLambda_N:=\Lambda_N[[\eps]]$. Their elements will still be called differential polynomials and local functionals. We can also define Poisson brackets as above, starting from a Hamiltonian operator $K=(K^{\mu\nu})$, $K^{\mu\nu} = \sum_{i,j\geq 0} (K^{[i]}_j)^{\mu\nu} \eps^i \partial_x^j$, where $(K^{[i]}_j)^{\mu\nu}\in\cA_N$ and $\deg (K^{[i]}_j)^{\mu\nu}=i-j+1$. The corresponding Poisson bracket will then have degree $1$. In the sequel only such Hamiltonian operators will be considered.

A Hamiltonian hierarchy of PDEs is a family of systems of the form
\begin{gather}\label{eq:Hamiltonian system}
\frac{\partial u^\alpha}{\partial \tau_i} = K^{\alpha\mu} \frac{\delta\overline{h}_i}{\delta u^\mu}, \ \alpha=1,\ldots,N ,\ i=1,2,\ldots,
\end{gather}
where $\oh_i\in\hLambda^{[0]}_N$ are local functionals with the compatibility condition $\{\oh_i,\oh_j\}_K=0$, for $i,j\geq 1$. The local functionals~$\oh_i$ are called the {\it Hamiltonians} of the systems~\eqref{eq:Hamiltonian system}.

%%%%%%%%%%%%%%%%%%%%%%%%%%%%%%%%%%%%%%%%%%%%%%%%%%%%%%%%%%%%%%

\subsection{Classical double ramification hierarchy} 

Let $c_{g,n}\colon V^{\otimes n} \to H^{\even}(\oM_{g,n},\mbC)$ be the system of linear maps defining a (possibly partial, in the sense of \cite{LRZ}) cohomological field theory, $V$ its underlying $N$-dimensional vector space, $\eta$ its metric tensor and $e_1\in V$ the unit vector. Let $\psi_i$ be the first Chern class of the line bundle over~$\oM_{g,n}$ formed by the cotangent lines at the $i$-th marked point. Denote by~$\mathbb E$ the rank~$g$ Hodge vector bundle over~$\oM_{g,n}$ whose fibers are the spaces of holomorphic one-forms. Let $\lambda_j:=c_j(\mathbb E)\in H^{2j}(\oM_{g,n},\mbQ)$. The Hamiltonians of the double ramification hierarchy are defined as follows:
\begin{gather}\label{DR Hamiltonians}
\og_{\alpha,d}:=\sum_{\substack{g\ge 0\\n\ge 2}}\frac{(-\eps^2)^g}{n!}\sum_{\substack{a_1,\ldots,a_n\in\mbZ\\\sum a_i=0}}\left(\int_{\oM_{g,n+1}}\DR_g(0,a_1,\ldots,a_n)\lambda_g\psi_1^d c_{g,n+1}(e_\alpha\otimes \otimes_{i=1}^n e_{\alpha_i})\right)\prod_{i=1}^n p^{\alpha_i}_{a_i},
\end{gather}
for $\alpha=1,\ldots,N$ and $d=0,1,2,\ldots$. Here $\DR_g(a_1,\ldots,a_n) \in H^{2g}(\oM_{g,n},\mbQ)$ is the double ramification cycle. If not all of $a_i$'s are equal to zero, then the restriction~$\left.\DR_g(a_1,\ldots,a_n)\right|_{\cM_{g,n}}$ can be defined as the Poincar\'e dual to the locus of pointed smooth curves~$[C,p_1,\ldots,p_n]$ satisfying $\mathcal O_C\left(\sum_{i=1}^n a_ip_i\right)\cong\mathcal O_C$, and we refer the reader, for example, to~\cite{BSSZ15} for the definition of the double ramification cycle on the whole moduli space~$\oM_{g,n}$. We will often consider the Poincar\'e dual to the double ramification cycle~$\DR_g(a_1,\ldots,a_n)$. It is an element of $H_{2(2g-3+n)}(\oM_{g,n},\mbQ)$ and, abusing our notations a little bit, it will also be denoted by $\DR_g(a_1,\ldots,a_n)$. In particular, the integral in~\eqref{DR Hamiltonians} will often be written in the following way:
\begin{gather}\label{DR integral}
\int_{\DR_g(0,a_1,\ldots,a_n)}\lambda_g\psi_1^d c_{g,n+1}(e_\alpha\otimes\otimes_{i=1}^n e_{\alpha_i}).
\end{gather}

The expression on the right-hand side of~\eqref{DR Hamiltonians} can be uniquely written as a local functional from $\hLambda_N^{[0]}$ using the change of variables~\eqref{eq:u-p change}. Concretely it can be done in the following way. The integral~\eqref{DR integral} is a polynomial in $a_1,\ldots,a_n$ homogeneous of degree~$2g$. It follows from Hain's formula~\cite{Hai11}, the results of~\cite{MW13} and the fact that $\lambda_g$ vanishes on $\oM_{g,n}\setminus\cM_{g,n}^{\ct}$, where $\cM_{g,n}^{\ct}$ is the moduli space of stable curves of compact type \cite{Mum83,Fab99}. Thus, the integral~\eqref{DR integral} can be written as a polynomial
\begin{gather*}
P_{\alpha,d,g;\alpha_1,\ldots,\alpha_n}(a_1,\ldots,a_n)=\sum_{\substack{b_1,\ldots,b_n\ge 0\\b_1+\ldots+b_n=2g}}P_{\alpha,d,g;\alpha_1,\ldots,\alpha_n}^{b_1,\ldots,b_n}a_1^{b_1}\ldots a_n^{b_n}.
\end{gather*}
Then we have
$$
\og_{\alpha,d}=\int\sum_{\substack{g\ge 0\\n\ge 2}}\frac{\eps^{2g}}{n!}\sum_{\substack{b_1,\ldots,b_n\ge 0\\b_1+\ldots+b_n=2g}}P_{\alpha,d,g;\alpha_1,\ldots,\alpha_n}^{b_1,\ldots,b_n} u^{\alpha_1}_{b_1}\ldots u^{\alpha_n}_{b_n}dx.
$$
Note that the integral~\eqref{DR integral} is defined only when $a_1+\ldots+a_n=0$. Therefore the polynomial~$P_{\alpha,d,g;\alpha_1,\ldots,\alpha_n}$ is actually not unique. However, the resulting local functional $\og_{\alpha,d}\in\hLambda_N^{[0]}$ doesn't depend on this ambiguity (see~\cite{Bur15}). In fact, in \cite{BR14}, a special choice of differential polynomial densities $g_{\alpha,d} \in \hcA^{[0]}_N$ for $\og_{\alpha,d} = \int g_{\alpha,d} \ dx$ is selected. They are defined in terms of $p$-variables as
$$
g_{\alpha,d}:=\sum_{\substack{g\ge 0,\,n\ge 1\\2g-1+n>0}}\frac{(-\eps^2)^g}{n!}\sum_{\substack{a_0,\ldots,a_n\in\mbZ\\\sum a_i=0}}\left(\int_{\DR_g(a_0,a_1,\ldots,a_n)}\lambda_g\psi_1^d c_{g,n+1}(e_\alpha\otimes \otimes_{i=1}^n e_{\alpha_i})\right)\prod_{i=1}^n p^{\alpha_i}_{a_i} e^{-i a_0 x},
$$
and converted univocally to differential polynomials using again the change of variables (\ref{eq:u-p change}).\\

The fact that the local functionals~$\og_{\alpha,d}$ mutually commute with respect to the standard bracket~$\eta\d_x$ was proved in~\cite{Bur15} for CohFTs and in~\cite{BDGR16} for partial CohFTs. The system of local functionals $\og_{\alpha,d}$, for $\alpha=1,\ldots,N$, $d=0,1,2,\ldots$, and the corresponding system of Hamiltonian PDEs with respect to the standard Poisson bracket~$\{\cdot,\cdot\}_{\eta\partial_x}$,
$$
\frac{\d u^\alpha}{\d t^\beta_q}=\eta^{\alpha\mu}\d_x\frac{\delta\og_{\beta,q}}{\delta u^\mu},
$$
is called the \emph{double ramification hierarchy}.

%%%%%%%%%%%%%%%%%%%%%%%%%%%%%%%%%%%%%%%%%%%%%%

\subsection{Quantum Hamiltonian systems}\label{section:quantum hamiltonian systems}

We will need, first, to extend the space of differential polynomials to allow for dependence on the quantization formal parameter $\hbar$. A \emph{quantum differential polynomial} $f=f(u^*,u^*_x,u^*_{xx,}\ldots;\eps,\hbar)$ is a formal power series in $\hbar$ and $\epsilon$ whose coefficients are polynomials in $u^\alpha_k$, for $k>0$, and power series in $u^\alpha_0$, where $\alpha=1,\ldots,N$. The quantization parameter has degree $\deg\hbar=-2$ and all other formal variables retain the same degree as in the classical case. The space of quantum differential polynomials will be denoted by $\hcA^\hbar_N$. The space of \emph{quantum local functionals} $\hLambda^\hbar_N$ is given, as in the classical case, by taking the quotient of $\hcA_N^\hbar$ with respect to formal power series in $\eps$ and $\hbar$ and the image of the $\partial_x$-operator.\\

As in the classical case, the change of variables
$$
u^\alpha_j=\sum_{k\in\mbZ}(ik)^jp^\alpha_k e^{ikx},
$$
allows to express any quantum differential polynomial $f=f(u^*_*;\eps,\hbar)$ as a formal Fourier series in~$x$ with coefficients that are (power series in~$\eps$ with coefficients) in the Weyl algebra
$\mbC[p^1_{k>0},\ldots,p^N_{k>0}][[p^1_{k\leq 0},\ldots,p^N_{k\leq 0},\hbar]]$ endowed with the ``normal ordering'' $\star$-product
$$
f \star g =f \left(  e^{\sum_{k>0} i \hbar k \eta^{\alpha \beta} \overleftarrow{\frac{\partial }{\partial p^\alpha_{k}}} \overrightarrow{\frac{\partial }{\partial p^\beta_{-k}}}}\right) g.
$$
and the commutator $[f,g]:=f\star g - g \star f$.\\

These structures can then be translated to the language of differential polynomials and local functionals. In \cite{BR15} it was proved that, for any two differential polynomials $f(x)=f(u^*,u^*_x,u^*_{xx},\ldots;\eps,\hbar)$ and $g(y)=g(u^*,u^*_y,u^*_{yy},\ldots;\eps,\hbar)$, we have
\begin{equation*}
f(x)\star g(y) =\sum_{\substack{n\geq 0\\ r_1,\ldots,r_n\geq 0\\ s_1,\ldots , s_n\geq 0}} \frac{\hbar^{n}}{n!} \frac{\partial^n f}{\partial u^{\alpha_1}_{s_1}\ldots \partial u^{\alpha_n}_{s_n}}(x)\left( \prod_{k=1}^n (-1)^{r_k}  \eta^{\alpha_k\beta_k} \delta_+^{(r_k + s_k +1)}(x-y) \right)  \frac{\partial^n g}{\partial u^{\beta_1}_{r_1}\ldots \partial u^{\beta_n}_{r_n}}(y), 
\end{equation*}
where $\delta_+^{(s)}(x-y):= \sum_{k\geq 0} (ik)^s e^{i k (x-y)}$, $s\geq 0$, is the positive frequency part of the $s$-th derivative of the Dirac delta distribution $\delta(x-y)= \sum_{k\in \mbZ} e^{i k (x-y)}$
and
\begin{equation}
\begin{split}
[f(x),g(y)]=\sum_{\substack{n\geq 1\\ r_1,\ldots ,r_n\geq 0\\ s_1,\ldots,s_n\geq 0}}& \frac{(-i)^{n-1} \hbar^{n}}{n!}  \frac{\partial^n f}{\partial u^{\alpha_1}_{s_1}\ldots \partial u^{\alpha_n}_{s_n}}(x)  (-1)^{\sum_{k=1}^n r_k}  \left( \prod_{k=1}^n \eta^{\alpha_k \beta_k}\right) \times\\
& \times \sum_{j=1}^{2n-1+\sum_{k=1}^n (s_k+r_k)} C_j^{s_1+r_1+1,\ldots,s_n+r_n+1}\delta^{(j)}(x-y)\frac{\partial^n g}{\partial u^{\beta_1}_{r_1}\ldots \partial u^{\beta_n}_{r_n}}(y).
\end{split}
\end{equation}
where
\begin{gather}\label{eq:relation of coefficients}
C_j^{a_1,\ldots,a_n}=
\begin{cases}
(-1)^{\frac{n-1+\sum a_i-j}{2}}\tC_j^{a_1,\ldots,a_n},&\text{if $j=n-1+\sum_{i=1}^n a_i\ (\mathrm{mod}\ 2)$},\\
0,&\text{otherwise}.
\end{cases}
\end{gather}
and
\begin{gather}\label{eq:decomposition}
\prod_{i=1}^k\Li_{-d_i}(z)=\sum_{j=1}^{k-1+\sum d_i}\tC^{d_1,\ldots,d_k}_j\Li_{-j}(z), \qquad \Li_{-d}(z):=\sum_{k\ge 0}k^d z^k.
\end{gather}
In particular, for $f\in \hcA_N^\hbar$ and $\og \in \hLambda_N^\hbar$, we get
\begin{equation}
\begin{split}
[f,\og]=\sum_{\substack{n\geq 1\\ r_1,\ldots ,r_n\geq 0\\ s_1,\ldots,s_n\geq 0}} \frac{(-i)^{n-1} \hbar^{n}}{n!}  &\frac{\partial^n f}{\partial u^{\alpha_1}_{s_1}\ldots \partial u^{\alpha_n}_{s_n}}  (-1)^{\sum_{k=1}^n r_k}  \left( \prod_{k=1}^n \eta^{\alpha_k \beta_k}\right) \times\\
& \times \sum_{j=1}^{2n-1+\sum_{k=1}^n (s_k+r_k)} C_j^{s_1+r_1+1,\ldots,s_n+r_n+1}  \partial_x^j  \frac{\partial^n g}{\partial u^{\beta_1}_{r_1}\ldots \partial u^{\beta_n}_{r_n}}.
\end{split}
\end{equation}
If $f$ and $\og$ are homogeneous, $[f,\og]$ is a non homogeneous element of $\hcA_N^\hbar$ of top degree equal to $\deg f + \deg \og - 1$. Taking the classical limit of this expression one obtains $\left(\frac{1}{\hbar}[\overline{f},\og]\right)|_{\hbar=0}=\{\overline{f}|_{\hbar=0},\og|_{\hbar=0}\}$, i.e. the standard hydrodynamic Poisson bracket on the classical limit of the local functionals.\\

Notice that, given $\og \in  \hLambda_N^\hbar$, the morphism $[\cdot,\og]:\hcA_N^\hbar\to\hcA_N^\hbar$ is not a derivation of the commutative ring $\hcA_N^\hbar$ (while it is if we consider the non-commutative $\star$-product instead). This means that, while it makes sense to describe the simultaneous evolution along different time parameters $\tau_i$ (in the Heisenberg picture, to use the physical language) of a quantum differential polynomial $f \in \hcA_N^\hbar$ by a system of the form
\begin{gather}\label{eq:quantum Hamiltonian system}
\frac{\partial f}{\partial \tau_i} = \frac{1}{\hbar}[f,\oh_i], \ \alpha=1,\ldots,N ,\ i=1,2,\ldots,
\end{gather}
where $\oh_i\in\hLambda^{[\leq 0]}_N$ are quantum local functionals with the compatibility condition $[\oh_i,\oh_j]=0$, for $i,j\geq 1$, one should refrain from interpreting it as the evolution induced by composition with $\frac{\d u^\alpha}{\d \tau_i}=\frac{1}{\hbar} [u^\alpha,\oh_i]$, as the corresponding chain rule does not hold: $\frac{\d f}{\partial \tau_i} \neq \sum_{k\geq 0}\frac{\d f}{\d u^\alpha_k}\d_x^k \left(\frac{\d u^\alpha}{\d \tau_i}\right)$. This corresponds to the familiar concept that in quantum mechanics there are no trajectories in the phase space along which observables evolve.\\

A formal solution to the system (\ref{eq:quantum Hamiltonian system}) can be written in the form of an element in $\hcA_N^\hbar[[\tau_*]]$:
\begin{gather}\label{eq:quantum solution}
f^{\tau_*}(u^*_*;\eps,\hbar) :=  \exp\left(\sum_{i\geq 1} \frac{\tau_i}{\hbar}[\cdot,\oh_i]\right) f(u^*_*;\eps,\hbar) = \left(\prod_{i\geq 1}  \exp \left( \frac{\tau_i}{\hbar}[\cdot,\oh_i]\right) \right) f(u^*_*;\eps,\hbar) \end{gather}
where
\begin{gather}
\exp\left(\frac{\tau_i}{\hbar}[\cdot,\oh_i]\right) := \sum_{k\geq 0} \frac{\tau_i^k}{\hbar^k k!}[[\ldots [\cdot,\oh_i],\ldots,\oh_i],\oh_i]
\end{gather}
and $f\in \hcA_N^\hbar$ in the right hand side of (\ref{eq:quantum solution}) is interpreted as the initial datum. Lifting the quantum commutator $[\cdot,\cdot]$ to $\hcA^\hbar_N[[\tau_*]]$, it is easy to check that $f^{\tau_*}$ satisfies equation \eqref{eq:quantum Hamiltonian system}. We do insist that $f^{\tau_*}(u^*_*;\eps,\hbar) \neq f((u^*_*)^{\tau_*},\eps,\hbar)$.\\

%%%%%%%%%%%%%%%%%%%%%%%%%%%%%%%%%%%%%%%%%%%%%%

\subsection{Quantum double ramification hierarchy}\label{section:quantumDRH}
Given a cohomological field theory $c_{g,n}\colon V^{\otimes n} \to H^{\even}(\oM_{g,n};\mbC)$, we define the Hamiltonian densities of the \emph{quantum double ramification hierarchy} (qDR) as the following generating series:
\begin{equation}\label{density}
\begin{split}
G_{\alpha,d}:=&\sum_{\substack{g\ge 0,n\ge 0\\2g-1+n>0}}\frac{(i \hbar)^g}{n!}\times\\
&\times\sum_{\substack{a_1,\ldots,a_n\in\mbZ\\ \alpha_1,\ldots,\alpha_n}}\left(\int_{\DR_g\left(-\sum a_i,a_1,\ldots,a_n\right)}\Lambda\left(\frac{-\eps^2}{i \hbar}\right) \psi_1^d c_{g,n+1}\left(e_\alpha\otimes\otimes_{i=1}^n e_{\alpha_i}\right)\right)p^{\alpha_1}_{a_1}\ldots p^{\alpha_n}_{a_n}e^{ix\sum a_i},
\end{split}
\end{equation}
for $\alpha=1,\ldots,N$ and $d=0,1,2,\ldots$. Here $\Lambda\left(\frac{-\eps^2}{i \hbar}\right):=\left(1+ \left( \frac{-\eps^2}{i \hbar}\right) \lambda_1+\ldots + \left(\frac{-\epsilon^2}{i\hbar}\right)^g \lambda_g \right)$, with~$\lambda_i$ the $i$-th Chern class of the Hodge bundle. Notice also that, since $\Lambda(s)$ is itself a cohomological field theory depending on the formal parameter $s$, we could absorb such factor into $c_{g,n+1}\left(e_\alpha\otimes\otimes_{i=1}^n e_{\alpha_i}\right)$ obtaining densities for a CohFT analogue of the Symplectic Field Theory Hamiltonians of \cite{EGH00,FR10}.\\

As for the ``classical'' Hamiltonian densities $g_{\alpha,p}=G_{\alpha,p}|_{\hbar=0}$, we would like to rewrite the above expression in terms of formal jet variables $u^\alpha_s = \sum_{k\in\mbZ} (ik)^s p^\alpha_k e^{ikx}$, $\alpha=1,\ldots,N$, $s=0,1,2,\ldots$. Since the double ramification cycle $\DR_{g}(a_1,\ldots,a_n)$ is a non-homogeneous polynomial of degree at most~$2g$ in the variables $a_1,\ldots,a_n$ (as apparent from Pixton's formula \cite{JPPZ16}), we actually obtain that each $G_{\alpha,p}$ can be uniquely written as a quantum differential polynomial of degree $\deg G_{\alpha,p} \leq 0$ and such that $\deg \left.G_{\alpha,p}\right|_{\hbar=0} = 0$, i.e. $G_{\alpha,p} \in (\hcA^\hbar_N)^{[\leq 0]}$ and $\left.G_{\alpha,p}\right|_{\hbar=0} \in \hcA_N^{[0]}$. This means that the number of $x$-derivatives that can appear in the coefficient of $\eps^k\hbar^j$ is at most ~$k+2j$, and exactly $k$ in the coefficient of $\eps^k \hbar^0$.\\

We finally add manually $N$ extra densities $G_{\alpha,-1}:=\eta_{\alpha\mu} u^\mu$. Recall that by $\oG_{\alpha,p}= \int G_{\alpha,p} dx$ we denote the coefficient of $e^{i0x}$ in $G_{\alpha,p}$ considered also up to a constant, for all $\alpha=1,\ldots,N$, $p=-1,0,1,\ldots$.\\

The fact that the local functionals~$\oG_{\alpha,d}$ mutually commute with respect to the above commutator, $[\oG_{\alpha,p},\oG_{\beta,q}]=0$, was proved in~\cite{BR15} together with the fact that $\oG_{1,0}=\int\left(\frac{1}{2} \eta_{\mu\nu}u^\mu u^\nu\right)dx$, so that, for any $f\in \hcA_N^\hbar$, $\d_{t^1_0} f= \d_x f$.

%%%%%%%%%%%%%%%%%%%%%%%%%%%%%%%%%%%%%%

\subsection{Recursion for the qDR Hamiltonian densities}

We recall some of the properties of the DR hierarchies, in particular a recursion equation, proven in \cite{BR14} for the classical case and in \cite{BR15} for the quantum case, allowing to recover all the Hamiltonian densities $G_{\alpha,p}$, $\alpha=1,\ldots,N$, $p\geq0$, recursively from $G_{\alpha,-1}=\eta_{\alpha\mu} u^\mu$ starting from the knowledge of the functional $\oG_{1,1}$ only.\\

Let us define the following two-point potential for intersection numbers with the double ramification cycle
\begin{equation*}\label{density2}
\begin{split}
G_{\alpha,p;\beta,q}(x,y):=\sum_{\substack{g\ge 0,n\ge 0\\2g+n>0}}\frac{(i\hbar)^g}{n!} \sum_{\substack{a_0,\ldots,a_{n+1}\in\mbZ\\\sum a_i=0\\ \alpha_1,\ldots,\alpha_n}}&\left(\int_{\DR_g\left(a_0,a_1,\ldots,a_n,a_{n+1}\right)}\right. \Lambda\left(\frac{-\eps^2}{i \hbar}\right) \psi_0^p \psi_{n+1}^q \times \\
&\times c_{g,n+2}\left(e_\alpha\otimes\otimes_{i=1}^n e_{\alpha_i}\otimes e_\beta\right)\Bigg) p^{\alpha_1}_{a_1}\ldots p^{\alpha_n}_{a_n}e^{-i a_0 x-i a_{n+1} y},
\end{split}
\end{equation*}
for $\alpha,\beta=1,\ldots,N$ and $p,q=0,1,2,\ldots$.\\

In \cite{BR15} the following result was proven
\begin{lemma}[\cite{BR15}]\label{mainlemma}
For all  $\alpha,\beta=1,\ldots,N$ and $p,q=0,1,2,\ldots$, we have
\begin{equation}\label{eqmainlemma}
\partial_x G_{\alpha,p+1;\beta,q}(x,y) - \partial_y G_{\alpha,p;\beta,q+1}(x,y) =\frac{1}{\hbar} \left[ G_{\alpha,p}(x) , G_{\beta,q}(y)\right]
\end{equation}
\end{lemma}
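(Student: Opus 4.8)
The plan is to compute both sides of \eqref{eqmainlemma} as generating series of double ramification intersection numbers and match them term by term. The left-hand side is a finite sum of intersection numbers (both $G_{\alpha,p+1;\beta,q}$ and $G_{\alpha,p;\beta,q+1}$ involve the same two-point potential with one $\psi$-power shifted), while the right-hand side is a commutator that, by the formula for $[f(x),g(y)]$ recalled earlier in the excerpt, expands into a series of products of partial derivatives of $G_{\alpha,p}$ and $G_{\beta,q}$ contracted through $\eta$ and weighted by the coefficients $C_j^{\ldots}$ from \eqref{eq:relation of coefficients}. Translating everything to the $p$-variables (where the $\star$-product has the explicit normal-ordering form), both sides become formal Fourier series in $x$ and $y$ whose coefficients are polynomials in the $p^\alpha_k$, and the claim becomes an identity between intersection numbers on $\oM_{g,n+2}$.

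First I would fix the combinatorial bookkeeping: write $\partial_x G_{\alpha,p+1;\beta,q}$ and $\partial_y G_{\alpha,p;\beta,q+1}$ in $p$-coordinates, where $\partial_x$ and $\partial_y$ act by bringing down factors of $-ia_0$ and $-ia_{n+1}$ respectively. The key geometric input is a relation on the double ramification cycle that controls how $\psi$-classes at the two distinguished points $0$ and $n+1$ interact with the ramification data $a_0, a_{n+1}$. The natural tool here is the formula of~\cite{BSSZ15} for $\psi$-class pullbacks and intersections with $\DR_g$, together with the string/dilaton-type comparisons between $\oM_{g,n+2}$ and $\oM_{g,n+1}$ obtained by forgetting a marked point and splitting the double ramification cycle along its boundary strata. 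The essential identity is that $a_0\,\psi_0 - a_{n+1}\,\psi_{n+1}$, when restricted to $\DR_g$, is supported on the boundary divisors separating the points $0$ and $n+1$, and the contribution of each such divisor factorizes the intersection number into a product over the two components.

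The heart of the argument is then to show that this boundary factorization on the left matches precisely the $\star$-product contractions on the right. When the curve degenerates into two components joined at a node, the ramification profile $(a_0,\ldots,a_{n+1})$ splits as $(a_0,\ldots)$ on one side and $(\ldots,a_{n+1})$ on the other, with the node carrying an index $k$ and its opposite $-k$; the Hodge class factor $\Lambda(-\eps^2/(i\hbar))$ and the CohFT class $c_{g,n+2}$ both respect this splitting by the splitting axiom of the CohFT and the multiplicativity of the Hodge bundle on compact-type strata. Summing over the node index $k$ and over all ways of distributing the marked points between the two components reproduces exactly the $\eta^{\alpha_k\beta_k}$-contractions, the $\hbar$-powers, and the $\delta$-function derivatives appearing in the commutator formula; the coefficients $C_j^{\ldots}$ arising from \eqref{eq:decomposition} should match the combinatorial factors produced by summing $k^j$ over the node index.

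\textbf{The main obstacle} I expect is precisely this last matching: verifying that the sum over the node index $k\in\mbZ$ of the factorized intersection numbers, weighted by the powers of $k$ coming from the $\psi$-classes and the ramification weights at the node, assembles into the polylogarithm combination $\prod \Li_{-d_i}(z)$ that defines the $C_j$ coefficients. This requires carefully tracking how the polynomiality of $\DR_g$ in the $a_i$'s (Pixton's formula) interacts with the summation $\sum_k k^d z^k = \Li_{-d}(z)$, and ensuring the signs and the parity condition in \eqref{eq:relation of coefficients} come out correctly. Since the statement is attributed to~\cite{BR15}, I would expect the full proof there to hinge on exactly this polylogarithm identity, and I would isolate it as a separate combinatorial lemma before assembling the geometric pieces.
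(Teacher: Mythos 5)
The paper itself contains no proof of Lemma \ref{mainlemma}: it is quoted from \cite{BR15}, so the only proof to compare against is the one in that reference, and your plan is essentially that proof. In particular, your key geometric claim is correct and is exactly the mechanism used there: writing everything in the $p$-variables, $\d_x$ and $\d_y$ bring down factors $-ia_0$ and $-ia_{n+1}$, and applying the $\psi$-class formula of \cite{BSSZ15} (reproduced in this paper as~\eqref{eq:DR times psi}) to both $a_0\psi_0\cdot\DR_g$ and $a_{n+1}\psi_{n+1}\cdot\DR_g$, the splittings with the points $0$ and $n+1$ on the same component cancel in the difference, while for the splittings separating them the two prefactors $\rho/r$ combine to exactly $1$; the CohFT splitting axiom and the multiplicativity of $\Lambda$ under gluing then factorize the integrals, and the sums over the node multiplicities $k_1,\ldots,k_p\ge 1$ reassemble into positive-frequency delta functions, i.e.\ into the polylogarithm coefficients of \eqref{eq:decomposition}--\eqref{eq:relation of coefficients}, which is the matching you correctly isolate as the crux. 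Two imprecisions in your write-up are worth fixing. First, the relevant degenerations are splittings along $p\ge 1$ nodes, not ``two components joined at a node'': these are strata of codimension $p$, not divisors, and the terms with $p\ge 2$ are precisely what produce the $\hbar^{n\ge 2}$ part of the quantum commutator (in the classical limit the insertion of $\lambda_g$ kills all non-compact-type strata, which is why only the single-node terms, i.e.\ the Poisson bracket, survive there). Second, multiplicativity of the Hodge factor $\Lambda$ holds under pullback along every gluing map, not only ``on compact-type strata'' as you write; this is not pedantry, since the multi-node strata are not of compact type, and your matching of the higher $\hbar$ terms would collapse if multiplicativity were restricted to compact type.
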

From this lemma the following theorem can be deduced
\begin{theorem}[\cite{BR15}]\label{theorem:recursion}
For all $\alpha=1,\ldots,N$ and $p=-1,0,1,\ldots$, we have
\begin{gather}\label{eq:first recursion}
\partial_x (D-1) G_{\alpha,p+1} =\frac{1}{\hbar} \left[ G_{\alpha,p} , \oG_{1,1} \right],
\end{gather}
\begin{equation}\label{eq:second recursion}
\partial_x \frac{\partial G_{\alpha,p+1}}{\partial u^\beta} =\frac{1}{\hbar} \left[G_{\alpha,p}, \oG_{\beta,0} \right],
\end{equation}
where $D:=\eps\frac{\partial}{\partial\eps} + 2\hbar\frac{\partial}{\partial \hbar} + \sum_{s\ge 0} u^\alpha_s\frac{\partial}{\partial u^\alpha_s}$.
\end{theorem}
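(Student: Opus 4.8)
The plan is to deduce both recursions from Lemma~\ref{mainlemma} by specializing the second pair of indices $(\beta,q)$ and extracting the constant term in the variable $y$ (equivalently, applying $\int \cdot\, dy$). First I would observe the general mechanism: for any $(\beta,q)$, taking the constant term in $y$ of~\eqref{eqmainlemma} kills the term $\partial_y G_{\alpha,p;\beta,q+1}(x,y)$, because $\partial_y$ produces a factor $-ia_{n+1}$ in front of $e^{-ia_{n+1}y}$, which vanishes on the Fourier mode $a_{n+1}=0$ selected by $\int \cdot\, dy$. On the right-hand side, the constant term in $y$ turns $G_{\beta,q}(y)$ into $\oG_{\beta,q}=\int G_{\beta,q}\,dy$ and reproduces exactly the commutator $[G_{\alpha,p},\oG_{\beta,q}]$ as defined by the formula for $[f,\og]$ recalled in Section~\ref{section:quantum hamiltonian systems}. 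Thus, for every choice of $(\beta,q)$ one obtains
\begin{equation*}
\partial_x\int G_{\alpha,p+1;\beta,q}(x,y)\,dy=\frac{1}{\hbar}\left[G_{\alpha,p},\oG_{\beta,q}\right],
\end{equation*}
and it remains to identify the left-hand side with a one-point expression for the two special choices $(\beta,q)=(1,1)$ and $(\beta,q)=(\beta,0)$.

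For the first recursion I take $(\beta,q)=(1,1)$, so that the added marked point carries the unit $e_1$, multiplicity $a_{n+1}=0$, and a single $\psi$-class. I would then forget this point via $\pi\colon\oM_{g,n+2}\to\oM_{g,n+1}$ and apply the dilaton equation. The Hodge classes satisfy $\pi^*\lambda_j=\lambda_j$, the double ramification cycle satisfies $\DR_g(a_0,\dots,a_n,0)=\pi^*\DR_g(a_0,\dots,a_n)$, and, since $\psi_{n+1}$ vanishes on the divisor where the forgotten point collides with the point carrying $\psi_0$, one has $\psi_{n+1}\,\psi_0^{p+1}=\psi_{n+1}\,(\pi^*\psi_0)^{p+1}$; the projection formula together with $\pi_*\psi_{n+1}=2g-1+n$ then yields $\int G_{\alpha,p+1;1,1}(x,y)\,dy$ as the one-point density $G_{\alpha,p+1}$ with each genus-$g$, $n$-point monomial weighted by $2g-1+n$. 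Finally I would check that this is precisely the action of $D-1$: on a monomial coming from genus $g$, from the Hodge term $\lambda_j$ (hence carrying $\eps^{2j}\hbar^{g-j}$), and of degree $n$ in the jet variables, the operator $\eps\partial_\eps+2\hbar\partial_\hbar$ acts by $2j+2(g-j)=2g$ while $\sum_s u^\alpha_s\partial_{u^\alpha_s}$ acts by $n$, so $D$ acts by $2g+n$ and $D-1$ by $2g-1+n$. This gives $\int G_{\alpha,p+1;1,1}\,dy=(D-1)G_{\alpha,p+1}$ and hence~\eqref{eq:first recursion}.

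For the second recursion I take $(\beta,q)=(\beta,0)$, so the added point carries a general $e_\beta$, multiplicity $0$, and no $\psi$-class; here no forgetful map is used. Instead I would match intersection numbers directly through the $p\leftrightarrow u$ dictionary recalled in Section~\ref{section:DR hierarchy}, under which a monomial $a_1^{b_1}\cdots a_n^{b_n}$ of the intersection polynomial becomes $u^{\alpha_1}_{b_1}\cdots u^{\alpha_n}_{b_n}$. Under this dictionary the operator $\partial/\partial u^\beta=\partial/\partial u^\beta_0$ selects, for a $\beta$-labelled factor, the part of the polynomial that is of degree $0$ in the corresponding variable and removes that factor; on the geometric side this is exactly the evaluation of $\DR_g(a_0,\dots,a_n,a_{n+1})$ at $a_{n+1}=0$ with $e_\beta$ and $\psi_{n+1}^0$ inserted at the extra point, i.e.\ the constant term in $y$ of $G_{\alpha,p+1;\beta,0}$. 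A short bookkeeping check on the symmetry factors ($m\cdot\frac1{m!}=\frac1{(m-1)!}=\frac1{n!}$ with $n=m-1$) confirms $\int G_{\alpha,p+1;\beta,0}\,dy=\partial G_{\alpha,p+1}/\partial u^\beta$, giving~\eqref{eq:second recursion}. The boundary value $p=-1$, where $G_{\alpha,-1}=\eta_{\alpha\mu}u^\mu$ is appended by hand and lies outside the range of Lemma~\ref{mainlemma}, I would verify separately by a direct computation of $(D-1)G_{\alpha,0}$, of $\partial G_{\alpha,0}/\partial u^\beta$, and of the commutators $\frac1\hbar[\eta_{\alpha\mu}u^\mu,\oG_{1,1}]$ and $\frac1\hbar[\eta_{\alpha\mu}u^\mu,\oG_{\beta,0}]$ using the explicit commutator formula.

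The main obstacle is the two identifications of $\int G_{\alpha,p+1;\beta,q}\,dy$ with one-point expressions. The cleaner of the two, $(\beta,q)=(1,1)$, rests on the dilaton equation and hence on the behaviour of $\DR_g$, of the Hodge classes and of the $\psi$-classes under the forgetful map — in particular the pullback property of the double ramification cycle along a point of weight $0$, and the vanishing $\psi_{n+1}\,\psi_0^{p+1}=\psi_{n+1}(\pi^*\psi_0)^{p+1}$, must be used carefully. The case $(\beta,q)=(\beta,0)$ is less geometric but requires keeping precise track of the change of variables~\eqref{eq:u-p change} and of the combinatorial symmetry factors, which is where normalization and sign errors are most likely to creep in.
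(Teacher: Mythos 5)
Your proposal is correct and follows essentially the same route as the paper: Theorem \ref{theorem:recursion} is presented there as a deduction from Lemma \ref{mainlemma} (carried out in \cite{BR15}), obtained exactly as you describe by integrating \eqref{eqmainlemma} in $y$ with $(\beta,q)=(1,1)$ and $(\beta,q)=(\beta,0)$, then identifying $\int G_{\alpha,p+1;1,1}\,dy=(D-1)G_{\alpha,p+1}$ via the dilaton equation and $\int G_{\alpha,p+1;\beta,0}\,dy=\partial G_{\alpha,p+1}/\partial u^\beta$ via the weight-zero-insertion dictionary. Your separate treatment of $p=-1$ is also sound, since the commutator of the linear density $\eta_{\alpha\mu}u^\mu$ with any local functional reduces to $\hbar\,\partial_x$ of a variational derivative, after which the same dilaton/dictionary identities close the argument.
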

Notice how equation \eqref{eq:first recursion} can be used to recover recursively (up to a constant) $G_{\alpha,p}$, $\alpha=1,\ldots,N$, $p\geq0$ from $G_{\alpha,-1} = \eta_{\alpha,\mu} u^\mu$ and of course the knowledge of $\oG_{1,1}$. From equation \eqref{eq:second recursion} we can instead deduce the string equation (always up to a constant, actually)
\begin{equation}\label{eq:string equation}
\frac{\d G_{\alpha,p+1}}{\d u^1} = G_{\alpha,p}.
\end{equation}
Since we can prove such string equation separately from geometric considerations \cite{BR15}, the constant terms of the densities $G_{\alpha,p}$, that are left undetermined by the recursion \eqref{eq:first recursion}, can then be chosen uniquely as those that verify equation \eqref{eq:string equation}.

%%%%%%%%%%%%%%%%%%%%%%%%%%%%%%%%%%%%%%%
%%%%%%%%%%%%%%%%%%%%%%%%%%%%%%%%%%%%%%%

\section{Quantum double ramification hierarchy in genus $1$}

In \cite{BDGR16} we computed the genus $1$ term of the classical double ramification hierarchy for any cohomological field theory in terms of genus $0$ data. In this section we compute the quantum correction, always in genus $1$ and in terms of genus $0$ data plus the genus $1$ $\mathsf{G}$-function of the CohFT. As an application we compute the full quantum double ramification hierarchies for Witten's $3$- and $4$-spin classes.

\subsection{Genus-$1$ quantum correction}

Let $c_{g,n}:V^{\otimes n} \to H^*(\oM_{g,n},\mbC)$ be a cohomological field theory with $V$ an $N$-dimensional vector space endowed with a non-degenerate metric $\eta$ and basis $e_1,\ldots,e_N$, where $e_1$ is the unit of the CohFT. Let $\oG_{\alpha,d}$, $1\leq \alpha \leq N$, $d\geq -1$ be the corresponding quantum DR Hamiltonians and let $\oG =(D-2)^{-1} \oG_{1,1}$, with $D$ as in Theorem \ref{theorem:recursion}. Let $\og_{\alpha,d}$ and $\og$ be their classical counterparts and $g_{\alpha,d}^{[0]}=g_{\alpha,d}^{[0]}(u^1,\ldots,u^N)$ the genus $0$ Hamiltonian densities.

\begin{theorem}\label{theorem:genus1}
Let $F=F(u^1,\ldots,u^N)$ be the Frobenius potential (genus $0$ potential with no descendants) and $\mathsf{G}=\mathsf{G}(u^1,\ldots,u^N)$ the $\mathsf{G}$-function (genus $1$ potential with no descendants) of the CohFT. Let  $c_{\alpha\beta} = \frac{\d^2 F}{\d u^\alpha \d u^\beta}$, $c_{\alpha\beta\gamma} = \frac{\d ^3 F}{\d u^\alpha \d u^\beta \d u^\gamma}$,  $c_{\alpha\beta\gamma\delta} = \frac{\d ^4 F}{\d u^\alpha \d u^\beta \d u^\gamma \d u^\delta}$ and indices be raised and lowered by the metric $\eta$. Then we have
\begin{equation}\label{eq:genus1G}
\oG = \og + i \hbar \int \left[\left(\frac{1}{48} c_{\alpha \beta \mu}^\mu + \frac{1}{2} c_{\alpha \beta}^\mu  \frac{\d \mathsf{G}}{\d u^\mu}\right) u^\alpha_x u^\beta_x - \frac{1}{24}c^\mu_\mu \right] dx+O(\hbar^2)+O(\hbar \eps^2),
\end{equation}
\begin{equation}\label{eq:genus1Gad}
\begin{split}
\oG_{\alpha,d} = \og_{\alpha,d} + i \hbar \int &\left[\left( \frac{1}{48} \frac{\d^4 g_{\alpha,d}^{[0]}}{\d u^\gamma \d u^\beta \d u^\mu \d u^\nu } \eta^{\mu \nu} + \frac{1}{2} \frac{\d^3 g_{\alpha,d}^{[0]}}{\d u^\gamma \d u^\beta \d u^\mu} \eta^{\mu \nu} \frac{\d \mathsf{G}}{\d u^\nu}  \right. \right.\\
&\hspace{0.5cm}\left.\left. + \frac{1}{2} c_{\gamma \beta}^\mu  \frac{\d}{\d u^\mu} \left( \frac{1}{24}\frac{\d^2 g_{\alpha,d-1}^{[0]}}{\d u^\epsilon \d u^\delta} \eta^{\epsilon \delta}+ \frac{\d g_{\alpha,d-1}^{[0]}}{\d u^\epsilon} \eta^{\epsilon \delta} \frac{\d \mathsf{G}}{\d u^\delta}\right)\right) u^\gamma_x u^\beta_x\right. \\
& \hspace{0.5cm}\left.  -\frac{1}{24} \frac{\d^2 g_{\alpha,d}^{[0]}}{\d u^\mu \d u^\nu} \eta^{\mu \nu}\right] dx + O(\hbar^2)+O(\hbar\eps^2).
\end{split}
\end{equation}
\end{theorem}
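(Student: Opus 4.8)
The plan is to single out the coefficient of $\hbar^1$ at $\eps^0$ in the qDR densities and to produce it in two stages: a geometric evaluation of the single ``seed'' functional $\oG_{1,1}$ (equivalently $\oG$), and then a purely algebraic propagation to all $\oG_{\alpha,d}$ through the recursion of Theorem~\ref{theorem:recursion}. First I would observe, from the expansion $\Lambda\!\left(\tfrac{-\eps^2}{i\hbar}\right)=\sum_{k}\left(\tfrac{-\eps^2}{i\hbar}\right)^{k}\lambda_k$ in \eqref{density}, that the monomial $\eps^0\hbar^1$ is produced \emph{only} by the genus $g=1$, $k=0$ summand, i.e.\ by the genus-$1$ double ramification correlators $\int_{\DR_1}\psi_1^d\,c_{1,n+1}$ carrying no Hodge class. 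This is the crucial contrast with the classical genus-$1$ correction of \cite{BDGR16}, which is governed by the $\lambda_1$-weighted correlators and therefore collapses onto genus-$0$ data alone. The derivative bound recalled after \eqref{density} (at most $k+2j$ $x$-derivatives in the coefficient of $\eps^k\hbar^j$) forces the $\eps^0\hbar^1$ term to carry at most two derivatives, so after integration by parts each such functional has the shape $\int\!\big(A_{\alpha\beta}(u)\,u^\alpha_x u^\beta_x+B(u)\big)\,dx$; the whole task is to pin down $A_{\alpha\beta}$ and $B$.

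For the seed I would compute the $\eps^0\hbar^1$ part of $\oG_{1,1}$ directly from the geometry. Writing the genus-$1$ double ramification cycle explicitly --- by Hain's formula \cite{Hai11} on the compact-type locus together with its non-compact-type correction from Pixton's formula \cite{JPPZ16} --- the correlator $\int_{\DR_1}\psi_1^d\,c_{1,n+1}$ decomposes into boundary contributions that I would evaluate with the splitting axiom of the CohFT. The non-separating (loop) contributions glue two marked points of a genus-$0$ vertex, contracting a pair of indices with $\eta$, and produce exactly the terms with contracted indices, $\tfrac{1}{48}c_{\alpha\beta\mu}^\mu$ and $\tfrac{1}{24}c^\mu_\mu$. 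The separating degenerations, in which a genus-$0$ three-point vertex $c_{\alpha\beta}^\mu$ is glued at a node to a genus-$1$ bubble, produce a genus-$1$ one-point insertion at $\mu$; summing this insertion over the number of marked points on the bubble reconstructs, by the very definition of $\mathsf{G}$ as the genus-$1$ no-descendant potential, the factor $\tfrac{\d\mathsf{G}}{\d u^\mu}$, giving the term $\tfrac{1}{2} c_{\alpha\beta}^\mu\tfrac{\d\mathsf{G}}{\d u^\mu}$. Applying $(D-2)^{-1}$ then yields \eqref{eq:genus1G}.

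To reach \eqref{eq:genus1Gad} I would expand the recursion \eqref{eq:first recursion}, $\partial_x(D-1)G_{\alpha,p+1}=\tfrac{1}{\hbar}[G_{\alpha,p},\oG_{1,1}]$, to order $\eps^0\hbar^1$ and solve inductively in $p$, starting from $G_{\alpha,-1}=\eta_{\alpha\mu}u^\mu$ and the seed. At this order the operator $\tfrac{1}{\hbar}[\cdot,\cdot]$ contributes in exactly three ways: the $n=2$ term of the commutator applied to the genus-$0$ densities $g^{[0]}_{\alpha,p}$ and $g^{[0]}_{1,1}$, and the $n=1$ (Poisson) term applied either to the already-known $\eps^0\hbar^1$ correction of $G_{\alpha,p}$ or to the seed correction of $\oG_{1,1}$; since no $\eps^2$ classical datum can enter an $\eps^0$ coefficient, the recursion closes within the genus-$0$ densities and their first quantum corrections. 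Inverting $\partial_x(D-1)$ on densities and fixing the residual constants of integration by the string equation \eqref{eq:string equation} then produces the closed expression \eqref{eq:genus1Gad}, whose simultaneous dependence on $g^{[0]}_{\alpha,d}$ and $g^{[0]}_{\alpha,d-1}$ is precisely the fingerprint of one recursion step; as a final consistency check I would recover the classical densities of \cite{BDGR16} in the limit $\hbar\to0$.

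The hard part will be the seed evaluation of the second paragraph: one must work with the \emph{full} genus-$1$ double ramification cycle, including the non-compact-type part that is invisible in the $\lambda_1$-weighted classical computation, and keep track of all symmetry and stratification factors so that the loop and the genus-$0$/genus-$1$ splitting contributions assemble with the precise rational coefficients $\tfrac{1}{48}$, $\tfrac{1}{24}$ and $\tfrac{1}{2}$. A secondary, more routine difficulty is the careful $\eps$/$\hbar$ bookkeeping of the commutator expansion in the propagation step, where one must verify that only the three contributions listed above survive at order $\eps^0\hbar^1$.
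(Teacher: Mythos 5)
Your treatment of \eqref{eq:genus1G} is essentially the paper's own proof: the paper also writes the genus-$1$ DR cycle via Hain's formula as $\sum_i\frac{\psi_i}{2}a_i^2-\frac{1}{2}\sum_J\left(\sum_{j\in J}a_j\right)^2\delta_0^J-\lambda_1$, expands $\psi_i$ and $\lambda_1$ in boundary divisors ($\frac{1}{24}\delta_{\mathrm{irr}}$ plus separating terms) and applies the CohFT splitting axioms, so that the $\delta_{\mathrm{irr}}$ contributions give $\frac{1}{48}c_{\alpha\beta\mu}^\mu$ and $-\frac{1}{24}c^\mu_\mu$ while the separating strata assemble into $\frac{1}{2}c_{\alpha\beta}^\mu\frac{\d\mathsf{G}}{\d u^\mu}$, exactly as you describe.

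Where you genuinely diverge is \eqref{eq:genus1Gad}. The paper never invokes the recursion of Theorem \ref{theorem:recursion} in this proof: it repeats the same geometric boundary computation directly for each $\oG_{\alpha,d}$, the only new feature being the $\psi_1^d$ insertion at the extra marked point; the genus-$1$ components of the boundary strata then carry descendant insertions, and these are reduced to genus-$0$ data plus $\mathsf{G}$ by the genus-$1$ topological recursion relation, whose restriction to $t^\alpha_0=u^\alpha$, $t^*_p=0$ for $p>0$ is literally the inner bracket of \eqref{eq:genus1Gad}. Your alternative --- propagate the seed through $\partial_x(D-1)G_{\alpha,p+1}=\frac{1}{\hbar}\left[G_{\alpha,p},\oG_{1,1}\right]$ at order $\eps^0\hbar^1$, fixing constants by \eqref{eq:string equation} --- is sound: your bookkeeping is correct (the $n$-th term of the commutator carries $\hbar^{n-1}$ after dividing by $\hbar$, and $\eps$-degrees are additive, so only your three contributions survive), and the recursion determines the densities uniquely. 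What your route buys is independence from the genus-$1$ TRR and from redoing geometry for each $d$: everything past the seed is algebra. What it costs is the inductive step itself: verifying that the closed form \eqref{eq:genus1Gad} solves the order-$\eps^0\hbar^1$ recursion is a genuine computation requiring genus-$0$ identities (string, WDVV, genus-$0$ TRR for the $g^{[0]}_{\alpha,d}$), and the genus-$1$ TRR structure visible in \eqref{eq:genus1Gad} must re-emerge from that algebra rather than being imported as a known geometric fact. You assert rather than perform this verification, so it is the one substantive piece of work left open; but as a strategy it is valid, and arguably more self-contained on the quantum side than the paper's argument.
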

\begin{proof}
Let us prove equation (\ref{eq:genus1G}). Recall that
\begin{equation*}
\oG:=\sum_{\substack{g\ge 0,n\ge 1\\2g-2+n>0}}\frac{(i \hbar)^g}{n!}\sum_{\substack{a_1,\ldots,a_n\in\mbZ\\ \alpha_1,\ldots,\alpha_n}}\left(\int_{\DR_g\left(a_1,\ldots,a_n\right)}\Lambda\left(\frac{-\eps^2}{i \hbar}\right) c_{g,n}\left(\otimes_{i=1}^n e_{\alpha_i}\right)\right)p^{\alpha_1}_{a_1}\ldots p^{\alpha_n}_{a_n},
\end{equation*}
so the relevant intersection numbers for the genus $1$ quantum corrections are
$$\int_{\DR_1(a_1,\ldots,a_n)} c_{1,n}\left(\otimes_{i=1}^n e_{\alpha_i}\right).$$
To compute them we use the following formulae for the DR cycle (see \cite{Hai11}), psi and lambda classes on $\oM_{1,n}$,
$$\DR_1(a_1,\ldots,a_n) = \sum_{i=1}^n \frac{\psi_i}{2} a_i^2 - \frac{1}{2} \left(\sum_{\substack{J\subset \{1,\ldots,n\}\\ |J|\geq 2}} \left(\sum_{j \in J}a_j\right)^2 \delta^J_0\right) -\lambda_1,$$
$$\psi_i = \frac{1}{24 }\delta_{\text{irr}}+\sum_{\substack{J\subset \{1,\ldots,n\}\\ |J|\geq 2, i\in J}} \delta^J_0, \qquad \lambda_1 = \frac{1}{24} \delta_{\text{irr}},$$
where $\delta_{\text{irr}}$ and $\delta^J_0$ denote the divisor in $\oM_{1,n}$ of singular curves with a non-separating node and of curves with a separating node whose rational component carries exactly the marked points labeled by $J$ (the points labeled by the complement $J^c$ belonging to the elliptic component), respectively. In particular we get
\begin{equation*}
\begin{split}
&\int_{\DR_1(a_1,\ldots,a_n)} c_{1,n}\left(\otimes_{k=1}^n e_{\alpha_k}\right)=\sum_{i=1}^n \frac{a_i^2}{48} \int_{\oM_{0,n+2}} c_{0,n+2}\left(\otimes_{k=1}^n e_{\alpha_k}\otimes e_\mu \otimes e_\nu\right)\eta^{\mu\nu}\\
& + \frac{1}{2} \sum_{\substack{J\subset \{1,\ldots,n\}\\ |J|\geq 2, i\in J}} a_i^2 \int_{\oM_{0,|J|+1}} c_{0,|J|+1}\left(\otimes_{k\in J} e_{\alpha_{k}}\otimes e_\mu\right) \eta^{\mu \nu} \int_{\oM_{1,n-|J|+1}}c_{1,n-|J|+1}(\otimes_{k\in J^c} e_{\alpha_k} \otimes e_\nu) \\
&-\frac{1}{2} \sum_{\substack{J\subset \{1,\ldots,n\}\\ |J|\geq 2}} (\sum_{j\in J}
a_j)^2 \int_{\oM_{0,|J|+1}} c_{0,|J|+1}\left(\otimes_{k\in J} e_{\alpha_{k}}\otimes e_\mu\right) \eta^{\mu \nu} \int_{\oM_{1,n-|J|+1}}c_{1,n-|J|+1}(\otimes_{k\in J^c} e_{\alpha_k} \otimes e_\nu)\\
& -\frac{1}{24}\int_{\oM_{0,n+2}} c_{0,n+2}\left(\otimes_{k=1}^n e_{\alpha_k}\otimes e_\mu \otimes e_\nu\right)\eta^{\mu\nu}.
\end{split}
\end{equation*}
In terms of generating functions, this becomes
\begin{equation*}
\begin{split}
\oG = \og + i\hbar \int \left[-\frac{1}{48} u^\alpha_{xx} c_{\alpha\mu}^\mu -\frac{1}{2}u^\alpha_{xx} c_\alpha^\mu \frac{\d \mathsf{G}}{\d u^\mu} + \frac{1}{2} \left(\d_x^2 \frac{\d F}{\d u^\mu}\right) \eta^{\mu \nu} \frac{\d \mathsf{G}}{\d u^\nu} - \frac{1}{24} c_\mu^\mu \right] dx + O(\hbar^2)
\end{split}
\end{equation*}
which can be brought to the form of equation \eqref{eq:genus1G} by integrating by parts.\\

The proof of equation \eqref{eq:genus1Gad} is completely analogous, the only difference being the insertion of a psi class to the power $d$ at an extra marked point, which makes it necessary to use genus $1$ topological recursion relations (see \cite{Wit91})
$$\frac{\d F_1(t^*_*)}{\d t^\alpha_d} = \frac{1}{24} \frac{\d^3 F_0(t^*_*)}{\d t^\alpha_{d-1} \d t^\epsilon_0 \d t^\delta_0} \eta^{\epsilon \delta}+ \frac{\d^2 F_0(t^*_*)}{\d t^\alpha_{d-1} \d t^\epsilon_0} \eta^{\epsilon \delta} \frac{\d F_1(t^*_*)}{\d t^\delta_0}$$
where $F_g(t^*_*)$ is the genus $g$ potential of the CohFT and whose right hand side, when restricted to $t^\alpha_0=u^\alpha$ and $t^*_p = 0$ for $p>0$, becomes the term
$$ \frac{1}{24}\frac{\d^2 g_{\alpha,d-1}^{[0]}}{\d u^\epsilon \d u^\delta} \eta^{\epsilon \delta}+ \frac{\d g_{\alpha,d-1}^{[0]}}{\d u^\epsilon} \eta^{\epsilon \delta} \frac{\d \mathsf{G}}{\d u^\delta}$$
in equation (\ref{eq:genus1Gad}).
\end{proof}

%%%%%%%%%%%%%%%%%%%%%%%%%%%%%%%%%%%%%%%%%%%%%%

\subsection{$3$- and $4$-spin quantum double ramification hierarchies}

As an application of the genus-$1$ computation of the previous section we compute the quantum DR hierarchy of Witten's $r$-spin class, for $r=3,4$. In light of the results of \cite{BG15,BDGR16}, which establish that the DR hierarchy in these cases coincides with the DZ hierarchy once we pass to the normal coordinates $\widetilde{u}^\alpha = \eta^{\alpha\mu}\frac{\delta \og_{\mu,0}}{\delta u^1}$ (which, for $r=4$ also changes the form of the Hamiltonian operator, see \cite{BDGR16}), and the fact that the DZ hierarchies for the $3$- and $4$-spin theories correspond in turn to the $3$- and $4$-KdV Gelfand-Dickey hierarchies \cite{DZ05,Dic03}, we obtain this way a quantization for such two well-known integrable systems.\\

Recall from \cite{Wi93,PV00} that, fixing $r\geq 2$ and an $(r-1)$-dimensional vector space $V$ with a basis $e_1,\ldots,e_{r-1}$, Witten's $r$-spin cohomological field theory $W_g(e_{a_1+1},\ldots,e_{a_n+1})=W_g(a_1,\ldots,a_n) \in H^*(\oM_{g,n};\mbQ)$ is a class of degree $\deg W_g(a_1,\ldots,a_n)=\frac{(r-2)(g-1)+\sum_{i=1}^n a_i}{r}$ if $a_i\in\{0,\ldots,r-2\}$ are such that this degree is a non-negative integer, and vanishes otherwise. By \cite{PPZ15}, this cohomological field theory is completely determined, thanks to generic semisimplicity, by the initial conditions $W_0(a_1,a_2,a_3)=1$ if $a_1+a_2+a_3=r-2$ (and zero otherwise) and $W_0(1,1,r-2,r-2)=\frac{1}{r}[\text{pt}]$ for $r\geq 3$ (while it vanishes for $r=2$). In particular, the metric $\eta$ takes the form $\eta_{\alpha\beta}=\delta_{\alpha+\beta,r}$.\\

\begin{theorem}
For $r=3,4$, the quantum double ramification hierarchies for Witten's $r$-spin classes are uniquely determined by
\begin{align*}
\oG_{1,1}^{3\text{-spin}} =&\int\left[\left(\frac{1}{2} \left(u^1\right)^2 u^2+\frac{\left(u^2\right)^4}{36}\right)+\left(-\frac{1}{12} \left(u_1^1\right){}^2-\frac{1}{24} u^2 \left(u_1^2\right){}^2\right) \eps ^2+\frac{1}{432} \left(u_2^2\right){}^2
   \eps ^4 - \frac{i\hbar}{12} u^1\right] dx,\\
\oG_{1,1}^{4\text{-spin}} =&\int \left[\left(\frac{ u^1 \left(u^2\right)^2}{2}+\frac{\left(u^1\right)^2 u^3}{2} +\frac{\left(u^2\right)^2 \left(u^3\right)^2}{8} +\frac{\left(u^3\right)^5}{320}\right)\right. \\
&\left.+\left(-\frac{\left(u_1^1\right){}^2}{8} -\frac{u^3
   \left(u_1^2\right){}^2}{16} -\frac{u^3 u_1^1 u_1^3}{32} +\frac{3}{64} \left(u^2\right)^2 u_2^3+\frac{1}{192} \left(u^3\right)^3 u_2^3\right) \eps ^2 \right. \\
   & \left. +\left(\frac{1}{160} \left(u_2^2\right){}^2+\frac{3}{640} u_2^1
   u_2^3+\frac{5 \left(u^3\right)^2 u_4^3}{4096}\right) \eps ^4-\frac{\left(u_3^3\right){}^2 \eps ^6}{8192}+ \right. \\
   &\left. +\left(  \frac{1}{96} \left(u^3_1\right)^2 - \frac{1}{96} \left(u^3\right)^2 -\frac{1}{8} u^1 \right)i\hbar - \frac{1}{1280} u^3 i\hbar \eps^2 \right]dx,
\end{align*}
\end{theorem}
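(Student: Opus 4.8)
The plan is to compute the single local functional $\oG_{1,1}$ explicitly for $r=3,4$, since by Theorem \ref{theorem:recursion} this one datum reconstructs the entire quantum hierarchy, which is exactly what the statement asserts. I would write $\oG_{1,1}=\og_{1,1}+(\text{quantum correction})$ and handle the classical density $\og_{1,1}=\left.\oG_{1,1}\right|_{\hbar=0}$ and the $\hbar$-correction by completely different means, after first showing that the correction is purely of order $\hbar^1$.

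For $\og_{1,1}$ I would avoid evaluating the DR integrals directly, as for $r=4$ they reach genus $3$. Instead I would invoke \cite{BG15,BDGR16}, where the classical DR hierarchy of the $3$- and $4$-spin theories is identified (after the Miura transformation to the normal coordinates $\widetilde u^\alpha=\eta^{\alpha\mu}\delta\og_{\mu,0}/\delta u^1$) with the Dubrovin--Zhang hierarchy, hence with the $3$- and $4$-KdV Gelfand--Dickey hierarchies \cite{DZ05,Dic03}. Transporting the relevant Hamiltonian density back to the DR variables yields the $\eps^0,\eps^2,\eps^4$ (and, for $r=4$, $\eps^6$) terms.

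The genuinely new input is the $\hbar$-correction, and here I would begin with a dimension count pinning down which DR intersection numbers survive. In $\oG_{1,1}$ the summand $\lambda_j\psi_1\,c_{g,n+1}$ of $\Lambda\psi_1\,c_{g,n+1}$ (the $\lambda_j$-term carrying $\hbar^{g-j}\eps^{2j}$) must have complex degree $\dim_\mbC\oM_{g,n+1}-g=2g-2+n$. Since the $\psi_1$-marked point carries the unit $e_1$, the $r$-spin degree formula gives $\deg c_{g,n+1}=\tfrac{(r-2)(g-1)+S}{r}$ with $S=\sum_i(\alpha_i-1)\le(r-2)n$, so that $j+1+\tfrac{(r-2)(g-1)+S}{r}=2g-2+n$. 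Bounding $S$ forces the $\hbar$-power to obey $g-j\le-\tfrac{2(g+n)}{r}+2+\tfrac2r$, and demanding $g-j\ge1$ leaves only $g+n\le\tfrac{r+2}{2}$. Thus for $r=3$ the lone quantum contribution is $(g,n)=(1,1)$, while for $r=4$ exactly $(g,n)\in\{(1,1),(1,2),(2,1)\}$ survive; in particular the correction is exactly of order $\hbar^1$. The genus-$1$ cases $(1,1)$ and $(1,2)$ are precisely those handled by Theorem \ref{theorem:genus1}: I would specialize \eqref{eq:genus1Gad} to $\alpha=1$, $d=1$, feeding in the $r$-spin Frobenius potential $F$ and the $\mathsf G$-function (both determined by the $r$-spin data, whose CohFT is reconstructed in \cite{PPZ15}). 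This should produce $-\tfrac1{12}u^1$ for $r=3$ and $\tfrac1{96}(u^3_1)^2-\tfrac1{96}(u^3)^2-\tfrac18u^1$ for $r=4$, all multiplied by $i\hbar$.

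The only contribution outside the reach of Theorem \ref{theorem:genus1} is $(g,n)=(2,1)$ for $r=4$, which sits in the $O(\hbar\eps^2)$ error term there and yields the final piece $-\tfrac1{1280}u^3\,i\hbar\eps^2$. This is where the real work lies: from \eqref{density} I would isolate the genus-$2$, one-point, $\lambda_1$-summand and evaluate $\int_{\DR_2(-a,a)}\lambda_1\psi_1\,c_{2,2}(e_1\otimes e_3)$ as a polynomial in $a$, retaining its $a$-independent part (the part that converts to $u^3$ with no $x$-derivatives). Concretely this means combining the explicit genus-$2$ double ramification cycle of \cite{JPPZ16} (or \cite{BSSZ15}) with the $\lambda_1$ and $\psi_1$ classes and the $4$-spin values fixed by \cite{PPZ15}, simplifying with the unit insertion $e_1$. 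I expect this genus-$2$ intersection computation to be the main obstacle, since the DR cycle is far more intricate than the genus-$1$ Hain formula used inside Theorem \ref{theorem:genus1} and one must carefully extract the single surviving monomial; as a consistency check, this lone coefficient could alternatively be fixed by imposing the recursion \eqref{eq:first recursion} or the commutativity $[\oG_{1,1},\oG_{\beta,0}]=0$. Once it is determined, the dimension bound guarantees no further terms, completing the evaluation of $\oG_{1,1}$.
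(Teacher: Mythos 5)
Your proposal matches the paper's architecture in its main lines: the classical part is quoted from the literature, a dimension count shows the quantum correction is exactly of order $\hbar$ and localizes it to $(g,n)=(1,1)$ for $r=3$ and $(g,n)\in\{(1,1),(1,2),(2,1)\}$ for $r=4$ (the paper phrases this as homogeneity of $\oG_{1,1}^{r\text{-spin}}$ of degree $2r+2$ under $|u^{a+1}_k|=r-a$, $|\eps|=1$, $|\hbar|=r+2$, which is equivalent to your bound), and the genus-$1$ pieces come from Theorem \ref{theorem:genus1} together with the vanishing of the $r$-spin $\mathsf{G}$-function. Two differences are worth recording. For the classical part the paper simply copies $\og_{1,1}$ from \cite{BR14}; your detour through the DZ/Gelfand--Dickey identification of \cite{BG15,BDGR16} is legitimate but roundabout, since undoing the normal-coordinate transformation $\tu^\alpha=\eta^{\alpha\mu}\delta\og_{\mu,0}/\delta u^1$ requires explicit classical DR data anyway. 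The substantive divergence is the genus-$2$ coefficient for $r=4$: you plan to expand $\int_{\DR_2(-a,a)}\lambda_1\psi_1 c_{2,2}(e_1\otimes e_3)$ via Pixton's formula and keep the $a$-independent part. This would work, but it is far heavier than necessary and misses the simplification the paper exploits: since only the zero Fourier mode survives in the local functional, the sought coefficient is the evaluation at $a=0$, where the double ramification cycle degenerates to $\DR_2(0,0)=\lambda_2$, so no formula for the DR cycle is needed at all. The paper then removes the $\psi_1$-decorated unit insertion by the pushforward $\pi_*\psi_1=2g-2+n=3$, reducing to $-3\int_{\oM_{2,1}}\lambda_2\lambda_1 W_2(e_3)$, and evaluates this using $\lambda_2\lambda_1=\frac{1}{5760}[\mathrm{pt}]$ on $\oM_{2,0}$ together with a boundary-stratum representative of the fiber class of $\oM_{2,1}\to\oM_{2,0}$, so that the splitting axiom reduces everything to two genus-$0$ four-spin correlators, giving $-\frac{1}{1280}$.

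One caveat on your proposed consistency checks. The commutativity $[\oG_{1,1},\oG_{\beta,0}]=0$ can never fix this coefficient: $\int u^3\,dx$ has constant variational derivative, hence is a Casimir of the quantum bracket, drops out of every commutator, and adding any multiple of $i\hbar\eps^2\int u^3\,dx$ to $\oG_{1,1}$ preserves all commutation relations. The recursion \eqref{eq:first recursion}, by contrast, can serve as a bootstrap: the right-hand side $\frac{1}{\hbar}[G_{1,0},\oG_{1,1}]$ is insensitive to the Casimir part of $\oG_{1,1}$ and is therefore computable from the terms you have already determined; solving $\d_x(D-1)G_{1,1}=\frac{1}{\hbar}[G_{1,0},\oG_{1,1}]$ pins down the $i\hbar\eps^2 u^3$-coefficient of the density $G_{1,1}$ (the residual ambiguity in inverting $\d_x(D-1)$ consists only of $u^\alpha$-terms with coefficients in $\mbC$, not in $\mbC[[\eps,\hbar]]$), and imposing $\int G_{1,1}\,dx=\oG_{1,1}$ then fixes the coefficient. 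So keep the recursion as a fallback, but discard the commutativity one.
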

\begin{proof}
The classical parts of the above formulae are copied from \cite{BR14}. Moreover, from dimension counting, we obtain that $\oG^{r\text{-spin}}_{1,1}$ is a homogeneous local functional of degree $2r+2$ with respect to the grading
$|u^{a+1}_k|=r-a$, $|\eps|=1$, $|\hbar|=r+2$. This means that the quantum correction in $\oG_{1,1}^{3\text{-spin}}$ is entirely in genus $1$ and hence determined by Theorem \ref{theorem:genus1} (recall that the $\mathsf{G}$-function for the $r$-spin theory vanishes identically, see e.g. \cite{Str03}). The quantum correction in $\oG_{1,1}^{4\text{-spin}}$, instead, has a part in genus $1$ (to be determined again using Theorem \ref{theorem:genus1}) but also the genus $2$ term $\int a u^3 i\hbar \eps^2dx$, with $a \in \mbQ$. The constant $a$ corresponds to the intersection number $a=-\int_{\DR_2(0,0)}\lambda_1  \psi_1 W_2(e_1,e_3) = - 3\int_{\DR_2(0)} \lambda_1 W_2(e_3)= -3\int_{\oM_{2,1}} \lambda_2 \lambda_1 W_2(e_3)$. Using the fact that, on $\oM_{2,0}$, $\lambda_2\lambda_1 = \frac{1}{5760} [\text{pt}]$ and that the class of a fiber of $\pi:\oM_{2,1}\to \oM_{2,0}$ is represented by the closure of the locus of singular genus $2$ curves with $3$ nodes (one separating, two non-separating) and a marked point on either of the two irreducible components we obtain $a=-3 \times \frac{1}{5760} \times 2 \int_{\oM_{0,3}} W_0(e_\mu,e_\nu,e_\epsilon) \eta^{\mu \nu} \eta^{\epsilon \delta} \int_{\oM_{0,4}} W_0(e_\delta,e_\alpha,e_\beta,e_3) \eta^{\alpha \beta}= -\frac{1}{1280}$.
\end{proof}

%%%%%%%%%%%%%%%%%%%%%%%%%%%%%%%%%%%%%%%%%%%%%%%%%%%%%%%
%%%%%%%%%%%%%%%%%%%%%%%%%%%%%%%%%%%%%%%%%%%%%%%%%%%%%%

\section{Tau-symmetry and tau-functions for quantum integrable systems}

In this section we introduce a quantum version of the notions of tau-structure and tau-functions for a Hamiltonian hierarchy.%, and define accordingly a partition function for the qDR hierarchy as the quantum tau-function of its string solution. The classical limit of the qDR partition function coincides with the DR partition function introduced in \cite{BDGR16}.

\begin{remark}
We note here that, for the time being, we will restrict our definitions to the case (relevant for the quantum double ramification hierarchy) of quantum Hamiltonian systems whose commutator $[\cdot,\cdot]$ is the one defined in Section \ref{section:quantum hamiltonian systems}. This means in particular that the semiclassical limit has the Poisson structure in standard form $\{\cdot,\cdot\}_{\eta\d_x}$. A more general theory of quantum tau-structures will require a study and classification of star-products and commutators on the space of quantum differential polynomials and local functionals. We plan to study this subject in a future work.
\end{remark}

\subsection{Tau-symmetric quantum Hamiltonian hierarchies}

Consider a quantum Hamiltonian system defined by a family of pairwise commuting quantum local functionals $\oH_{\beta,q}\in(\hLambda^{\hbar}_N)^{[\leq 0]}$, parameterized by two indices $1\le\beta\le N$ and $q\ge 0$, $[\oH_{\beta,q},\oH_{\gamma,p}]=0$, with respect to the quantum commutator introduced in section \ref{section:quantum hamiltonian systems}:
$$\frac{\d u^\alpha}{\d {t^\beta_q}}  = [u^\alpha,\oH_{\beta,q}]$$

Let us assume that $\oH_{1,0}=\frac{1}{2}\int \eta_{\mu \nu} u^\mu u^\nu$. Notice that, in this case, $\frac{1}{\hbar} [f,\oH_{1,0}] =\{f,\oH_{1,0}\} =\sum_{k\geq 0} \frac{\d f}{\d u^\alpha_k} u^\alpha_{k+1}=\d_x f $ for any $f \in \hcA^\hbar_N$.\\

A {\it tau-structure} for such hierarchy is a collection of quantum differential polynomials~$H_{\beta,q}\in(\hcA^\hbar_N)^{[\leq 0]}$, $1\le\beta\le N$, $q\ge -1$, such that the following conditions hold:
\begin{enumerate}

\item $\oH_{\beta,-1} := \int H_{\beta,-1} dx =\int \eta_{\beta\mu} u^\mu dx$,

\item For $q\ge 0$, the quantum differential polynomials~$H_{\beta,q}$ are densities for the Hamiltonians~$\oH_{\beta,q}$,
\begin{gather}\label{eq:tau-structure2}
\oH_{\beta,q}=\int H_{\beta,q}dx.
\end{gather}

\item Tau-symmetry:
\begin{gather}\label{eq:tau-structure3}
[H_{\alpha,p-1},\oH_{\beta,q}]=[H_{\beta,q-1},\oH_{\alpha,p}],\quad1\le\alpha,\beta\le N,\quad p,q\ge 0.
\end{gather}

\end{enumerate}

Existence of a tau-structure imposes non-trivial constraints on a quantum Hamiltonian hierarchy. A quantum Hamiltonian hierarchy with a fixed tau-structure will be called {\it tau-symmetric}.

\subsection{Sufficient condition for the existence of a tau-structure}

Consider again a quantum Hamiltonian hierarchy defined by a family of pairwise commuting quantum local functionals $\oH_{\beta,q}\in(\hLambda^{\hbar}_N)^{[\leq 0]}$, parameterized by two indices $1\le\beta\le N$ and $q\ge 0$. In the same way, as in the previous section, we assume that $\oH_{1,0}=\frac{1}{2}\int \eta_{\mu \nu} u^\mu u^\nu$. We have the following quantum analogue of a result from \cite{BDGR16}.

\begin{proposition}\label{proposition:sufficient condition}
Suppose that 
$$
\frac{\d\oH_{\beta,q}}{\d u^1}=
\begin{cases}
\oH_{\beta,q-1},&\text{if $q\ge 1$},\\
\int\eta_{\beta\mu}u^\mu dx,&\text{if $q=0$}.
\end{cases}
$$
Then the differential polynomials 
$$
H_{\beta,q}:=\frac{\delta\oH_{\beta,q+1}}{\delta u^1},\quad q\ge -1,
$$
define a tau-structure for the quantum hierarchy.
\end{proposition}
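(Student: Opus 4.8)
The plan is to verify the three defining conditions of a tau-structure in turn, the first two being essentially bookkeeping and the third carrying all the content. Throughout I will use the operation $\frac{\partial}{\partial u^1}$ on local functionals, defined by $\frac{\partial \overline F}{\partial u^1}:=\int \frac{\partial f}{\partial u^1_0}\,dx$ for any density $f$ of $\overline F$. One checks this is well defined (replacing $f$ by $f+\partial_x(\cdots)$ changes the integrand by $\partial_x\frac{\partial(\cdots)}{\partial u^1_0}$, which vanishes in $\hLambda_N^\hbar$, since $\frac{\partial}{\partial u^1_0}$ commutes with $\partial_x$) and that $\frac{\partial\overline F}{\partial u^1}=\int\frac{\delta\overline F}{\delta u^1}\,dx$, because the terms $(-\partial_x)^i\frac{\partial f}{\partial u^1_i}$ with $i\ge 1$ are $\partial_x$-exact. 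With this, conditions (1) and (2) are immediate: for $q\ge 0$ we get $\int H_{\beta,q}\,dx=\int\frac{\delta\oH_{\beta,q+1}}{\delta u^1}\,dx=\frac{\partial\oH_{\beta,q+1}}{\partial u^1}=\oH_{\beta,q}$ by hypothesis, and the same computation at $q=-1$ gives $\int H_{\beta,-1}\,dx=\frac{\partial\oH_{\beta,0}}{\partial u^1}=\int\eta_{\beta\mu}u^\mu\,dx$.

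For the tau-symmetry (3) I would not compare the differential polynomials $[H_{\alpha,p-1},\oH_{\beta,q}]$ and $[H_{\beta,q-1},\oH_{\alpha,p}]$ directly, but instead show that their difference $S_{\alpha,p;\beta,q}$ is annihilated by $\partial_x$ and then argue it vanishes. The key computational input is the exact, correction-free form of the flow of a coordinate $u^\nu$: since $u^\nu$ is linear, only the $n=1$ term of the commutator formula of Section \ref{section:quantum hamiltonian systems} survives, and a short computation using $C^{r+1}_j=\delta_{j,r+1}$ gives $\frac1\hbar[u^\nu,\og]=\eta^{\nu\beta}\partial_x\frac{\delta\og}{\delta u^\beta}$ with no higher corrections in $\hbar$. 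Contracting with $\eta_{1\nu}$ and recalling $H_{1,-1}=\frac{\delta\oH_{1,0}}{\delta u^1}=\eta_{1\nu}u^\nu$, this yields the crucial identity $\partial_x H_{\alpha,p-1}=\partial_x\frac{\delta\oH_{\alpha,p}}{\delta u^1}=\frac1\hbar[H_{1,-1},\oH_{\alpha,p}]$, that is, the $x$-derivative of the density is the $t^\alpha_p$-flow of the linear density $H_{1,-1}$.

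From here the symmetry is forced by the Jacobi identity. Since every $\oH_{\beta,q}$ commutes with $\oH_{1,0}$ and $\partial_x=\frac1\hbar[\cdot,\oH_{1,0}]$, the Jacobi identity gives $\partial_x[g,\oH_{\beta,q}]=[\partial_x g,\oH_{\beta,q}]$, so $\partial_x[H_{\alpha,p-1},\oH_{\beta,q}]=[\partial_xH_{\alpha,p-1},\oH_{\beta,q}]=\frac1\hbar[[H_{1,-1},\oH_{\alpha,p}],\oH_{\beta,q}]$. Applying Jacobi once more together with $[\oH_{\alpha,p},\oH_{\beta,q}]=0$ rewrites this as $\frac1\hbar[[H_{1,-1},\oH_{\beta,q}],\oH_{\alpha,p}]$, which is manifestly symmetric under $(\alpha,p)\leftrightarrow(\beta,q)$. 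Hence $\partial_x S_{\alpha,p;\beta,q}=0$.

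The final step, where I expect the only genuine subtlety, is upgrading $\partial_x S=0$ to $S=0$. For this I would use that a commutator $[f,\og]$ never carries a $u$-independent term: in the formula for $[f,\og]$ each summand contains a factor $\partial_x^j(\cdots)$ with $j\ge 1$ on the $\og$-side, whose constant term vanishes, so the product has no constant term either. Thus $S_{\alpha,p;\beta,q}$ has no constant term, while $\partial_x S=0$ forces it to be $u$-independent; together these give $S_{\alpha,p;\beta,q}=0$, which is precisely relation (3). It is worth remarking that the hypothesis $\frac{\partial\oH_{\beta,q}}{\partial u^1}=\oH_{\beta,q-1}$ enters only through (1) and (2)---it is what makes the $H_{\beta,q}$ densities of the $\oH_{\beta,q}$---whereas relation (3) holds for the variational-derivative densities $\frac{\delta\oH_{\alpha,p}}{\delta u^1}$ of \emph{any} commuting family with $\oH_{1,0}=\frac12\int\eta_{\mu\nu}u^\mu u^\nu\,dx$.
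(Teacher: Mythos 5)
Your proof is correct, and for conditions (1)--(2) it coincides with the paper's (both reduce $\int\frac{\delta\oH_{\beta,q+1}}{\delta u^1}dx=\frac{\d\oH_{\beta,q+1}}{\d u^1}$ to the hypothesis). For the tau-symmetry condition (3), however, you take a genuinely different route. The paper's own argument is a one-step differentiation: it applies $\frac{\delta}{\delta u^1}$ directly to the commutativity relation $[\oH_{\alpha,p},\oH_{\beta,q}]=0$, passing to the $p$-variables, where $\frac{\delta}{\delta u^1}=\sum_{n\in\mbZ}e^{-inx}\frac{\d}{\d p^1_n}$ acts by the Leibniz rule on the two factors of the Weyl-algebra commutator (since $\frac{\d}{\d p^1_n}$ commutes with the exponential bidifferential operators), which immediately yields $[H_{\alpha,p-1},\oH_{\beta,q}]-[H_{\beta,q-1},\oH_{\alpha,p}]=0$. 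You never differentiate the commutator: instead you prove the exact identity $\d_x H_{\alpha,p-1}=\frac{1}{\hbar}[\eta_{1\nu}u^\nu,\oH_{\alpha,p}]$ from the explicit commutator formula (only $n=1$ survives for a linear argument, with $C^a_j=\delta_{j,a}$), then combine $\d_x=\frac{1}{\hbar}[\cdot,\oH_{1,0}]$ with two applications of the Jacobi identity and pairwise commutativity to get $\d_x S_{\alpha,p;\beta,q}=0$, and finally kill the residual constant using the no-constant-term property of commutators together with $\ker\d_x=\mbC[[\eps,\hbar]]$. All of these ingredients are legitimate within the paper's framework: the Jacobi identity for mixed brackets of densities and local functionals is used in the paper's proof of Theorem~\ref{theorem:recursion->integrability}, and the no-constant-term property is exactly what the paper invokes when defining the two-point functions $\Omega^\hbar_{\alpha,p;\beta,q}$. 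What your route buys is that it is purely structural --- it avoids having to justify that $\delta/\delta u^1$ is a derivation of the $\star$-commutator --- and it isolates cleanly the observation you make at the end, that condition (3) for the densities $\frac{\delta\oH_{\alpha,p}}{\delta u^1}$ requires only commutativity and $\oH_{1,0}=\frac{1}{2}\int\eta_{\mu\nu}u^\mu u^\nu dx$ (a fact also implicit in the paper's proof, since its computation of (3) likewise uses only $[\oH_{\alpha,p},\oH_{\beta,q}]=0$, but you make it explicit). What the paper's route buys is brevity: a single Leibniz-rule computation replaces your three-step argument and dispenses with the passage from $\d_x S=0$ to $S=0$.
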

\begin{proof}
We have $\oH_{\beta,-1}=\int\eta_{\beta\mu}u^\mu dx$. Condition~\eqref{eq:tau-structure2} is clear, since for $q\ge 0$ we have
$$
\int H_{\beta,q}dx=\int\frac{\delta\oH_{\beta,q+1}}{\delta u^1}dx=\frac{\d}{\d u^1}\oH_{\beta,q+1}=\oH_{\beta,q}.
$$
Let us check the tau-symmetry condition~\eqref{eq:tau-structure3}. We have the commutativity $[\oH_{\alpha,p},\oH_{\beta,q}]=0$. Let us apply the variational derivative $\frac{\delta}{\delta u^1}$ to this equation. It is much easier to do it in the $p$-variables~\eqref{eq:u-p change}. We have $[\oH_{\alpha,p},\oH_{\beta,q}] =\oH_{\alpha,p} \left(  e^{\sum_{k>0} i \hbar k \eta^{\mu \nu} \overleftarrow{\frac{\partial }{\partial p^\mu_{k}}} \overrightarrow{\frac{\partial }{\partial p^\nu_{-k}}}} -e^{\sum_{k>0} i \hbar k \eta^{\mu \nu} \overleftarrow{\frac{\partial }{\partial p^\mu_{-k}}} \overrightarrow{\frac{\partial }{\partial p^\nu_{k}}}}\right) \oH_{\beta,q}$. For the variational derivative we have $\frac{\delta\oH}{\delta u^\gamma}=\sum_{n\in\mbZ}e^{-inx}\frac{\d\oH}{\d p^\gamma_n}$ for any $\oH\in(\hLambda^\hbar_N)^{[\leq0]}$. Therefore, we get
\begin{align*}
0&=\frac{\delta}{\delta u^1}[\oH_{\alpha,p},\oH_{\beta,q}]=\\
&=\sum_{n\in\mbZ}e^{-inx}\frac{\d}{\d p^1_n}\left(\oH_{\alpha,p} \left(  e^{\sum_{k>0} i \hbar k \eta^{\mu \nu} \overleftarrow{\frac{\partial }{\partial p^\mu_{k}}} \overrightarrow{\frac{\partial }{\partial p^\nu_{-k}}}} -e^{\sum_{k>0} i \hbar k \eta^{\mu \nu} \overleftarrow{\frac{\partial }{\partial p^\mu_{-k}}} \overrightarrow{\frac{\partial }{\partial p^\nu_{k}}}}\right) \oH_{\beta,q}\right)=\\
&=\left(\sum_{n\in\mbZ}e^{-inx}\frac{\d\oH_{\alpha,p}}{\d p^1_n}\right)\left(  e^{\sum_{k>0} i \hbar k \eta^{\mu \nu} \overleftarrow{\frac{\partial }{\partial p^\mu_{k}}} \overrightarrow{\frac{\partial }{\partial p^\nu_{-k}}}} -e^{\sum_{k>0} i \hbar k \eta^{\mu \nu} \overleftarrow{\frac{\partial }{\partial p^\mu_{-k}}} \overrightarrow{\frac{\partial }{\partial p^\nu_{k}}}}\right) \oH_{\beta,q}\\
&\phantom{=}+\oH_{\alpha,p}\left(  e^{\sum_{k>0} i \hbar k \eta^{\mu \nu} \overleftarrow{\frac{\partial }{\partial p^\mu_{k}}} \overrightarrow{\frac{\partial }{\partial p^\nu_{-k}}}} -e^{\sum_{k>0} i \hbar k \eta^{\mu \nu} \overleftarrow{\frac{\partial }{\partial p^\mu_{-k}}} \overrightarrow{\frac{\partial }{\partial p^\nu_{k}}}}\right) \left(\sum_{n\in\mbZ}e^{-inx}\frac{\d\oH_{\beta,q}}{\d p^1_n}\right)=\\
&=[H_{\alpha,p-1},\oH_{\beta,q}]-[H_{\beta,q-1},\oH_{\alpha,p}].
\end{align*}
The proposition is proved.
\end{proof}

\begin{corollary}
The quantum double ramification hierarchy $\{\oG_{\alpha,d}\}_{\substack{1 \leq \alpha\leq N, d\geq -1}}$, with $G_{\alpha,d}$ given by (\ref{density}), is tau-symmetric. A tau-structure is given by the densities $H_{\alpha,d} = \frac{\delta \oG_{\alpha,d+1}}{\delta u^1}$.
\end{corollary}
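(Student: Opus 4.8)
The plan is to deduce the statement directly from Proposition \ref{proposition:sufficient condition}, by checking that the quantum DR Hamiltonians $\oG_{\beta,q}$ satisfy its hypotheses. Three ingredients, all established in \cite{BR15} and recalled in Section \ref{section:quantumDRH}, serve as input: the pairwise commutativity $[\oG_{\alpha,p},\oG_{\beta,q}]=0$ (so that the $\oG_{\beta,q}$, $q\ge 0$, genuinely form a quantum Hamiltonian hierarchy, and that $\oG_{\beta,q}\in(\hLambda^\hbar_N)^{[\le 0]}$); the normalization $\oG_{1,0}=\int\left(\tfrac12\eta_{\mu\nu}u^\mu u^\nu\right)dx$, which is exactly the assumption $\oH_{1,0}=\tfrac12\int\eta_{\mu\nu}u^\mu u^\nu$ imposed throughout the section; and the string equation \eqref{eq:string equation}, $\frac{\d G_{\alpha,p+1}}{\d u^1}=G_{\alpha,p}$ at the level of densities, valid for all $p\ge -1$ once the constant terms of the $G_{\alpha,p}$ are fixed as in the discussion following Theorem \ref{theorem:recursion}.

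The one genuine computation is to upgrade the density-level string equation to the functional-level hypothesis of Proposition \ref{proposition:sufficient condition}. Here I would use the identity $\frac{\d\oH}{\d u^1}=\int\frac{\delta\oH}{\delta u^1}dx=\int\frac{\d H}{\d u^1_0}dx$, valid for any $\oH=\int H\,dx$ in $(\hLambda^\hbar_N)^{[\le 0]}$, since the terms $(-\partial_x)^i\frac{\d H}{\d u^1_i}$ with $i\ge 1$ in the variational derivative are total $x$-derivatives and integrate to zero. Integrating $\frac{\d G_{\alpha,p+1}}{\d u^1_0}=G_{\alpha,p}$ over $x$ then yields $\frac{\d\oG_{\alpha,q}}{\d u^1}=\oG_{\alpha,q-1}$ for $q\ge 1$; the borderline case $p=-1$, together with $G_{\alpha,-1}=\eta_{\alpha\mu}u^\mu$, gives $\frac{\d\oG_{\alpha,0}}{\d u^1}=\int\eta_{\alpha\mu}u^\mu dx$. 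These are precisely the two cases of the hypothesis of Proposition \ref{proposition:sufficient condition}.

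With the hypotheses verified, Proposition \ref{proposition:sufficient condition} applies verbatim: the differential polynomials $H_{\alpha,d}=\frac{\delta\oG_{\alpha,d+1}}{\delta u^1}$, $d\ge -1$, satisfy condition (1) (from the $q=0$ case above, $\int H_{\alpha,-1}dx=\int\eta_{\alpha\mu}u^\mu dx$), condition (2) ($\int H_{\alpha,d}dx=\oG_{\alpha,d}$ for $d\ge 0$), and the tau-symmetry condition (3), obtained in the proof of the proposition by applying $\frac{\delta}{\delta u^1}$ to the commutativity relation $[\oG_{\alpha,p},\oG_{\beta,q}]=0$ written in the $p$-variables. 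This produces the claimed tau-structure and establishes tau-symmetry of the quantum DR hierarchy.

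I expect the only real subtlety to be bookkeeping rather than depth. One must make sure the string equation holds on the nose rather than merely up to a constant, which is guaranteed by the specific choice of constant terms for the densities $G_{\alpha,p}$ fixed after Theorem \ref{theorem:recursion}, and one must match the index range $d\ge -1$ of the DR Hamiltonians with the $q\ge 0$ convention of Proposition \ref{proposition:sufficient condition}. Since all three inputs above are already available from \cite{BR15}, no further obstacle arises and the corollary follows formally from the proposition.
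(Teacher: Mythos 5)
Your proposal is correct and matches the paper's intended argument: the paper states this corollary without proof precisely because it is a direct application of Proposition \ref{proposition:sufficient condition}, whose hypotheses are verified exactly as you do — commutativity $[\oG_{\alpha,p},\oG_{\beta,q}]=0$ and $\oG_{1,0}=\int\left(\tfrac12\eta_{\mu\nu}u^\mu u^\nu\right)dx$ from \cite{BR15}, plus the string equation \eqref{eq:string equation} (which holds on the nose for the geometric densities \eqref{density}) integrated over $x$ to give $\tfrac{\d\oG_{\alpha,q}}{\d u^1}=\oG_{\alpha,q-1}$ for $q\ge 1$ and $\tfrac{\d\oG_{\alpha,0}}{\d u^1}=\int\eta_{\alpha\mu}u^\mu dx$. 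Your bookkeeping points (constants fixed by the string equation, the identity $\tfrac{\d\oH}{\d u^1}=\int\tfrac{\delta\oH}{\delta u^1}dx$, index conventions) are exactly the details the paper leaves implicit, and they are handled correctly.
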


%%%%%%%%%%%%%%%%%%%%%%%%%%%%%%%%%%%%%%%%%%%%%

\subsection{Quantum tau-functions}

We consider again a quantum Hamiltonian hierarchy generated by Hamiltonians $\oH_{\alpha,p}$, $1\leq \alpha\leq N$, $p\geq -1$ where $\oH_{1,0}=\frac{1}{2}\int \eta_{\mu \nu} u^\mu u^\nu dx$. Suppose that the quantum differential polynomials $H_{\beta,q}$, $1\le\beta\le N$, $q\ge -1$, define a tau-structure for such hierarchy. From commutativity of the Hamiltonians we have
\begin{gather}\label{eq:conserved quantity}
\int [ H_{\alpha,p-1}, \oH_{\beta,q}]dx=0.
\end{gather}
The quantum differential polynomial $[ H_{\alpha,p-1}, \oH_{\beta,q}]$ has no constant term (because of the form of the quantum commutator), hence there exists a unique differential polynomial $\Omega^\hbar_{\alpha,p;\beta,q}\in(\hcA^\hbar_N)^{[\leq 0]}$ such that 
\begin{gather}\label{eq:definition of Omega}
\d_x\Omega^\hbar_{\alpha,p;\beta,q}= [ H_{\alpha,p-1}, \oH_{\beta,q}]\quad\text{and}\quad\left.\Omega^\hbar_{\alpha,p;\beta,q}\right|_{u^*_*=0}=0.
\end{gather}
The differential polynomial $\Omega^\hbar_{\alpha,p;\beta,q}$ is called the {\it two-point function} of the given tau-structure of the hierarchy. From condition~\eqref{eq:tau-structure3} it follows that 
\begin{gather}
\Omega^\hbar_{\alpha,p;\beta,q}=\Omega^\hbar_{\beta,q;\alpha,p}
\end{gather}
and, moreover, it implies that the differential polynomial 
\begin{gather}
[\Omega^\hbar_{\alpha,p;\beta,q}, \oH_{\gamma,r}]
\end{gather}
is symmetric with respect to all permutations of the pairs $(\alpha,p)$, $(\beta,q)$, $(\gamma,r)$. Since the Hamiltonian $\oH_{1,0}$ generates the spatial translations, equation~\eqref{eq:definition of Omega} implies that $\d_x\Omega^\hbar_{\alpha,p;1,0}=\d_x H_{\alpha,p-1}$, $p\ge 0$. Therefore,
\begin{gather}\label{eq:Omega-h relation}
\Omega^\hbar_{\alpha,p;1,0}-H_{\alpha,p-1}=C,\quad p\ge 0,
\end{gather}
where $C=C(\eps,\hbar)$ is a formal power series in $\eps$ and $\hbar$.\\

Consider also the evolved Hamiltonians
\begin{gather}
H^{t^*_*}_{\alpha,p} =  \exp\left(\sum_{q\geq 0} \frac{t^\beta_q}{\hbar}[\cdot,\oH_{\beta,q}]\right) H_{\alpha,p} \in \hcA_N^\hbar[[t^*_*]],
\end{gather}
and the evolved two-point functions
\begin{gather}
\Omega^{\hbar,t^*_*}_{\alpha,p;\gamma,r} = \exp\left(\sum_{q\geq 0} \frac{t^\beta_q}{\hbar}[\cdot,\oH_{\beta,q}]\right) \Omega^{\hbar}_{\alpha,p;\gamma,r} \in \hcA_N^\hbar[[t^*_*]].
\end{gather}
They satisfy, respectively, 
$$\frac{\d H^{t^*_*}_{\alpha,p}}{\d t^\beta_q}=\frac{1}{\hbar} [H^{t^*_*}_{\alpha,p},\oH_{\beta,q}],\qquad \left.H^{t^*_*}_{\alpha,p}\right|_{t^*_*=0} = H_{\alpha,p} \in \hcA^\hbar_N$$
and
$$\frac{\d \Omega^{\hbar,t^*_*}_{\alpha,p;\gamma,r}}{\d t^\beta_q}=\frac{1}{\hbar} [\Omega^{\hbar,t^*_*}_{\alpha,p;\gamma,r},\oH_{\beta,q}],\qquad \left.\Omega^{\hbar,t^*_*}_{\alpha,p;\gamma,r}\right|_{t^*_*=0} = \Omega^\hbar_{\alpha,p;\gamma,r} \in \hcA^\hbar_N$$
together with 
\begin{gather}\label{eq:two-point}
\frac{\d H^{t^*_*}_{\alpha,p-1}}{\d t^\beta_q} =\Omega^{\hbar,t^*_*}_{\alpha,p;\beta,q} =\Omega^{\hbar,t^*_*}_{\beta,q;\alpha,p} = \frac{\d H^{t^*_*}_{\beta,q-1}}{\d t^\alpha_p}.
\end{gather}
Moreover
\begin{gather}\label{eq:three-point}
\frac{\d \Omega^{\hbar,t^*_*}_{\alpha,p;\beta,q}}{\d t^\gamma_r}
\end{gather}
is symmetric with respect to all permutations of the pairs $(\alpha,p)$, $(\beta,q)$ and $(\gamma,r)$.\\

Then equation~\eqref{eq:two-point} and the symmetry of~\eqref{eq:three-point} imply that there exists a function $P\in\hcA_N^\hbar[[t^*_*]]$ such that 
$$
\Omega^{\hbar,t^*_*}_{\alpha,p;\beta,q}= \frac{\d^2 P}{\d t^\alpha_p\d t^\beta_q},\quad\text{for any $1\le\alpha,\beta\le N$ and $p,q\ge 0$}.
$$
To each initial condition $u^\alpha_k|_{x=t^*_*=0} = c^\alpha_k(\eps,\hbar)\in \mbC[[\eps,\hbar]]$ with $c^\alpha_k(0,0) = 0$ we can associate the restriction $\left. P \right|_{u^\alpha_k=c^\alpha_k(\eps,\hbar)}\in \mbC[[t^*_*,\eps,\hbar]]$ which is called (the logarithm of) the tau-function of the given solution.

%%%%%%%%%%%%%%%%%%%%%%%%%%%%%%%%%%%%%%%%%%%%%%%%%%%%%%%
%%%%%%%%%%%%%%%%%%%%%%%%%%%%%%%%%%%%%%%%%%%%%%%%%%%%%

\section{Hierarchies of double ramification type}

In this section we interpret the recursion (\ref{eq:first recursion}) as a system of functional derivative equations for $\oG_{1,1}$ and elevate it to the main axiom in the definition of a class of (quantum or classical) Hamiltonians producing integrable, tau-symmetric Hamiltonian systems.

\subsection{An integrability condition for Hamiltonian systems}

Let us consider the quantum Hamiltonian system defined by a Hamiltonian $\oH\in (\hLambda_N^\hbar)^{[\leq 0]}$ with respect to the standard quantum commutator introduced in Section \ref{section:quantum hamiltonian systems}. We give a sufficient condition for $\oH$ to be part of an integrable hierarchy. Consider the operator $\cD^\hbar_\oH:\hcA_N^\hbar[[z]]\to\hcA_N^\hbar[[z]]$ defined by
\begin{equation*}
\begin{split}
& \cD^\hbar_\oH f(z)= \d_x (D-1) f(z) - \frac{z}{\hbar}[f(z),\oH],\\
& f(z)=  f(u^*_*;\eps,\hbar;z) = \sum_{k\geq 0}  f_{k-1}(u^*_*;\eps,\hbar) z^k, \qquad  f_{k-1}(u^*_*;\eps,\hbar) \in (\hcA_N^\hbar)^{[\leq 0]}.\\
\end{split}
\end{equation*}
Suppose there exist $N$ solutions $G_{\alpha}(z) \in (\hcA_N^\hbar)^{[\leq 0]}[[z]]$, $\alpha=1,\ldots,N$, to $\cD^\hbar_\oH G_\alpha(z)=0$ with the initial conditions $G_{\alpha}(z=0)=\eta_{\alpha\mu}u^\mu$. Then a new vector of solutions can be found by the following transformation
\begin{equation}\label{eq:solution transf}
G_{\alpha}(z) \mapsto A^\mu_\alpha(z) G_\mu(z) + B_\alpha(z),
\end{equation}
where $A^\mu_\alpha(z)=\delta^\mu_\alpha + \sum_{i > 0} A^\mu_{\alpha,i} z^i \in \mbC[[z]]$ and $B_\alpha(z)=\sum_{i>0} B_{\alpha,i}(\eps,\hbar) z^i \in \mbC[[\eps,\hbar,z]]$.

\begin{theorem}\label{theorem:recursion->integrability}
Assume that $\oH \in (\hLambda_N^\hbar)^{[\leq 0]}$ has the following properties:
\begin{itemize}
\item[(a)] there exist $N$ independent solutions $G_{\alpha}(z) = \sum_{p\geq 0} G_{\alpha,p-1} z^p\in (\hcA_N^\hbar)^{[\leq 0]}[[z]]$, $\alpha=1,\ldots,N$, to the equation
\begin{equation}\label{eq:operator recursion}
\cD^\hbar_\oH G_\alpha(z) = 0
\end{equation}
with the initial conditions $ G_{\alpha}(z=0)=\eta_{\alpha\mu}u^\mu$,
\item[(b)] $\displaystyle \frac{\delta \oH}{\delta u^1} = \frac{1}{2}\eta_{\mu \nu } u^\mu u^\nu + \d_x R + c(\eps,\hbar), \qquad R\in (\hcA^\hbar_N)^{[\le -1]}, \quad c(\eps,\hbar) \in \mbC[[\eps,\hbar]]$,
\item[(c)] $\oG_{1,1} = \oH$.
\end{itemize}
Then, up to a transformation of type (\ref{eq:solution transf}), we have
\begin{itemize}
\item[(i)] $\displaystyle \oG_{1,0} =  \int \left( \frac{1}{2}\eta_{\mu \nu } u^\mu u^\nu \right)dx$,
\item[(ii)] $\displaystyle [\oG_{\alpha,p},\oG_{\beta,q}] = 0, \qquad \alpha,\beta=1,\ldots,N,\quad p,q\geq -1$,
\item[(iii)] $\displaystyle \frac{1}{\hbar}[G_{\alpha,p},\oG_{\beta,0}] = \d_x \frac{\d G_{\alpha,p+1}}{\d u^\beta}, \qquad \beta=1,\ldots,N, \quad p\geq -1,$ 
\item[(iv)] $\displaystyle \frac{\d G_{\alpha,p}}{\d u^1} = G_{\alpha,p-1}, \qquad \alpha=1,\ldots,N, \quad p\geq -1$,
\end{itemize}
hence in particular $\oH$ is part of a quantum integrable tau-symmetric hierarchy.
\end{theorem}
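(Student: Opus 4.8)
The plan is to treat the recursion operator $\cD^\hbar_\oH$ as the central object and to extract the four conclusions from the single hypothesis that it has $N$ independent solutions $G_\alpha(z)$ together with the normalization conditions (b) and (c). First I would establish (i). Since $\cD^\hbar_\oH G_\alpha(z)=0$ with $G_\alpha(0)=\eta_{\alpha\mu}u^\mu$, the coefficient of $z^1$ in the equation $\d_x(D-1)G_\alpha(z) = \frac{z}{\hbar}[G_\alpha(z),\oH]$ yields $\d_x(D-1)G_{\alpha,0} = \frac{1}{\hbar}[\eta_{\alpha\mu}u^\mu,\oH]$. Using (b) to compute $[\eta_{\alpha\mu}u^\mu,\oH]$ — which is governed by $\frac{\delta\oH}{\delta u^1}$ through the explicit commutator formula from Section \ref{section:quantum hamiltonian systems} — and then integrating, I expect to recover $\oG_{1,0}=\int(\frac12\eta_{\mu\nu}u^\mu u^\nu)dx$ after fixing the freedom \eqref{eq:solution transf}, i.e.\ absorbing the constant and lower-order ambiguities into the $A^\mu_\alpha(z)$ and $B_\alpha(z)$. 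This mirrors how $\oG_{1,0}$ arises in the genuine DR setting.

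Next I would prove (iv), the string-type equation $\frac{\d G_{\alpha,p}}{\d u^1}=G_{\alpha,p-1}$. The natural strategy is to differentiate the recursion \eqref{eq:operator recursion} by $\frac{\d}{\d u^1}$ and commute this operator past $\d_x(D-1)$ and past $\frac{z}{\hbar}[\cdot,\oH]$, producing a recursion satisfied by $\frac{\d G_\alpha(z)}{\d u^1}$; comparing it with the recursion for $G_\alpha(z)$ itself and matching initial data (where $\frac{\d}{\d u^1}\eta_{\alpha\mu}u^\mu=\eta_{\alpha 1}$ plays the role of the shifted index), I would conclude that $\frac{\d G_\alpha(z)}{\d u^1}$ equals $z^{-1}G_\alpha(z)$ up to the lower-order terms, which is exactly \eqref{eq:string equation} level by level in $z$. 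The clean way is to show directly that $[\frac{\d}{\d u^1},\cD^\hbar_\oH]$ is proportional to $\d_x$ (using $\oG_{1,0}$ generating $\d_x$, established in (i)), so that the string equation holds after a single integration.

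For (ii) and (iii), I would follow the logic that makes the DR hierarchy integrable, which is precisely Theorem \ref{theorem:recursion} run in reverse. Statement (iii) should come from rewriting the recursion \eqref{eq:operator recursion} in the form $\d_x(D-1)G_{\alpha,p+1}=\frac{1}{\hbar}[G_{\alpha,p},\oH]$, then applying the $\frac{\d}{\d u^\beta}$-differentiated companion of this recursion (the analogue of \eqref{eq:second recursion}) and using the tau-symmetry/variational-derivative identities; the key is that the second recursion is equivalent to the first once one knows \eqref{eq:definition of Omega}-type compatibility, which in turn is encoded in $\cD^\hbar_\oH$. Then (ii), the commutativity $[\oG_{\alpha,p},\oG_{\beta,q}]=0$, I would obtain by a double induction on $p+q$: the base cases $p=-1$ or $q=-1$ follow from (iii) and (iv) (since $\oG_{\alpha,-1}=\int\eta_{\alpha\mu}u^\mu$ generates affine flows), and the inductive step uses the Jacobi identity for the quantum commutator together with (iii) to trade a raised index $p+1$ for the bracket $\frac{1}{\hbar}[G_{\alpha,p},\oG_{\beta,0}]=\d_x\frac{\d G_{\alpha,p+1}}{\d u^\beta}$, reducing to a lower value of $p+q$.

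The main obstacle I anticipate is the inductive proof of commutativity (ii): unlike the classical case, the quantum commutator $[\cdot,\oH]$ is not a derivation of the commutative product (as stressed after \eqref{eq:quantum Hamiltonian system}), so the usual Leibniz manipulations that collapse $[\oG_{\alpha,p+1},\oG_{\beta,q}]$ must be performed with care, keeping track of the non-derivation defect and making sure the $\frac{\d}{\d u^\beta}$ in (iii) interacts correctly with $\d_x^{-1}$. Concretely, I would need the Jacobi identity $[[G_{\alpha,p},\oG_{1,1}],\oG_{\beta,q}]+\text{cyclic}=0$ and the recursion to express $[\oG_{\alpha,p+1},\oG_{\beta,q}]$ in terms of $[G_{\alpha,p},\cdot]$ brackets, after which integration in $x$ and the symmetry of \eqref{eq:three-point} close the induction. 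Once (i)--(iv) hold, tau-symmetry follows from Proposition \ref{proposition:sufficient condition}, since (iv) is exactly its hypothesis $\frac{\d\oG_{\beta,q}}{\d u^1}=\oG_{\beta,q-1}$, giving the final assertion that $\oH$ sits inside a quantum integrable tau-symmetric hierarchy.
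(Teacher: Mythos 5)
Your handling of (i) coincides with the paper's, and your direct proof of (iv) (commuting $\d/\d u^1$ past $\cD^\hbar_\oH$, using $\tfrac{\d}{\d u^1}[f,\oH]=[\tfrac{\d f}{\d u^1},\oH]+[f,\tfrac{\d\oH}{\d u^1}]$ together with $\tfrac{\d\oH}{\d u^1}=\oG_{1,0}$ from hypothesis (b)) is a viable alternative to the paper, which instead obtains (iv) as the $\beta=1$ case of (iii). The core of the theorem, however, is (ii) and (iii), and there your argument has a genuine gap, in fact a circularity. You justify (iii) by saying the second recursion ``is equivalent to the first once one knows \eqref{eq:definition of Omega}-type compatibility''; but the two-point functions of \eqref{eq:definition of Omega} are only defined once one knows $\int[H_{\alpha,p-1},\oH_{\beta,q}]dx=0$, i.e.\ the commutativity (ii), which you propose to deduce afterwards from (iii). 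Likewise your inductive step for (ii) invokes ``the symmetry of \eqref{eq:three-point}'', which again presupposes the tau-structure you are in the middle of constructing. Independently of the circularity, the induction for (ii) cannot close at the level of local functionals: by the recursion, $G_{\beta,q}=(D-1)^{-1}\d_x^{-1}\tfrac1\hbar[G_{\beta,q-1},\oG_{1,1}]$, so expanding $[\oG_{\alpha,p},\oG_{\beta,q}]$ via Jacobi and $[\oG_{1,1},\oG_{\alpha,p}]=0$ forces you to control the \emph{density-level} bracket $[G_{\beta,q-1}(y),\oG_{\alpha,p}]$ (equivalently its $\d_x$-primitive) for arbitrary $p$; the induction hypothesis $[\oG_{\beta,q-1},\oG_{\alpha,p}]=0$ only says this bracket is $\d_x$-exact, and statement (iii) only controls brackets with the functionals $\oG_{\beta,0}$, not with general $\oG_{\alpha,p}$. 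So the reduction ``to a lower value of $p+q$'' does not go through with the tools you allow yourself.

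What is missing is precisely the key idea of the paper's proof: strengthen the induction to a bilinear statement about densities at two distinct points. The paper constructs, by induction, two-point objects $G_{\alpha,p;\beta,q}(x,y)$, symmetric under the simultaneous exchange $(\alpha,p,x)\leftrightarrow(\beta,q,y)$, satisfying the abstract analogue of Lemma \ref{mainlemma},
\begin{equation*}
\frac1\hbar\left[G_{\alpha,p}(x),G_{\beta,q}(y)\right]=\d_x G_{\alpha,p+1;\beta,q}(x,y)-\d_y G_{\alpha,p;\beta,q+1}(x,y),
\end{equation*}
where the inductive step computes $D\tfrac1\hbar[G_{\alpha,p}(x),G_{\beta,q}(y)]$ using the derivation property of $D$, the recursion in the form $(D-1)G_{\alpha,p}=\d_x^{-1}\tfrac1\hbar[G_{\alpha,p-1},\oG_{1,1}]$, and the Jacobi identity, and then \emph{defines} $G_{\alpha,p+1;\beta,q}$ and $G_{\alpha,p;\beta,q+1}$ by applying $D^{-1}\d_y^{-1}$ (resp.\ $D^{-1}\d_x^{-1}$) to the resulting expressions. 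With this in hand, (ii) follows by integrating over both $x$ and $y$, and (iii) follows by integrating over $y$ alone, using $\int G_{\alpha,p+1;\beta,0}(x,y)\,dy=\tfrac{\d G_{\alpha,p+1}(x)}{\d u^\beta}$; the paper then gets (iv) from (iii). You are right that Proposition \ref{proposition:sufficient condition} converts (ii) and (iv) into the final tau-symmetry claim, but without the two-point construction the middle of your argument does not stand.
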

\begin{proof}
Equation \eqref{eq:operator recursion} implies in particular that $[\oG_{\alpha,p},\oH] = 0$ for every $\alpha=1,\ldots,N$, $p\geq-1$. Moreover we have
$$\d_x (D-1) G_{1,0} = \frac{1}{\hbar}[G_{1,-1},\oH] =  \d_x \frac{\delta \oH}{\delta u^1} =\d_x\left( \frac{1}{2} \eta_{\mu\nu} u^\mu u^\nu + \d_x R + c(\eps,\hbar)\right),$$
which proves $\mathrm{(i)}$.\\

%At the next step of the recursion we obtain $G_{1,1} = (D-1)^{-1} \d_x^{-1} \frac{1}{\hbar}[G_{1,0},\oH]$. A direct computation shows that $\frac{1}{\hbar}[\frac{1}{2}\eta_{\mu \nu} u^\mu u^\nu, \oH] = \d_x (D-1) H+ \d_x^2 S$ where $H,S \in \hcA^\hbar_N$ with $\oH = \int H dx$, so we deduce $\frac{1}{\hbar}[G_{1,0},\oH] = \d_x (D-1) H+ \d_x^2 S + \d^2_x (D-1)^{-1} R$. This implies
%$$\oG_{1,1}= \int \left[(D-1)^{-1} \d_x^{-1} \left(\d_x (D-1) H+ \d_x^2 S + \d^2_x (D-1)^{-1} R\right)\right] = \oH $$
%which proves $\mathrm{(ii)}$.\\

We write equation \eqref{eq:operator recursion} as $\d_x (D-1) G_{\alpha,p} = \frac{1}{\hbar} [G_{\alpha,p-1},\oG_{1,1}]$. To prove $\mathrm{(ii)}$ we will show that such recursion implies
\begin{equation*}
\frac{1}{\hbar} \left[ G_{\alpha,p}(x) , G_{\beta,q}(y)\right] = \partial_x G_{\alpha,p+1;\beta,q}(x,y) - \partial_y G_{\alpha,p;\beta,q+1}(x,y),
\end{equation*}
for $\alpha,\beta=1,\ldots,N$, $p,q\geq 0$ (which is equation \eqref{eqmainlemma}), for some opportunely defined $G_{\alpha,p;\beta,q}(x,y)$, symmetric with respect to simultaneous exchange of the indices $(\alpha,p,x)$ and $(\beta,q,y)$. We proceed by recursion starting from the fact that, for $p\geq 0$
\begin{equation*}
\frac{1}{\hbar} \left[ G_{\alpha,p}(x) , G_{\beta,-1}(y)\right] = \sum_{l\geq 0} \frac{\d G_{\alpha,p}}{\d u^\beta_l} \delta^{(l+1)}(x-y) = -\d_y \left( \sum_{l\geq 0}  \frac{\d G_{\alpha,p}}{\d u^\beta_l} \delta^{(l)}(x-y) \right),
\end{equation*}
so that we can pose
$$
G_{\alpha,p;\beta,0}(x,y)  := \sum_{l\geq 0}  \frac{\d G_{\alpha,p}}{\d u^\beta_l} \delta^{(l)}(x-y)=: G_{\beta,0;\alpha,p}(y,x),\qquad G_{\alpha,p;\beta,-1}(x,y) = G_{\beta,-1;\alpha,p}(y,x) = 0,
$$
and have
$$\frac{1}{\hbar} \left[ G_{\alpha,p}(x) , G_{\beta,-1}(y)\right] = \partial_x G_{\alpha,p+1;\beta,-1}(x,y) - \partial_y G_{\alpha,p;\beta,0}(x,y),$$
$$\frac{1}{\hbar} \left[ G_{\alpha,-1}(x) , G_{\beta,q}(y)\right] = \partial_x G_{\alpha,0;\beta,q}(x,y) - \partial_y G_{\alpha,-1;\beta,q+1}(x,y).$$
Now we assume
$$\frac{1}{\hbar} \left[ G_{\alpha,p}(x) , G_{\beta,q-1}(y)\right] = \partial_x G_{\alpha,p+1;\beta,q-1}(x,y) - \partial_y G_{\alpha,p;\beta,q}(x,y),$$
$$\frac{1}{\hbar} \left[ G_{\alpha,p-1}(x) , G_{\beta,q}(y)\right] = \partial_x G_{\alpha,p;\beta,q}(x,y) - \partial_y G_{\alpha,p-1;\beta,q+1}(x,y),$$
and obtain
\begin{equation*}
\begin{split}
D \frac{1}{\hbar} & [ G_{\alpha,p}(x) , G_{\beta,q}(y)] \\
&=\frac{1}{\hbar} [(D-1) G_{\alpha,p}(x),G_{\beta,q}(y)] + \frac{1}{\hbar} [ G_{\alpha,p}(x) , (D-1) G_{\beta,q}(y)]\\
& = \frac{1}{\hbar} \left[\d_x^{-1} \frac{1}{\hbar}\left[G_{\alpha,p-1}(x),\oG_{1,1}\right],G_{\beta,q}(y)\right] +  \frac{1}{\hbar} \left[G_{\alpha,p}(x),\d_y^{-1} \frac{1}{\hbar}\left[G_{\beta,q-1}(y),\oG_{1,1}\right]\right]\\
& = \d_x^{-1}\frac{1}{\hbar} \left[\frac{1}{\hbar}\left[G_{\alpha,p-1}(x),\oG_{1,1}\right],G_{\beta,q}(y)\right] +  \d_y^{-1}\frac{1}{\hbar} \left[G_{\alpha,p}(x), \frac{1}{\hbar}\left[G_{\beta,q-1}(y),\oG_{1,1}\right]\right]\\
&=\d_x^{-1}\left(-\frac{1}{\hbar}\left[\frac{1}{\hbar}\left[G_{\beta,q}(y),G_{\alpha,p-1}(x)\right],\oG_{1,1}\right]-\frac{1}{\hbar}\left[\frac{1}{\hbar}\left[\oG_{1,1},G_{\beta,q}(y)\right],G_{\alpha,p-1}(x)\right]\right)\\
& +\d_y^{-1}\left(-\frac{1}{\hbar}\left[\oG_{1,1},\frac{1}{\hbar}\left[G_{\alpha,p}(x),G_{\beta,q-1}(y)\right]\right]-\frac{1}{\hbar}\left[G_{\beta,q-1}(y),\frac{1}{\hbar}\left[\oG_{1,1},G_{\alpha,p}(x)\right]\right]\right)\\
&=\d_x^{-1}\left(\frac{1}{\hbar}\left[\partial_x G_{\alpha,p;\beta,q}(x,y) - \partial_y G_{\alpha,p-1;\beta,q+1}(x,y),\oG_{1,1}\right]+\frac{1}{\hbar}\left[\d_y (D-1) G_{\beta,q+1}(y),G_{\alpha,p-1}(x)\right]\right)\\
& +\d_y^{-1}\left(\frac{1}{\hbar}\left[\partial_x G_{\alpha,p+1;\beta,q-1}(x,y) - \partial_y G_{\alpha,p;\beta,q}(x,y),\oG_{1,1}\right]+\frac{1}{\hbar}\left[G_{\beta,q-1}(y),\d_x (D-1) G_{\alpha,p+1}(x)\right]\right)\\
&=-\d_y \d_x^{-1}\left(\frac{1}{\hbar}\left[G_{\alpha,p-1;\beta,q+1}(x,y),\oG_{1,1}\right]-\frac{1}{\hbar}\left[ (D-1) G_{\beta,q+1}(y),G_{\alpha,p-1}(x)\right]\right)\\
& +\d_x \d_y^{-1}\left(\frac{1}{\hbar}\left[ G_{\alpha,p+1;\beta,q-1}(x,y) ,\oG_{1,1}\right]+\frac{1}{\hbar}\left[G_{\beta,q-1}(y), (D-1) G_{\alpha,p+1}(x)\right]\right).
\end{split}
\end{equation*}
Hence we can define
$$G_{\alpha,p+1;\beta,q}(x,y) = D^{-1}\d_y^{-1}\left(\frac{1}{\hbar}\left[ G_{\alpha,p+1;\beta,q-1}(x,y) ,\oG_{1,1}\right]+\frac{1}{\hbar}\left[G_{\beta,q-1}(y), (D-1) G_{\alpha,p+1}(x)\right]\right),$$
$$G_{\alpha,p;\beta,q+1}(x,y) = D^{-1} \d_x^{-1}\left(\frac{1}{\hbar}\left[G_{\alpha,p-1;\beta,q+1}(x,y),\oG_{1,1}\right]-\frac{1}{\hbar}\left[ (D-1) G_{\beta,q+1}(y),G_{\alpha,p-1}(x)\right]\right), $$
which enjoy the correct symmetry property with respect to exchange of indices and variables. By induction we arrive then to the proof of $\mathrm{(ii)}$.\\

From the last equation we can deduce in particular that $\int G_{\alpha,p+1;\beta,0}(x,y) dy = \frac{\d G_{\alpha,p+1}(x)}{\d u^\beta}$. We also have
$$
\frac{1}{\hbar} \left[ G_{\alpha,p}(x) , G_{\beta,0}(y)\right] = \partial_x G_{\alpha,p+1;\beta,0}(x,y) - \partial_y G_{\alpha,p;\beta,1}(x,y)
$$
which, upon integration with respect to $y$, gives
$$\frac{1}{\hbar} \left[ G_{\alpha,p}(x) , \oG_{\beta,0}\right]=\int\d_x G_{\alpha,p+1;\beta,0}(x,y) dy=\d_x \frac{\d G_{\alpha,p+1}(x)}{\d u^\beta}$$
and proves $\mathrm{(iii)}$.\\

Point $\mathrm{(iv)}$ follows from point $\mathrm{(iii)}$ in the case $\beta = 1$, which gives $\d_x \frac{\d G_{\alpha,p+1}}{\d u^1} = \d_x G_{\alpha,p}$.
\end{proof}

We also have the following theorem, which is slightly stronger than the classical version of the above one. For a local functional $\oh\in\hLambda^{[0]}_N$ consider the operator $\cD_{\oh}:\hcA_N[[z]]\to\hcA_N[[z]]$ defined by
\begin{equation*}
\begin{split}
& \cD_{\oh} f(z)= \d_x (D-1) f(z) - z\{f(z),\oh\},\\
& f(z)=  f(u^*_*;\eps;z) = \sum_{k\geq 0}  f_{k-1}(u^*_*;\eps) z^k, \qquad  f_{k-1}(u^*_*;\eps) \in \hcA_N^{[0]}.\\
\end{split}
\end{equation*}
Suppose there exist $N$ solutions $g_{\alpha}(z) \in \hcA_N^{[0]}[[z]]$, $\alpha=1,\ldots,N$, to $\cD_{\oh} g_\alpha(z)=0$ with the initial conditions $g_{\alpha}(z=0)=\eta_{\alpha\mu}u^\mu$. Then a new vector of solutions can be found by the following transformation
\begin{equation}\label{eq:solution transf classical}
g_{\alpha}(z) \mapsto a^\mu_\alpha(z) g_\mu(z) + b_\alpha(z),
\end{equation}
where $a^\mu_\alpha(z)=\delta^\mu_\alpha + \sum_{i > 0} a^\mu_{\alpha,i} z^i \in \mbC[[z]]$ and $b_\alpha(z)=\sum_{i>0} b_{\alpha,i} z^i \in \mbC[[z]]$.

\begin{theorem}\label{theorem:recursion->integrability classical}
Assume that $\oh \in \hLambda_N^{[0]}$ has the following properties:
\begin{itemize}
\item[(a)] there exist $N$ independent solutions $g_{\alpha}(z) = \sum_{p\geq 0} g_{\alpha,p-1} z^p\in \hcA_N^{[0]}[[z]]$, $\alpha=1,\ldots,N$, to the equation
\begin{equation}\label{eq:operator recursion classical}
\cD_{\oh} g_\alpha(z) = 0
\end{equation}
with the initial conditions $ g_{\alpha}(z=0)=\eta_{\alpha\mu}u^\mu$,
\item[(b)] $\displaystyle \frac{\delta \oh}{\delta u^1} = \frac{1}{2}\eta_{\mu \nu } u^\mu u^\nu + \d_x^2 r, \qquad r\in \hcA_N^{[-2]}$.
\end{itemize}
Then, up to a transformation of type (\ref{eq:solution transf classical}), we have
\begin{itemize}
\item[(i)] $\displaystyle g_{1,0} =  \frac{1}{2}\eta_{\mu \nu } u^\mu u^\nu + \d_x^2 (D-1)^{-1} r$,
\item[(ii)] $\og_{1,1} = \oh$,
\item[(iii)] $\displaystyle \{\og_{\alpha,p},\og_{\beta,q}\} = 0, \qquad \alpha,\beta=1,\ldots,N,\quad p,q\geq -1$,
\item[(iv)] $\displaystyle \{g_{\alpha,p},\og_{\beta,0}\} = \d_x \frac{\d g_{\alpha,p+1}}{\d u^\beta}, \qquad \beta=1,\ldots,N, \quad p\geq -1,$ 
\item[(v)] $\displaystyle \frac{\d g_{\alpha,p}}{\d u^1} = g_{\alpha,p-1}, \qquad \alpha=1,\ldots,N, \quad p\geq -1$,
\end{itemize}
hence in particular $\oh$ is part of an integrable tau-symmetric hierarchy.
\end{theorem}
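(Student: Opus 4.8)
The plan is to follow the structure of the proof of Theorem~\ref{theorem:recursion->integrability}, reading the recursion $\cD_{\oh}g_\alpha(z)=0$ in components: extracting the coefficient of $z^{p+1}$ gives $\d_x(D-1)g_{\alpha,p}=\{g_{\alpha,p-1},\oh\}$ for all $\alpha$ and all $p\ge 0$. Throughout, $\oh$ plays the role that $\oG_{1,1}$ plays in the quantum theorem; the genuinely new point, requiring the stronger hypothesis (b), is the \emph{identification} $\og_{1,1}=\oh$ in part (ii), which in the quantum statement was imposed as hypothesis (c). I would establish (i), then (iii)--(v) by the two-point-function induction, and finally (ii).

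For (i) I would specialize the component recursion to $\alpha=1$, $p=0$. Since $g_{1,-1}=\eta_{1\mu}u^\mu$ is linear, the bracket collapses to $\{g_{1,-1},\oh\}=\d_x\frac{\delta\oh}{\delta u^1}$, so hypothesis (b) gives $\d_x(D-1)g_{1,0}=\d_x\bigl(\frac12\eta_{\mu\nu}u^\mu u^\nu+\d_x^2 r\bigr)$. Stripping one $\d_x$ (which costs only an additive element of $\mbC[[\eps]]$) and using that $D$ commutes with $\d_x$, I can invert $D-1$ off its eigenvalue-$1$ subspace: the term $\frac12\eta_{\mu\nu}u^\mu u^\nu$ is a $D$-eigenvector of eigenvalue $2$ and is thus unchanged by $(D-1)^{-1}$, while $\d_x^2 r$ is sent to $\d_x^2(D-1)^{-1}r$. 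The residual kernel ambiguity (terms linear in the $u$'s with no $\eps$, and constants) is exactly what a transformation of type \eqref{eq:solution transf classical} absorbs, yielding $g_{1,0}=\frac12\eta_{\mu\nu}u^\mu u^\nu+\d_x^2(D-1)^{-1}r$.

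For (iii)--(v) I would run the induction producing the two-point densities $g_{\alpha,p;\beta,q}(x,y)$ exactly as in the quantum proof. Because the classical Poisson bracket is the $\hbar\to0$ limit of $\frac1\hbar[\cdot,\cdot]$, every step---the base cases, the Leibniz-type identity for $D$ against the bracket, and the Jacobi manipulations---transcribes verbatim with $\oh$ in place of $\oG_{1,1}$, giving the classical analogue of \eqref{eqmainlemma}, namely $\{g_{\alpha,p}(x),g_{\beta,q}(y)\}=\d_x g_{\alpha,p+1;\beta,q}(x,y)-\d_y g_{\alpha,p;\beta,q+1}(x,y)$, with the two-point densities symmetric under exchange of $(\alpha,p,x)$ and $(\beta,q,y)$. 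Integrating in $y$ gives $\{g_{\alpha,p},\og_{\beta,0}\}=\d_x\frac{\d g_{\alpha,p+1}}{\d u^\beta}$, which is (iv); integrating also in $x$ yields commutativity (iii); and setting $\beta=1$ in (iv) (so that $\og_{1,0}$ generates $\d_x$) gives the string equation (v).

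The main obstacle is (ii), $\og_{1,1}=\oh$, which unlike in the quantum case is not assumed. The engine here is that the $p=0$ recursion already reads $(D-1)g_{\alpha,0}=\frac{\delta\oh}{\delta u^\alpha}+\text{const}_\alpha$ for every $\alpha$, so $\oh$ is pinned down, up to the eigenvalue-$1$ (Casimir) ambiguity $\int \gamma_\alpha u^\alpha\,dx$, by the very data $g_{\alpha,0}$ that the hierarchy produces. I would show that this ambiguity is precisely of the form removable by a transformation \eqref{eq:solution transf classical}: the coefficients $a^\mu_{1,2}$ multiply $g_{\mu,-1}=\eta_{\mu\nu}u^\nu$ and hence shift exactly the linear-in-$u$ part of $g_{1,1}$, while the $b$'s absorb constants. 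The remaining normalization is then fixed by hypothesis (b): the remainder $\d_x^2 r$ being an exact double $x$-derivative with \emph{no} constant term (strictly stronger than the quantum (b), where $R\in(\hcA_N^\hbar)^{[\le-1]}$ and a constant $c$ were allowed) forces the eigenvalue-$1$ and constant components in the $\alpha=1$ direction to vanish; combined with the string equation (v) this should upgrade the equality of Hamiltonian vector fields to the equality of local functionals $\og_{1,1}=\oh$. I expect the delicate point to be verifying that the Casimir-type freedom $\int\gamma_\alpha u^\alpha\,dx$ in the directions $\alpha\neq1$ is genuinely exhausted by \eqref{eq:solution transf classical} and is not an obstruction, since such terms Poisson-commute with the whole hierarchy and are therefore invisible to the commutativity relations (iii) alone.
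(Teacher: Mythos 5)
Your handling of (i) and of (iii)--(v) coincides with the paper's: (i) is the $\alpha=1$, $p=0$ component of the recursion plus hypothesis (b), inverting $\d_x(D-1)$ modulo ambiguities absorbed by \eqref{eq:solution transf classical}, and (iii)--(v) are obtained by transcribing the two-point-function induction of the quantum proof verbatim, with $\oh$ playing the role of $\oG_{1,1}$ throughout (this is exactly how the paper disposes of these claims). The problem is (ii), which you correctly identify as the genuinely new content, but for which your argument has a real gap.

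Your ``engine'' --- that $(D-1)g_{\alpha,0}=\frac{\delta\oh}{\delta u^\alpha}+\mathrm{const}$ pins $\oh$ down, up to Casimirs $\int\gamma_\alpha u^\alpha\,dx$, in terms of the $g_{\alpha,0}$ --- only shows that the assignment $\oh\mapsto(g_{\alpha,0})$ is injective modulo Casimirs. It does not relate $\og_{1,1}$ to $\oh$: the functional $\og_{1,1}=\int g_{1,1}\,dx$ is produced by one \emph{further} application of the recursion, $g_{1,1}=(D-1)^{-1}\d_x^{-1}\{g_{1,0},\oh\}$, and nothing you establish identifies its variational derivatives with those of $\oh$. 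The ``equality of Hamiltonian vector fields'' that you propose to upgrade via the string equation is precisely what needs to be proved; it is equivalent (modulo Casimirs) to (ii) itself, so invoking it is circular --- and the Casimir-absorption issue you flag as the delicate point is in fact the easy part. What closes the gap in the paper is a concrete computation: writing $\oh=\int h\,dx$, one checks directly that
\begin{equation*}
\left\{\tfrac12\eta_{\mu\nu}u^\mu u^\nu,\oh\right\}=\d_x(D-1)h+\d_x^2 s,\qquad s\in\hcA^{[-1]}_N,
\end{equation*}
using $Dh=\sum_{k\ge0}(k+1)u^\alpha_k\frac{\d h}{\d u^\alpha_k}$ for $h$ of degree $0$. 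Combined with (i) and the fact that $\{\cdot,\oh\}$ commutes with $\d_x$, this gives $\{g_{1,0},\oh\}=\d_x(D-1)h+\d_x^2 s+\d_x^2\{(D-1)^{-1}r,\oh\}$, so $g_{1,1}=(D-1)^{-1}\d_x^{-1}\{g_{1,0},\oh\}$ equals $h$ plus $\d_x$-exact terms, whence $\og_{1,1}=\int h\,dx=\oh$ up to \eqref{eq:solution transf classical}. This is also where the stronger classical hypothesis (b) does its real work: the correction $\d_x^2(D-1)^{-1}r$ in $g_{1,0}$ contributes only the $\d_x$-exact term $\d_x^2\{(D-1)^{-1}r,\oh\}$, which dies under $\int\cdot\,dx$. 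Without this identity, or an equivalent substitute, your sketch proves (i) and (iii)--(v) but not (ii).
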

\begin{proof}
The only differences in the statement of this theorem from the classical limit of Theorem~\ref{theorem:recursion->integrability} are that hypothesis (b) has become stronger together with claim (i) and that hypothesis~(c) of Theorem \ref{theorem:recursion->integrability} has now become claim (ii) and so it needs to be proved. The proof of~(i) follows from equation (\ref{eq:operator recursion classical}):
$$\d_x (D-1) g_{1,0} = \{g_{1,-1},\oh\} =  \d_x \frac{\delta \oh}{\delta u^1} =\d_x\left( \frac{1}{2} \eta_{\mu\nu} u^\mu u^\nu + \d^2_x r\right).$$
Also from equation (\ref{eq:operator recursion classical}) we obtain that $g_{1,1} = (D-1)^{-1} \d_x^{-1} \{g_{1,0},\oh\}$. A direct computation shows that $\{\frac{1}{2}\eta_{\mu \nu} u^\mu u^\nu, \oh\} = \d_x (D-1) h+ \d_x^2 s$, where $h\in\hcA^{[0]}_N$, $s\in\hcA^{[-1]}_N$ with $\oh = \int h dx$, so we deduce $\{g_{1,0},\oh\} = \d_x (D-1) h+ \d_x^2 s + \d^2_x \{ (D-1)^{-1} r,\oh\}$, where we used that $D h = \left(\sum_{k\geq 0}(k+1)u^\alpha_k \frac{\d}{\d u^\alpha_k}\right)h$. This implies, always up to (\ref{eq:solution transf classical}),
$$
\og_{1,1}= \int \left[(D-1)^{-1} \d_x^{-1} \left(\d_x (D-1) h+ \d_x^2 s + \d^2_x \{(D-1)^{-1} r,\oh\}\right)\right] dx= \oh.
$$
\end{proof}

\begin{remark}
When we restrict to $\hbar=\eps=0$, a particular Hamiltonian satisfying conditions (a) and~(b) of Theorem \ref{theorem:recursion->integrability} is given by $\left.\oH\right|_{\hbar=\eps=0} =(D-2)\int F(u^1,\ldots,u^N) dx $, where the function $F=F(u^1,\ldots,u^N)$ is a solution to the WDVV equations
$$\frac{\d^3 F}{\d u^\alpha \d u^\beta \d u^\mu} \eta^{\mu \nu}\frac{\d^3 F}{\d u^\nu \d u^\gamma \d u^\delta} = \frac{\d^3 F}{\d u^\alpha \d u^\delta \d u^\mu} \eta^{\mu \nu}\frac{\d^3 F}{\d u^\nu \d u^\gamma \d u^\beta} ,$$
$$\frac{\d^3 F}{\d u^1 \d u^\alpha \d u^\beta} = \eta_{\alpha \beta}, $$
for $\alpha,\beta,\gamma,\delta = 1,\ldots,N$. This is because, at $\hbar=\eps=0$, equation (\ref{eq:operator recursion}) promptly reduces to an averaged (and hence weaker) form of genus $0$ topological recursion relations, and the WDVV equations are equivalent to the existence of $N$ independent solutions to such equations.  At that point, such $N$ solutions to $\eqref{eq:operator recursion}$ correspond to the $N$ generating functions of the classical ($\hbar=0$) dispersionless ($\eps=0$) Hamiltonian densities $g^{[0]}_{\alpha}(z) := \left. G_{\alpha}(z)\right|_{\hbar=\eps=0}$ of the principal hierarchy of the resulting (formal) Frobenius manifold, that is the $N$ flat coordinates of its deformed flat connection (see \cite{DZ05} for details). In such classical dispersionless context, Theorem \ref{theorem:recursion->integrability} is hence a generalization of results proved for instance in \cite{DZ05}.
\end{remark}

\begin{definition}
Let $\oH \in (\hLambda_N^\hbar)^{[\leq 0]}$ (resp. $\oh \in \hLambda_N^{[0]}$) satisfy the hypothesis of Theorem \ref{theorem:recursion->integrability} (resp. Theorem \ref{theorem:recursion->integrability classical}). Then we say that $\oH$ (resp. $\oh$) and the induced quantum (resp. classical) integrable tau-symmetric hierarchy are \emph{of double ramification (DR) type}.
\end{definition}

\begin{theorem}
The quantum double ramification hierarchy (\ref{density}) with $\oH=\oG_{1,1}$ and its classical limit are hierarchies of double ramification type. 
\end{theorem}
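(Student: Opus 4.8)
Both claims reduce to verifying that $\oH:=\oG_{1,1}$ satisfies hypotheses (a), (b), (c) of Theorem~\ref{theorem:recursion->integrability} and that $\oh:=\og_{1,1}$ satisfies hypotheses (a), (b) of Theorem~\ref{theorem:recursion->integrability classical}; the conclusions of those theorems then give the result. Hypothesis (c) is the definition of $\oH$. Hypothesis (a) is a mere repackaging of Theorem~\ref{theorem:recursion}: assembling the qDR densities into $G_\alpha(z):=\sum_{p\ge0}G_{\alpha,p-1}z^p$, the recursion~\eqref{eq:first recursion} $\d_x(D-1)G_{\alpha,p+1}=\frac1\hbar[G_{\alpha,p},\oG_{1,1}]$ (valid for $p\ge-1$) is precisely the coefficient-wise statement of $\cD^\hbar_{\oG_{1,1}}G_\alpha(z)=0$, the $z^0$-term holding because $G_{\alpha,-1}=\eta_{\alpha\mu}u^\mu$ is $(D-1)$-closed. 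The initial conditions $G_\alpha(z{=}0)=\eta_{\alpha\mu}u^\mu$ are built in, and the $N$ solutions are independent since their leading terms $\eta_{\alpha\mu}u^\mu$ are linearly independent ($\eta$ non-degenerate). The classical hypothesis (a) is the $\hbar\to0$ limit of the same computation.

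For the quantum hypothesis (b) I would use the string equation~\eqref{eq:string equation}, $\frac{\d G_{1,1}}{\d u^1}=G_{1,0}$: expanding the variational derivative, $\frac{\delta\oG_{1,1}}{\delta u^1}=\sum_{i\ge0}(-\d_x)^i\frac{\d G_{1,1}}{\d u^1_i}=G_{1,0}+\d_x(\cdots)$. Since $\oG_{1,0}=\int\frac12\eta_{\mu\nu}u^\mu u^\nu\,dx$ (proved in~\cite{BR15}), the density $G_{1,0}$ and $\frac12\eta_{\mu\nu}u^\mu u^\nu$ have equal integral and hence differ by $\d_x(\cdots)+c(\eps,\hbar)$; combining gives $\frac{\delta\oG_{1,1}}{\delta u^1}=\frac12\eta_{\mu\nu}u^\mu u^\nu+\d_x R+c(\eps,\hbar)$ with $R\in(\hcA^\hbar_N)^{[\le-1]}$ forced by the grading. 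This is (b).

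The main obstacle is the classical hypothesis (b), which is strictly stronger: it demands the $\d_x^2$-exact normal form $\frac{\delta\og_{1,1}}{\delta u^1}=\frac12\eta_{\mu\nu}u^\mu u^\nu+\d_x^2r$ with $r\in\hcA_N^{[-2]}$ and no constant term. The integral identity $\int\frac{\delta\og_{1,1}}{\delta u^1}\,dx=\frac{\d\og_{1,1}}{\d u^1}=\og_{1,0}=\int\frac12\eta_{\mu\nu}u^\mu u^\nu\,dx$ only yields $\frac{\delta\og_{1,1}}{\delta u^1}=\frac12\eta_{\mu\nu}u^\mu u^\nu+\d_x S$. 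Absence of the constant is easy: the coefficient of $u^1_0$ in the classical density $g_{1,1}$ vanishes, as such a term could only come from the excluded stratum $g=0$, $n=1$. The genuine content is to promote $\d_x S$ to $\d_x^2r$, i.e.\ to show that $S$ is itself a total $x$-derivative, equivalently $\frac{\delta S}{\delta u^\gamma}=0$ for all $\gamma$; this is not a formal manipulation.

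To carry this out I would first use the reflection symmetry $\sigma\colon u^\alpha_k\mapsto(-1)^ku^\alpha_k$ (that is, $x\mapsto-x$). As the DR integrals are even polynomials in the $a_i$ of degree $2g$, the $\eps^{2g}$-part of every classical density has jet-degree $2g$ and is therefore $\sigma$-even; a direct check shows $\frac{\delta}{\delta u^1}\sigma=\sigma\frac{\delta}{\delta u^1}$, so $\frac{\delta\og_{1,1}}{\delta u^1}$ is $\sigma$-even and $S$ must be $\sigma$-odd. This is necessary but not sufficient, so one is left to verify $\frac{\delta S}{\delta u^\gamma}=0$ genus by genus. At genus $1$ the correction has the shape $\eps^2\int P_{\beta\gamma}(u)\,u^\beta_x u^\gamma_x\,dx$, and a short computation reduces the desired $\d_x^2$-exactness to the closedness identities $\frac{\d P_{1\beta}}{\d u^\gamma}=\frac{\d P_{1\gamma}}{\d u^\beta}$ and $\frac{\d P_{\beta\gamma}}{\d u^1}=0$, which hold because $P_{\beta\gamma}$ is assembled from symmetric higher derivatives of the Frobenius potential $F$ (and the $\mathsf G$-function), using in particular the normalization $\frac{\d^3F}{\d u^1\d u^\alpha\d u^\beta}=\eta_{\alpha\beta}$. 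The higher-genus strata are handled by the same scheme with the Pixton-formula input for $\DR_g$; most efficiently, one invokes the explicit classical genus expansion of the DR Hamiltonians from~\cite{BDGR16}, which already exhibits them in this normal form. Establishing this $\d_x^2$-exactness is where I expect the real work to lie.
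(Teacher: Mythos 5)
Your quantum half coincides with the paper's own proof: hypothesis (a) of Theorem~\ref{theorem:recursion->integrability} is exactly the recursion \eqref{eq:first recursion}, hypothesis (c) holds by definition of the qDR hierarchy, and hypothesis (b) follows from the string equation \eqref{eq:string equation} combined with $\oG_{1,0}=\int\frac{1}{2}\eta_{\mu\nu}u^\mu u^\nu\,dx$, the degree bound on $R$ being forced by the grading. The classical hypothesis (a) as the $\hbar\to 0$ limit is also unproblematic.

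The genuine gap is in the classical hypothesis (b), exactly where you yourself located ``the real work'': you never establish the $\d_x^2$-exactness beyond genus $1$. The parity argument is, as you admit, only a necessary condition; ``the same scheme with the Pixton-formula input'' is a program, not an argument; and your fallback citation is inaccurate, since \cite{BDGR16} computes the classical DR hierarchy explicitly only in genus $1$ (in terms of genus-$0$ data and the $\mathsf{G}$-function) and contains no all-genus expansion exhibiting the Hamiltonians in the required normal form. What \cite{BDGR16} does contain --- and what the paper's proof actually invokes --- is a geometric divisibility lemma (recalled in Section~\ref{subsubsection:divisibility properties}): for $g,n\ge 1$ the polynomial class $\pi_*\left(\lambda_g\DR_g\left(-\sum a_i,a_1,\ldots,a_n\right)\right)$, where $\pi\colon\oM_{g,n+1}\to\oM_{g,n}$ forgets the last marked point, is divisible by $a_n^2$. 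Under the correspondence \eqref{eq:u-p change} between Fourier modes and jet variables, divisibility by the square of the frequency of a forgotten point carrying the unit $e_1$ is precisely what allows one to choose a density $g_{1,1}$ for $\og_{1,1}$ which is independent of $u^1_x$; then in $\frac{\delta\og_{1,1}}{\delta u^1}=\sum_{i\ge 0}(-\d_x)^i\frac{\d g_{1,1}}{\d u^1_i}$ the $i=1$ term is absent, the terms with $i\ge 2$ are manifestly $\d_x^2$-exact, and the $i=0$ term reduces, via the string equation, to $g_{1,0}$, whose higher-genus corrections are $\d_x^2$-exact by the same divisibility. This yields hypothesis (b) of Theorem~\ref{theorem:recursion->integrability classical} in all genera simultaneously; your proposal becomes a proof only once the genus-by-genus verification is replaced by this single geometric input (or an equivalent substitute).
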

\begin{proof}
Hypothesis (a) of Theorem \ref{theorem:recursion->integrability} is satisfied thanks to recursion (\ref{eq:first recursion}). Hypothesis~(b) follows for instance from the string equation (\ref{eq:string equation}) together with the fact that $\oG_{1,0} = \int\left( \frac{1}{2}\eta_{\mu\nu} u^\mu u^\nu\right) dx$ and hypothesis (c) holds by definition of double ramification hierarchy. For the classical counterpart, hypothesis (b) of Theorem \ref{theorem:recursion->integrability classical} is a consequence of the divibility, for $g,n\geq 1$, of $\pi_*\left(\lambda_g\DR_g(-\sum_{i=1}^n a_i,a_1,\ldots,a_n)\right)$ by $a_n^2$, where $\pi:\oM_{g,n+1}\to \oM_{g,n}$ forgets the last marked point, which was proved in \cite{BDGR16}, and which implies the possibility of finding a density for $\og_{1,1}$ which is independent of $u^1_x$.
\end{proof}

\subsection{Classification of rank $1$ quantum integrable hierarchies of DR type}

In this section we study quantum deformations of DR type of the Riemann hierarchy, which is the genus $0$ double ramification hierarchy associated to the trivial cohomological field theory with $V=\mbC\ni e_1$ and $c_{g,n}(e_1^{\otimes n}) = 1 \in H^0(\oM_{g,n},\mbQ)$. At first we concentrate on purely quantum deformations of the Riemann hierarchy, which means that, in this classification problem, the variable $\eps$ will not appear. This amounts to classifying quantum Hamiltonians of the form
$$
\oG_1 = \int  \frac{u^3}{6} dx + \sum_{k\geq 1} \oG_1^k \hbar^k, \qquad \oG_1^k \in \Lambda_1^{[\le 2k]},
$$
satisfying the hypothesis of Theorem \ref{theorem:recursion->integrability}, with $\eps=0$. An explicit computation gives, modulo terms proportional to the Casimir $\int u dx$, the following classification up to order $3$ in $\hbar$:

\begin{align}\label{eq:quantum classification}
\oG_1 = \int \left[ \frac{u^3}{6}+\left( a u_1{}^2\right) i \hbar +\left(b u_2{}^2\right) (i\hbar) ^2+\left(c u_2{}^3+\frac{10 b^2
   - c }{7 a}u_3{}^2\right) (i\hbar)^3+O\left(\hbar ^4\right)\right] dx.
\end{align}

In the above formula we assume $a\neq 0$. In case $a=0$, the computation gives $b=c=0$ too. Let us compare this formula with the Hamiltonian $\oG_1$ of the dispersionless (i.e. $\eps=0$) quantum DR hierarchy for a rank~$1$ cohomological field theory with $\eta_{1,1}=1$. According to~\cite{Teleman} such CohFTs are parameterized by numbers $r_1,r_2,\ldots$ in the following way:
\begin{gather}
c_{g,n}(e_1^{\otimes n})=e^{-\sum_{i\ge 1}\frac{(2i)!}{B_{2i}}r_i\Ch_{2i-1}(\mathbb E)}.
\end{gather}
Here $\Ch_{2i-1}$ denotes the $(2i-1)$-th component of the Chern character and the $B_{2i}$ are Bernoulli numbers (see also Section \ref{section:DR and DZ rank 1}). A direct computation along the line of Section \ref{subsection:DR hierarchy is standard} gives, up to the term $-\frac{i \hbar}{24}\int u dx$, exactly equation (\ref{eq:quantum classification}) with
$$a=-\frac{1}{2}r_1, \qquad b=-\frac{1}{12} r_2-\frac{2}{5}r_1^3,\qquad c=-\frac{7}{480} r_3 r_1+\frac{5}{72}r_2^2-\frac{1}{3} r_2 r_1^3-\frac{8}{25} r_1^6,$$
suggesting that dispersionless quantum deformations of the Riemann hierarchy are in one to one correspondence with rank $1$ cohomological field theories with $\eta_{1,1}=1$.\\

Assuming this correspondence, it is possible to recover dispersive deformations too by defining new parameters $s_i$ as follows
$$e^{-\sum_{i\ge 1}\frac{(2i)!}{B_{2i}}r_i\Ch_{2i-1}(\mathbb E)} = \Lambda\left(\frac{-\eps^2}{i \hbar}\right) e^{-\sum_{i\ge 1}\frac{(2i)!}{B_{2i}}s_i\Ch_{2i-1}(\mathbb E)}.$$
This amounts to
$$r_i = s_i +\frac{B_{2i}}{2i(2i-1)}\left(\frac{\eps^2}{i \hbar}\right)^{2i-1},$$
which gives
\begin{align*} a=&\frac{1}{i\hbar}\left(-\frac{\eps ^2}{24}-\frac{1}{2} s_1 i\hbar\right),\\
b=&\frac{1}{(i\hbar)^2}\left(-\frac{1}{120} s_1 \eps ^4-\frac{1}{10} s_1^2 \  i\hbar\eps ^2 -\left(\frac{2}{5} s_1^3 +\frac{1}{12} s_2\right) (i\hbar) ^2\right),\\
c =&\frac{1}{(i\hbar)^3}\left( \left(-\frac{1}{360} s_1^3-\frac{s_2}{1728}\right)\eps^6- \frac{24 s_1^4 +5 s_1 s_2 }{720}i \hbar \eps ^4-\frac{4608 s_1^5 +2400 s_2 s_1^2 +35 s_3 }{28800}(i\hbar)^2\eps ^2\right.\\
&\left.-\frac{ 2304 s_1^6+2400 s_2 s_1^3+105 s_3 s_1-500 s_2^2 }{7200}(i\hbar)^3\right).\end{align*}
Once plugged into (\ref{eq:quantum classification}), this parametrization provides the quantum correction to the density (\ref{eq:standard density}) or (\ref{eq:DZ standard density}) up to genus $3$. Rescaling $\eps^2\to\eps^2 \gamma$ and $\hbar \to \hbar \gamma$ to keep track of the genus, we obtain
\begin{equation*}
\begin{split}
\oG_1=\int &\left[\frac{u^3}{6}+\left(\left(-\frac{\eps ^2}{24}-\frac{s_1}{2}  i\hbar\right)u_1^2-\frac{i \hbar}{24} u\right)\gamma\right.\\
&\left.+\left(\left(-\frac{s_1}{120}  \eps ^4-\frac{s_1^2}{10}   i\hbar\eps ^2 -\frac{24 s_1^3 +5 s_2}{60} (i\hbar) ^2\right)u_2^2\right)\gamma^2\right.\\
&\left.+\left(\left( -\frac{s_1^3}{360} \eps^6-\frac{s_2}{1728}\eps^6- \frac{24 s_1^4 +5 s_1 s_2 }{720}i \hbar \eps^4-\frac{4608 s_1^5 +2400 s_2 s_1^2 +35 s_3 }{28800}(i\hbar)^2\eps ^2\right.\right.\right.\\
&\left.\left.\left.\hspace{0.7cm}-\frac{2304 s_1^6+2400 s_2 s_1^3+105 s_3 s_1-500 s_2^2 }{7200}(i\hbar)^3\right)u_2^3+\left(-\frac{s_1^2}{420}  \eps ^6-\frac{96 s_1^3+5 s_2}{2520}i \hbar \eps^4\right.\right.\right.\\
&\left.\left.\left.\hspace{0.7cm}- \frac{24 s_1^4+5 s_2 s_1}{105}(i\hbar)^2\eps^2-\frac{4608 s_1^5+2400 s_2 s_1^2+35 s_3 }{8400}(i\hbar ^3)\right)u_3^2\right)\gamma^3\right.\\
&\left.+O\left(\gamma^4\right)\right]dx.
\end{split}
\end{equation*}

%%%%%%%%%%%%%%%%%%%%%%%%%%%%%%%%%%%%%%%%%%%%%%%%%%%%%%%%%%%%%%%%%%%%%%%
%%%%%%%%%%%%%%%%%%%%%%%%%%%%%%%%%%%%%%%%%%%%%%%%%%%%%%%%%%%%%%%%%%%%%%%

\section{Geometric formula for the double ramification correlators}\label{section:geometric formula}

The goal of this section is to prove a geometric formula for the double ramification correlators. In Section~\ref{subsection:DR correlators} we recall the construction of these correlators from~\cite{BDGR16}. In Section~\ref{subsection:stable trees} we introduce certain cohomology classes in~$\oM_{g,n}$. They are used in the formulation of the geometric formula for the double ramification correlators in Section~\ref{subsection:geometric formula}. In Section~\ref{subsection:main formulas} we collect main formulas with the double ramification cycles and then use them in Section~\ref{subsection:proof of the geometric formula} for the proof of the geometric formula.

Let $c_{g,n}\colon V^{\otimes n}\to H^\even(\oM_{g,n},\mbC)$ be an arbitrary cohomological field theory, where $V$ is an $N$-dimensional vector space, $\eta$ is its metric tensor, $e_1,\ldots,e_N$ is a basis in $V$ such that $e_1$ is the unit.

\subsection{Double ramification correlators}\label{subsection:DR correlators}

Here we briefly recall the construction of the double ramification correlators from~\cite{BDGR16}. Define differential polynomials $h^\DR_{\alpha,d}\in\hcA^{[0]}_N$, $d\ge -1$, by
$$
h^\DR_{\alpha,d}:=\frac{\delta\og_{\alpha,d+1}}{\delta u^1}.
$$
For $1\le\alpha,\beta\le N$ and $p,q\ge 0$ there exists a unique differential polynomial $\Omega_{\alpha,p;\beta,q}^\DR\in\hcA^{[0]}_N$ such that
$$
\d_x\Omega_{\alpha,p;\beta,q}^\DR=\frac{\d h^\DR_{\alpha,p-1}}{\d t^\beta_q}=\left\{h^\DR_{\alpha,p-1},\og_{\beta,q}\right\}_{\eta\d_x}\quad\text{and}\quad\left.\Omega^\DR_{\alpha,p;\beta,q}\right|_{u^*_*=0}=0.
$$
The string solution $(u^\str)^\alpha(x,t^*_*,\eps)$ of the double ramification hierarchy is specified by the initial condition
$$
\left.(u^\str)^\alpha\right|_{t^*_*=0}=\delta^{\alpha,1}x.
$$
Let $(u^\str)^\gamma_n:=\d_x^n(u^\str)^\gamma$. Then there exists a unique power series $F^\DR(t^*_*,\eps)\in\mbC[[t^*_*,\eps^2]]$ such that
\begin{align}
&\frac{\d^2 F^\DR}{\d t^\alpha_p\d t^\beta_q}=\left.\left(\left.\Omega^\DR_{\alpha,p;\beta,q}\right|_{u^\gamma_n=(u^\str)^\gamma_n}\right)\right|_{x=0},\notag\\
&\frac{\d F^\DR}{\d t^1_0}=\sum_{n\ge 0}t^\alpha_{n+1}\frac{\d F^\DR}{\d t^\alpha_n}+\frac{1}{2}\eta_{\alpha\beta}t^\alpha_0 t^\beta_0,\label{eq:string for DR}\\
&\frac{\d F^\DR}{\d t^1_1}=\sum_{n\ge 0}t^\alpha_n\frac{\d F^\DR}{\d t^\alpha_n}+\eps\frac{\d F^\DR}{\d\eps}-2F^\DR+\eps^2\frac{N}{24},\label{eq:dilaton for DR}\\
&\left.\Coef_{\eps^2}F^\DR\right|_{t^*_*=0}=0.\notag
\end{align}
We see that the first equation here determines $F^\DR$ uniquely up to constant and linear terms in the variables $t^\alpha_p$. The other equations fix this ambiguity. The power series $F^\DR$ is called the double ramification potential. Let 
$$
F^\DR(t^*_*,\eps)=\sum_{g\ge 0}\eps^{2g}F^\DR_g(t^*_*).
$$ 
The double ramification correlators $\<\tau_{d_1}(e_{\alpha_1})\ldots\tau_{d_n}(e_{\alpha_n})\>_g^\DR$ are defined as the coefficients of the expansion of $F^\DR_g$:
$$
F^\DR_g=\sum_{n\ge 0}\sum_{d_1,\ldots,d_n\ge 0}\<\tau_{d_1}(e_{\alpha_1})\ldots\tau_{d_n}(e_{\alpha_n})\>_g^\DR\frac{t^{\alpha_1}_{d_1}\ldots t^{\alpha_n}_{d_n}}{n!}.
$$
In~\cite[Sections 6.6,~6.7]{BDGR16} we proved that a double ramification correlator $\<\tau_{d_1}(e_{\alpha_1})\ldots\tau_{d_n}(e_{\alpha_n})\>_g^\DR$ vanishes unless 
$$
2g-2+n>0\quad\text{and}\quad 2g-1\le\sum d_i\le 3g-3+n.
$$

%%%%%%%%%%%%%%%%%%%%%%%%%%%%%%%%%%%%%%%%%%%%%%%%%%%%%%%%%%%%%%%%%%%%%

\subsection{Stable trees and cohomology classes in $\oM_{g,n}$}\label{subsection:stable trees}

In this section we collect notations and definitions related to stable graphs that will be needed for the formulation of our geometric formula for the double ramification correlators. We will use the notations from~\cite[Sections 0.2 and 0.3]{PPZ15}.

By stable tree we mean a stable graph
$$
\Gamma=(V,H,L,g\colon V\to\mbZ_{\ge 0},v\colon H\to V, \iota\colon H\to H),
$$
that is a tree. Let $H^e(\Gamma):=H(\Gamma)\backslash L(\Gamma)$. A path in $\Gamma$ is a sequence of pairwise distinct vertices $v_1,v_2,\ldots,v_k\in V$, $v_i\ne v_j$ for $i\ne j$, such that for any $1\le i\le k-1$ the vertices $v_i$ and $v_{i+1}$ are connected by an edge. For a vertex $v\in V(\Gamma)$ define a number $r(v)$ by
$$
r(v):=2g(v)-2+n(v).
$$ 

A stable rooted tree is a pair $(\Gamma,v_0)$, where $\Gamma$ is a stable tree and $v_0\in V(\Gamma)$. The vertex~$v_0$ is called the root. Denote by~$H_+(\Gamma)$ the set of half-edges of $\Gamma$ that are directed away from the root $v_0$. Clearly, $L(\Gamma)\subset H_+(\Gamma)$. Let $H^e_+(\Gamma):=H_+(\Gamma)\backslash L(\Gamma)$. A vertex $w$ is called a descendant of a vertex $v$, if $v$ is on the unique path from the root $v_0$ to $w$. Note that according to our definition the vertex $v$ is a descendant of itself. Denote by $\Desc[v]$ the set of all descendants of~$v$.

Let $g\ge 0$ and $m,n\ge 1$. Denote by $\ST^m_{g,n+1}$ the set of stable trees of genus~$g$ with $m$ vertices and with~$n+1$ legs marked by numbers $0,1,\ldots,n$. For a stable tree $\Gamma\in\ST^m_{g,n+1}$ denote by~$l_i(\Gamma)$ the leg in~$\Gamma$ that is marked by~$i$. We will always choose the vertex $v(l_0(\Gamma))$ as a root of~$\Gamma$. In this way a stable tree from~$\ST^m_{g,n+1}$ automatically becomes a stable rooted tree. For a leg $l\in L(\Gamma)$ denote by $0\le i(l)\le n$ its marking.

Consider a stable tree $\Gamma\in\ST^m_{g,n+1}$. We have the associated moduli space
$$
\oM_{\Gamma}:=\prod_{v\in V}\oM_{g(v),n(v)}
$$
and the canonical morphism 
$$
\xi_\Gamma\colon\oM_{\Gamma}\to\oM_{g(\Gamma),|L(\Gamma)|}.
$$
Consider integers $a_0,a_1,\ldots,a_n$ such that $a_0+a_1+\ldots+a_n=0$. To each half-edge $h\in H(\Gamma)$ we assign an integer~$a(h)$ in such a way that the following conditions hold:
\begin{itemize}

\item[a)] If $h\in L(\Gamma)$, then $a(h)=a_{i(l)}$;

\item[b)] If $h\in H^e(\Gamma)$, then $a(h)+a(\iota(h))=0$;

\item[c)] For any vertex $v\in V(\Gamma)$, we have $\sum_{h\in H[v]}a(h)=0$.

\end{itemize}
Since the graph~$\Gamma$ is a tree, it is easy to see that such a function $a\colon H(\Gamma)\to\mbZ$ exists and is uniquely determined by the numbers $a_0,a_1,\ldots,a_n$. For each moduli space $\oM_{g(v),n(v)}$, $v\in V(\Gamma)$, the numbers $a(h)$, $h\in H[v]$, define the double ramification cycle
$$
\DR_{g(v)}\left((a(h))_{h\in H[v]}\right)\in H^{2g(v)}(\oM_{g(v),n(v)},\mbQ).
$$
If we multiply all these cycles, we get the class
$$
\prod_{v\in V(\Gamma)}\DR_{g(v)}\left((a(h))_{h\in H[v]}\right)\in H^{2g}(\oM_\Gamma,\mbQ).
$$
We define a class $\DR_\Gamma(a_0,a_1,\ldots,a_n)\in H^{2(g+m-1)}(\oM_{g,n+1},\mbQ)$ by 
\begin{gather*}
\DR_\Gamma(a_0,a_1,\ldots,a_n):=\left(\prod_{h\in H^e_+(\Gamma)}a(h)\right)\cdot \xi_{\Gamma*}\left(\prod_{v\in V(\Gamma)}\DR_{g(v)}\left((a(h))_{h\in H[v]}\right)\right).
\end{gather*}
Note that in the case when the valency of some vertex $v$ in $\Gamma$ is equal to one, the class $\DR_\Gamma(a_0,a_1,\ldots,a_n)$ is equal to zero. This happens because, if $h$ is the half-edge incident to~$v$, then, obviously, $a(h)=0$. From Hain's formula~\cite{Hai11} it follows that for an arbitrary stable tree $\Gamma\in\ST^m_{g,n+1}$ the class
$$
\lambda_g\DR_\Gamma\left(-\sum_{i=1}^n a_i,a_1,\ldots,a_n\right)\in H^{2(2g+m-1)}(\oM_{g,n+1},\mbQ)
$$
is a polynomial in $a_1,\ldots,a_n$ homogeneous of degree $2g+m-1$. 

For a stable tree $\Gamma\in\ST_{g,n+1}^m$ define a combinatorial coefficient $C(\Gamma)$ by
$$
C(\Gamma):=\prod_{v\in V(\Gamma)}\frac{r(v)}{\sum_{\tv\in\Desc[v]}r(\tv)}.
$$

%%%%%%%%%%%%%%%%%%%%%%%%%%%%%%%%%%%%%%%%%%%%%%%%%%%%%%%%%%%%%%%%%%%%%

\subsection{Geometric formula for the correlators}\label{subsection:geometric formula}

Recall that a double ramification correlator $\<\tau_{d_1}(e_{\alpha_1})\ldots\tau_{d_n}(e_{\alpha_n})\>^{\DR}_g$ vanishes unless $\sum d_i\ge 2g-1$ (see~\cite[Section 6.7]{BDGR16}). 

\begin{theorem}\label{theorem:main geometric formula}
Suppose $g\ge 0$, $n\ge 1$ and $2g-2+n>0$. Let $d\ge 2g-1$ and $1\le\alpha_1,\ldots,\alpha_n\le N$. Then we have the following equality of polynomials in $a_1,\ldots,a_n$ of degree $d$:
\begin{multline}\label{eq:main geometric formula}
\sum_{\substack{d_1,\ldots,d_n\ge 0\\\sum d_i=d}}\<\tau_{d_1}(e_{\alpha_1})\ldots\tau_{d_n}(e_{\alpha_n})\>^{\DR}_ga_1^{d_1}\ldots a_n^{d_n}=\\
=\frac{1}{\sum a_i}\sum_{\Gamma\in\ST^{d-2g+2}_{g,n+1}}C(\Gamma)\int_{\oM_{g,n+1}}\DR_\Gamma\left(-\sum a_i,a_1,\ldots,a_n\right)\lambda_g c_{g,n+1}\left(e_1\otimes\otimes_{i=1}^n e_{\alpha_i}\right).
\end{multline}
\end{theorem}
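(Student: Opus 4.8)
The plan is to prove \eqref{eq:main geometric formula} by induction on the number of vertices $m=d-2g+2$ of the stable trees $\Gamma\in\ST^{d-2g+2}_{g,n+1}$, using the classical form of the recursion for the DR Hamiltonian densities, $\d_x(D-1)g_{\alpha,p+1}=\{g_{\alpha,p},\og_{1,1}\}$ (the $\hbar=0$ limit of \eqref{eq:first recursion}), as the engine that attaches exactly one new vertex at each step. The three analytic operations in the recursion are translated into geometry as follows: the Poisson bracket $\{\,\cdot\,,\og_{1,1}\}$ becomes a gluing of double ramification cycles along a new node (handled by the splitting/restriction identities for $\DR$ cycles collected in Section~\ref{subsection:main formulas} and going back to \cite{BSSZ15}), the operator $\d_x^{-1}$ produces the overall prefactor $\tfrac{1}{\sum a_i}$, and the operator $(D-1)^{-1}$ produces the combinatorial weights $C(\Gamma)$.

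First I would bridge the gap between the correlators (defined via $F^\DR$, the two–point functions $\Omega^\DR_{\alpha,p;\beta,q}$, and the string solution) and the densities $g_{\alpha,d}$. Starting from $\d_x\Omega^\DR_{\alpha,p;\beta,q}=\{h^\DR_{\alpha,p-1},\og_{\beta,q}\}$ with $h^\DR_{\alpha,d}=\delta\og_{\alpha,d+1}/\delta u^1$, and using the string equation \eqref{eq:string for DR} to strip off $\tau_0(e_1)$ insertions, I would rewrite the left–hand generating series $\sum_{\sum d_i=d}\<\prod_i\tau_{d_i}(e_{\alpha_i})\>^\DR_g\,a_1^{d_1}\cdots a_n^{d_n}$ as a suitable coefficient, in the variables $p^{\alpha_i}_{a_i}$, of an iterated–recursion expression; since the densities carry the factor $e^{ix\sum a_i}$, the single outer $\d_x^{-1}$ acts in $p$–variables as division by $\sum a_i$ (up to the standard $i$–normalization), which accounts for the prefactor $\tfrac{1}{\sum a_i}$. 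In the base case $m=1$, i.e. $d=2g-1$, the only tree is the one–vertex tree, $C(\Gamma)=r(v_0)/r(v_0)=1$, and the claim reduces to matching the lowest–descendant correlators against the single integral $\frac{1}{\sum a_i}\int_{\oM_{g,n+1}}\DR_g(-\sum a_i,a_1,\ldots,a_n)\lambda_g c_{g,n+1}(e_1\otimes\otimes_{i=1}^n e_{\alpha_i})$, which follows directly from the definition of the densities.

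For the inductive step I would apply the recursion once, raising $d$ by $1$. The bracket with $\og_{1,1}$ is an integral against $\lambda_{g'}\DR_{g'}$ with a unit insertion, so by the gluing formulas it inserts one further vertex in every admissible way, sending each $\Gamma\in\ST^{m}_{g,n+1}$ to trees in $\ST^{m+1}_{g,n+1}$ and producing precisely the class $\DR_\Gamma$ with its edge factor $\prod_{h\in H^e_+(\Gamma)}a(h)$ and the pushforward $\xi_{\Gamma*}$. The weights then come from $(D-1)^{-1}$: since $D$ assigns to a term supported on a single component of type $(g,\,n+1)$ the value $2g+n$, the operator $(D-1)$ acts there by $2g-2+(n+1)=r(v)$, and, using that $\lambda_g\DR_\Gamma$ is homogeneous of degree $2g+m-1$ in the $a_i$, this generalizes so that at the stage where a subtree rooted at $w$ has been assembled, $(D-1)$ acts as multiplication by the cumulative quantity $\sum_{\tv\in\Desc[w]}r(\tv)$. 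Iterating the recursion therefore accumulates exactly the product $\prod_{v\in V(\Gamma)}\frac{r(v)}{\sum_{\tv\in\Desc[v]}r(\tv)}=C(\Gamma)$, with leaves contributing the trivial factor $1$.

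The hard part will be matching the coefficient $C(\Gamma)$ exactly, and this for two intertwined reasons. First, the denominators $\sum_{\tv\in\Desc[v]}r(\tv)$ are global sums over descendant subtrees rather than local data, so one must carefully track how the eigenvalues of $(D-1)$ accumulate through the iterated recursion and verify that each numerator $r(v)$ is generated with the correct multiplicity at the moment its vertex is glued in. Second, the gluing step must distribute the new vertex over all attachment points so that every genuine tree in $\ST^{m+1}_{g,n+1}$ is produced once and only once, while respecting the vanishing of $\DR_\Gamma$ at valence-one vertices and the edge-weight factor $\prod_{h\in H^e_+(\Gamma)}a(h)$. Controlling these two bookkeeping problems simultaneously—so that no spurious trees survive and each admissible tree appears with weight exactly $C(\Gamma)$—is where the main effort lies, and it is precisely what the $\DR$-cycle identities of Section~\ref{subsection:main formulas} are tailored to supply.
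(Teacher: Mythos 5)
Your proposal correctly identifies the algebraic skeleton: the bracket with $\og_{1,1}$ glues on vertices, $\d_x^{-1}$ accounts for the prefactor $\frac{1}{\sum a_i}$, and $(D-1)^{-1}$ is responsible for the denominators in $C(\Gamma)$. This is genuinely cognate to the paper's mechanism, since the recursion \eqref{eq:first recursion} is an integrated form of the $\psi\cdot\DR$ splitting formula \eqref{eq:DR times psi} of \cite{BSSZ15}, which is exactly what drives the induction in the paper. But there is a real gap: you never connect the left-hand side of \eqref{eq:main geometric formula} to the densities $g_{\alpha,p}$ on which your recursion operates. The DR correlators are coefficients of $F^\DR$, i.e.\ of the two-point functions $\Omega^\DR_{\alpha,p;\beta,q}$ evaluated on the string solution $(u^\str)^\gamma$ at $x=0$; only two of the $n$ insertions are ``explicit'' in $\Omega^\DR$, while the remaining ones arise from Taylor-expanding the string solution in the times $t^*_*$, and that expansion is itself an iterated, tree-shaped object. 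The string equation \eqref{eq:string for DR} only strips off $\tau_0(e_1)$ insertions; it does not reduce $n$-point correlators to densities. This bridge is precisely the content of the formula from \cite[Section 6.6.3]{BDGR16} (correlators with a $\tau_0(e_1)$ insertion as coefficients of $b_1\cdots b_{2g+n-1}$ in sums over admissible modified stable trees), which is where the paper's proof \emph{starts}. It is also why your base case fails as stated: formula \eqref{eq:geometric formula for k=0} for $d=2g-1$ does not ``follow directly from the definition of the densities'' --- it is a nontrivial special case of the theorem requiring exactly this bridge.

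Even granting the bridge, the combinatorial half of your plan remains a heuristic. The eigenvalue of $(D-1)$ on a monomial is the total homogeneity of that monomial, so iterating the recursion produces denominators $\sum_{v\in T_k}r(v)$ over a nested sequence of partial trees $T_1\subset\cdots\subset T_m=\Gamma$ recording the order in which $\Gamma$ is assembled; one must then sum over all assembly orders and all attachment points, include the numerators $r(v)$ (which in the paper arise from the $\rho/r$ coefficients of \eqref{eq:DR times psi}), and show the whole sum collapses to $\prod_v r(v)/\sum_{\tv\in\Desc[v]}r(\tv)$ with no spurious terms. That is exactly the identity the paper proves in Section \ref{subsubsection:relation} by double induction on $d$ and $n$, with the two-case analysis according to whether $v(l_1(\Gamma))$ is the root --- but only after reducing the theorem, via the divisibility properties of Section \ref{subsubsection:divisibility properties} and the map $\phi$, to the purely cohomological relation \eqref{eq:relation in cohomology}, where such bookkeeping becomes tractable. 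So your outline names the right ingredients, but both halves of the actual proof --- the correlator-to-density bridge and the weight bookkeeping --- are left unproven, and the first of these is not a routine step.
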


Note that in the case $d=2g-1$ formula~\eqref{eq:main geometric formula} becomes particularly simple:
\begin{multline}\label{eq:geometric formula for k=0}
\sum_{\substack{d_1,\ldots,d_n\ge 0\\\sum d_i=2g-1}}\<\tau_{d_1}(e_{\alpha_1})\ldots\tau_{d_n}(e_{\alpha_n})\>^{\DR}_ga_1^{d_1}\ldots a_n^{d_n}=\\
=\frac{1}{\sum a_i}\int_{\oM_{g,n+1}}\DR_g\left(-\sum a_i,a_1,\ldots,a_n\right)\lambda_g c_{g,n+1}\left(e_1\otimes\otimes_{i=1}^n e_{\alpha_i}\right).
\end{multline}
We will prove Theorem~\ref{theorem:main geometric formula} in Section~\ref{subsection:proof of the geometric formula}.

%%%%%%%%%%%%%%%%%%%%%%%%%%%%%%%%%%%%%%%%%%%%%%%%%%%%%%%%%%%%%%%%%%%%%

\subsection{Main formulas with the double ramification cycles}\label{subsection:main formulas}

Here we collect main formulas with the double ramification cycles that we will use later.

\subsubsection{Double ramification cycle and fundamental class} Suppose $\pi\colon\oM_{g,n+g}\to\oM_{g,n}$ is the forgetful map, that forgets the last~$g$ marked points. Then we have \cite[Example 3.7]{BSSZ15}
\begin{gather}\label{eq:DR and fundamental class}
\pi_*\DR_g(a_1,\ldots,a_{n+g})=g!a_{n+1}^2\ldots a_{n+g}^2[\oM_{g,n}].
\end{gather}

\subsubsection{Divisibility properties}\label{subsubsection:divisibility properties} Let $g,n\ge 1$. Suppose $\pi\colon\oM_{g,n+1}\to\oM_{g,n}$ is the forgetful map that forgets the last marked point. Then the polynomial class 
\begin{gather*}
\left.\pi_*\DR_g\left(-\sum a_i,a_1,a_2,\ldots,a_n\right)\right|_{\cM_{g,n}^\ct}\in H^{2g-2}(\cM^{\ct}_{g,n},\mbQ)
\end{gather*}
is divisible by $a_n^2$.

Suppose $g,n,m\ge 1$. Then we have \cite[Section 5.1]{BDGR16}
\begin{align}
&\int_{\DR_g(-\sum a_i-\sum b_j,a_1,\ldots,a_n,b_1,\ldots,b_m)}\lambda_g\psi_2^d c_{g,n+m+1}(\otimes_{i=1}^{n+1} e_{\alpha_i}\otimes e_1^m)=\label{eq:divisibility}\\
=&\begin{cases}
\int_{\DR_g(-\sum a_i-\sum b_j,a_1+\sum b_j,a_2,\ldots,a_n)}\lambda_g\psi_2^{d-m}c_{g,n+1}(\otimes_{i=1}^{n+1}e_{\alpha_i})+O(b_1^2)+\ldots+O(b_m^2),&\text{if $d\ge m$};\\
O(b_1^2)+\ldots+O(b_m^2),&\text{if $d<m$}.
\end{cases}
\notag
\end{align}

\subsubsection{Double ramification cycle times a $\psi$-class}

Here we recall the formula from~\cite{BSSZ15} for the product of the double ramification cycle with a $\psi$-class. Denote by 
$$
\gl_k\colon\oM_{g_1,n_1+k}\times\oM_{g_2,n_2+k}\to\oM_{g_1+g_2+k-1,n_1+n_2}
$$
the gluing map that corresponds to gluing a curve from~$\oM_{g_1,n_1+k}$ to a curve from~$\oM_{g_2,n_2+k}$ along the last $k$ marked points on the first curve and the last $k$ marked points on the second curve. Suppose $n,m\ge k\ge 1$ and $a_1,\ldots,a_n$ and $b_1,\ldots,b_m$ are lists of integers with vanishing sums. Let
\begin{align*}
&\DR_{g_1}(a_1,\ldots,a_n)\boxtimes_k\DR_{g_2}(b_1,\ldots,b_m):=\\
=&\gl_{k*}\left(\DR_{g_1}(a_1,\ldots,a_n)\times\DR_{g_2}(b_1,\ldots,b_m)\right)\in H^{2(g_1+g_2+k)}(\oM_{g_1+g_2+k-1,n+m-2k},\mbQ).
\end{align*}

Let $a_1,\ldots,a_n$ be a list of integers with vanishing sum. Assume that $a_s \ne 0$ for some $1\le s\le n$. Then we have~\cite[Theorem 4]{BSSZ15}
\begin{align}
&a_s\psi_s \DR_g(a_1, \dots, a_n)=\label{eq:DR times psi}\\
=&\sum_{\substack{I\sqcup J=\{1,\ldots,n\}\\\sum_{i\in I}a_i>0}}\sum_{p\ge 1}\sum_{\substack{g_1,g_2\ge 0\\g_1+g_2+p-1=g}}\sum_{\substack{k_1,\ldots,k_p\ge 1\\\sum k_j=\sum_{i\in I}a_i}}\frac{\rho}{r}\frac{\prod_{i=1}^p k_i}{p!}\DR_{g_1}(a_I,-k_1,\ldots,-k_p)\boxtimes_p\DR_{g_2}(a_J,k_1,\ldots,k_p),\notag
\end{align}
where $a_I$ denotes the list $(a_i)_{i\in I}$, $r=2g-2+n$ and
$$
\rho=
\begin{cases} 
2g_2-2+|J|+p,    &\text{if $s\in I$},\\
-(2g_1-2+|I|+p), &\text{if $s\in J$}.
\end{cases}
$$

%%%%%%%%%%%%%%%%%%%%%%%%%%%%%%%%%%%%%%%%%%%%%%%%%%%%%%%%%%%%%%%%%%%%%

\subsection{Proof of the geometric formula}\label{subsection:proof of the geometric formula}

In this section we prove Theorem~\ref{theorem:main geometric formula}. The plan is the following. In Section~\ref{subsubsection:more about stable trees} we put combinatorial definitions and constructions that we will need for the proof. In Section~\ref{subsubsection:map phi} we show how to use the combinatorial map~$\phi$ defined in Section~\ref{subsubsection:more about stable trees} in order to simplify the geometric formula for the double ramification correlators from our previous work~\cite{BDGR16}. From this simplification we will see that Theorem~\ref{theorem:main geometric formula} follows from a certain relation in the cohomology of the moduli space of curves. This relation is proved in Section~\ref{subsubsection:relation}.

\subsubsection{More about stable trees}\label{subsubsection:more about stable trees}

In this section we collect a combinatorial material related to stable trees that we will need in the proof of the geometric formula. We will partly repeat the material from~\cite[Section 6.6.2]{BDGR16}.

Let $\Gamma\in\ST^m_{g,n+1}$. Introduce the following notations:
$$
L'(\Gamma):=L(\Gamma)\backslash\{l_0(\Gamma)\},\qquad H_+'(\Gamma):=H_+(\Gamma)\backslash\{l_0(\Gamma)\}.
$$
Clearly, for any vertex $v\in V(\Gamma)$ the set $H[v]\backslash H_+'[v]$ consists of exactly one element. The stable tree $\Gamma$ will be called admissible, if the following two conditions are satisfied:
\begin{itemize}
\item[a)] For any vertex $v\in V(\Gamma)$ we have $|L'[v]|\ge 1$;
\item[b)] For any two distinct vertices $v_1,v_2\in V(\Gamma)$ such that $v_2$ is a descendant of $v_1$ we have
$$
\min_{l\in L'[v_1]}i(l)<\min_{l\in L'[v_2]}i(l).
$$
\end{itemize}
The set of all admissible stable trees will be denoted by $\AST^m_{g,n+1}\subset\ST^m_{g,n+1}$.

Consider a stable tree $\Gamma\in\ST^m_{g,n+1}$ a vertex $v\in V(\Gamma)$ and a half-edge $h\in H^e_+[v]$. Denote by~$\Gamma_h$ the stable rooted tree formed by the descendants of $v(\iota(h))$ and all half-edges incident to them together with the vertex $v(\iota(h))$ as a root (see Fig.~\ref{fig:Gammah}).
\begin{figure}[t]
\begin{tikzpicture}[scale=0.4]
\draw (0,0) -- (-3,0);
\draw (0,0) -- (3,4);
\draw (0,0) -- (3,0);
\draw (0,0) -- (3,-3.5);
\draw (3,0) -- (6,1.5);
\draw (3,0) -- (6,0);
\draw (3,0) -- (6,-1.5);
\draw (3,4) -- (6,5.5);
\draw (3,4) -- (6,3.5);
\draw (6,5.5) -- (9,7);
\draw (6,5.5) -- (9,5.5);
\draw (6,5.5) -- (9,4);

\fill[white] (0,0) circle(7mm);
\draw (0,0) circle(7mm);
\fill[white] (3,0) circle(7mm);
\draw (3,0) circle(7mm);
\fill[white] (3,4) circle(7mm);
\draw (3,4) circle(7mm);
\fill[white] (6,5.5) circle(7mm);
\draw (6,5.5) circle(7mm);

\draw[dashed] (4.5,0) circle(2.8);

\coordinate [label=center: $\Gamma_h$] () at (7.9,-1.5);
\coordinate [label=above: $\scriptstyle{h}$] () at (1,-0.15);

\end{tikzpicture}
\caption{}
\label{fig:Gammah}
\end{figure}

Let us define splitting and contracting operations on stable trees. Consider a stable tree $\Gamma\in\ST^m_{g,n+1}$, a vertex $v\in V(\Gamma)$, a subset $I\subset H'_+[v]$ and an integer $0\le g_1\le g(v)$ such that $2g_1+|I|>0$ and $2g_2+|I^c|-1>0$, where $I^c:=H'_+[v]\backslash I$ and $g_2:=g(v)-g_1$. We define a stable tree~$\Spl(\Gamma,v,g_1,I)\in\ST^{m+1}_{g,n+1}$ in the following way. We split the vertex~$v$ in two vertices of genera~$g_1$ and~$g_2$ respectively, connect them by an edge, attach the half-edge from $H[v]\backslash H_+'[v]$ to the first vertex and then attach the half-edges from the set $I$ to the first vertex and the half-edges from the set $I^c$ to the second vertex (see Fig.~\ref{fig:split}). 
\begin{figure}[t]
\begin{tikzpicture}[scale=0.4]
\draw (0,0) -- (-5,0);
\draw (0,0) -- (3.7,3.5);
\draw (0,0) -- (4.8,1.5);
\draw (0,0) -- (4.8,-1.5);
\draw (0,0) -- (3.7,-3.5);

\fill[white] (0,0) circle(1);
\draw (0,0) circle(1) node {$g(v)$};
\draw[dashed] (0,0) circle(4);

\begin{scope}[shift={(17,0)}]
\draw (0,0) -- (-5,0);
\draw (0,0) -- (6,2);
\draw (0,0) -- (4.8,3.9);
\draw (0,0) -- (2,-2);
\draw (2,-2) -- (6,-3);
\draw (2,-2) -- (5,-4.7);

\fill[white] (0,0) circle(7mm);
\draw (0,0) circle(7mm) node {$g_1$};
\fill[white] (2,-2) circle(7mm);
\draw (2,-2) circle(7mm) node {$g_2$};
\draw[dashed] (1,-0.5) circle(4.5);
\end{scope}

\begin{scope}[shift={(3,0)}]
\draw [thick] (4.5,0) -- (6.5,0);
\draw [thick] (4.5,0.2) -- (4.5,-0.2);
\draw [thick] (6.5,0) -- (6.2,0.2);
\draw [thick] (6.5,0) -- (6.2,-0.2);
\coordinate [label=above: $\mathrm{Spl}$] () at (5.5,0);
\end{scope}

\end{tikzpicture}
\caption{Splitting operation}
\label{fig:split}
\end{figure}
This is the splitting operation. 

Let us define a contracting operation. Suppose $m\ge 2$. Let $\Gamma\in\ST^m_{g,n+1}$, $v\in V(\Gamma)$ and $h\in H^e[v]$. A stable tree $\Con(\Gamma,v,h)\in\ST^{m-1}_{g,n+1}$ is defined simply by contracting the edge corresponding to the half-edges $h$ and $\iota(h)$. 

A modified stable tree is a stable tree $\Gamma$ where we split the set of legs in two subsets: the set of legs of the first type and the set of legs of the second type, where we require that each vertex of the tree is incident to exactly one leg of the second type. The set of legs of the first type will be denoted by~$L_1(\Gamma)$ and the set of legs of the second type will be denoted by $L_2(\Gamma)$.  

For $g\ge 0$ and $m,n\ge 1$ denote by $\MST^m_{g,n+1}$ the set of modified stable trees of genus~$g$ with~$m$ vertices and with~$(m+n+1)$ legs. We mark the legs of first type by numbers $0,1,\ldots,n$ and the legs of the second type by numbers $n+1,\ldots,n+m$. In the same way, as for usual stable trees, for a modified stable tree $\Gamma\in\MST^m_{g,n+1}$ we use the notation $l_i(\Gamma)$ for the leg marked by~$i$ and the notation $i(l)$ for the marking of a leg $l\in L(\Gamma)$. We will always choose the vertex $v(l_0(\Gamma))$ as a root of $\Gamma$. In this way a modified stable tree from $\MST^m_{g,n+1}$ automatically becomes a stable rooted tree. An example of a modified stable tree from~$\MST^m_{g,n+1}$ is shown on the left-hand side of Fig.~\ref{fig:map phi}.
\begin{figure}[t]
\begin{tikzpicture}[scale=0.4]
\draw (0,0) -- (-3,0);
\draw (0,0) -- (3,4);
\draw (0,0) -- (3,0);
\draw (0,0) -- (3,-2);
\draw (0,0) -- (3,-3.5);
\draw[double] (0,0) -- (3,-5);
\draw (3,0) -- (6,1.5);
\draw (3,0) -- (6,0);
\draw[double] (3,0) -- (6,-1.5);
\draw (3,4) -- (6,5.5);
\draw (3,4) -- (6,4);
\draw[double] (3,4) -- (6,2.5);
\draw (6,5.5) -- (9,6.25);
\draw (6,5.5) -- (9,7.75);
\draw (6,5.5) -- (9,4.75);
\draw[double] (6,5.5) -- (9,3.25);

\fill[white] (0,0) circle(7mm);
\draw (0,0) circle(7mm) node {$g_1$};
\fill[white] (3,0) circle(7mm);
\draw (3,0) circle(7mm) node {$g_3$};
\fill[white] (3,4) circle(7mm);
\draw (3,4) circle(7mm) node {$g_2$};
\fill[white] (6,5.5) circle(7mm);
\draw (6,5.5) circle(7mm) node {$0$};

\coordinate [label=left: $0$] (B) at (-3,0);
\coordinate [label=right: $1$] (B) at (9,7.75);
\coordinate [label=right: $2$] (B) at (9,6.25);
\coordinate [label=right: $3$] (B) at (9,4.75);
\coordinate [label=right: $12$] (B) at (9,3.25);
\coordinate [label=right: $4$] (B) at (6,4);
\coordinate [label=right: $10$] (B) at (6,2.5);
\coordinate [label=right: $5$] (B) at (6,1.5);
\coordinate [label=right: $6$] (B) at (6,0);
\coordinate [label=right: $11$] (B) at (6,-1.5);
\coordinate [label=right: $7$] (B) at (3,-2);
\coordinate [label=right: $8$] () at (3,-3.5);
\coordinate [label=right: $9$] () at (3,-5);

\begin{scope}[shift={(20,0)}]
\draw (0,0) -- (-3,0);
\draw (0,0) -- (3,4);
\draw (0,0) -- (3,0);
\draw (0,0) -- (3,-3);
\draw (3,0) -- (6,0);
\draw (3,4) -- (6,5.5);
\draw (3,4) -- (6,2.5);

\fill[white] (0,0) circle(7mm);
\draw (0,0) circle(7mm) node {$g_1$};
\fill[white] (3,0) circle(7mm);
\draw (3,0) circle(7mm) node {$g_3$};
\fill[white] (3,4) circle(7mm);
\draw (3,4) circle(7mm) node {$g_2$};

\coordinate [label=left: $0$] (B) at (-3,0);
\coordinate [label=right: $4$] (B) at (6,5.5);
\coordinate [label=right: $2$] (B) at (6,2.5);
\coordinate [label=right: $3$] (B) at (6,0);
\coordinate [label=right: $1$] () at (3,-3);
\end{scope}

\begin{scope}[shift={(7,0)}]
\draw [thick] (4.5,0) -- (6.5,0);
\draw [thick] (4.5,0.2) -- (4.5,-0.2);
\draw [thick] (6.5,0) -- (6.2,0.2);
\draw [thick] (6.5,0) -- (6.2,-0.2);
\coordinate [label=above: $\phi$] () at (5.5,0);
\end{scope}

\end{tikzpicture}
\caption{Map $\phi\colon\MST^{m,e}_{g,n+1}\to\ST^{m-e}_{g,m+1}$}
\label{fig:map phi}
\end{figure}
The legs of the second type are drawn by double lines. The reader can see that in our example $n=8$ and $m=4$.

Consider a modified stable tree $\Gamma\in\MST^m_{g,n+1}$. Define a function $p\colon V(\Gamma)\to\{1,\ldots,m\}$ by $p(v):=i-n$, where $i$ is the marking of a unique leg of the second type incident to $v$. The modified stable tree~$\Gamma$ is called admissible, if for any two distinct vertices $v_1,v_2\in V(\Gamma)$ such that~$v_2$ is a descendant of~$v_1$, we have $p(v_2)>p(v_1)$. The subset of admissible modified stable trees will be denoted by $\AMST^m_{g,n+1}\subset\MST^m_{g,n+1}$. Note that the modified stable tree on the left-hand side of Fig.~\ref{fig:map phi} is admissible.

Consider a modified stable tree $\Gamma\in\MST^m_{g,n+1}$ and integers $a_0,a_1,\ldots,a_n$ with vanishing sum. Define a function $a\colon H(\Gamma)\to\mbZ$ by the properties
\begin{itemize}

\item[a)] If $h\in L_1(\Gamma)$, then $a(h)=a_{i(h)}$;

\item[b)] If $h\in L_2(\Gamma)$, then $a(h)=0$;

\item[c)] If $h\in H^e(\Gamma)$, then $a(h)+a(\iota(h))=0$;

\item[d)] For any vertex $v\in V(\Gamma)$, we have $\sum_{h\in H[v]}a(h)=0$.

\end{itemize}
In the same way, as in Section~\ref{subsection:stable trees}, we define the class $\DR_\Gamma(a_0,\ldots,a_n)\in H^{2(g+m-1)}(\oM_{g,m+n+1},\mbQ)$.

Suppose $\Gamma\in\MST^m_{g,n+1}$. It is useful to introduce the notations
\begin{align}
&L'_1(\Gamma):=L_1(\Gamma)\backslash\{l_0(\Gamma)\},\notag\\
&H'(\Gamma):=H^e(\Gamma)\cup\{l_0(\Gamma)\}.\label{eq:H'}
\end{align}
Clearly, for any vertex $v\in V(\Gamma)$ we have $|H[v]\backslash L_1'[v]|\ge 2$. A vertex $v\in V(\Gamma)$ will be called exceptional, if $g(v)=0$ and $|H[v]\backslash L_1'[v]|=2$. Otherwise, it will be called regular. The reader can see that one vertex in the graph on the left-hand side of Fig.~\ref{fig:map phi} is exceptional. Denote by~$V^\exc(\Gamma)$ and~$V^\reg(\Gamma)$ the sets of exceptional and regular vertices in~$\Gamma$ respectively. An edge in $\Gamma$ that is incident to an exceptional vertex will be called exceptional. The set of modified stable graphs with~$e$ exceptional vertices will be denoted by $\MST^{m,e}_{g,n+1}\subset\MST^m_{g,n+1}$. 

Consider $g\ge 0$ and $m,n\ge 1$ such that $2g+m-1>0$. Note that for any modified stable tree $\Gamma\in\MST^m_{g,n+1}$ the root is regular. So we have $|V^\exc(\Gamma)|\le m-1$. Let $0\le e\le m-1$. Let us define a map 
$$
\phi\colon\MST^{m,e}_{g,n+1}\to\ST^{m-e}_{g,m+1}
$$
in the following way. Suppose $\Gamma\in\MST^{m,e}_{g,n+1}$. We construct the graph $\phi(\Gamma)$ by contracting all exceptional edges and then by throwing away all legs from $L_1'(\Gamma)$. It is easy to see that the graph $\phi(\Gamma)$ has $m-e$ vertices and $m+1$ legs. We only have to specify how we mark them. A leg $l$ in $\phi(\Gamma)$ corresponds to some leg $l_i(\Gamma)$ in $\Gamma$, where $i=0$ or $n+1\le i\le m+n$. If $i=0$, then we mark $l$ by $0$ and if $n+1\le i\le m+n$, then we mark $l$ by $i-n$. An example of the action of the map $\phi$ is shown in Fig.~\ref{fig:map phi}. It is easy to see that for any $\Gamma\in\AMST^{m,e}_{g,n+1}$ we have $\phi(\Gamma)\in\AST^{m-e}_{g,m+1}$. So we have the map 
$$
\phi\colon\AMST^{m,e}_{g,n+1}\to\AST^{m-e}_{g,m+1}.
$$

\subsubsection{Map $\phi$ and integrals over double ramification cycles}\label{subsubsection:map phi}

The string equation~\eqref{eq:string for DR} for the double ramification correlators implies that
$$
\sum_{\substack{d_1,\ldots,d_n\ge 0\\\sum d_i=d}}\<\tau_{d_1}(e_{\alpha_1})\ldots\tau_{d_n}(e_{\alpha_n})\>^\DR_g\prod_{i=1}^n a_i^{d_i}=\frac{1}{\sum a_i}\sum_{\substack{d_1,\ldots,d_n\ge 0\\\sum d_i=d+1}}\<\tau_0(e_1)\tau_{d_1}(e_{\alpha_1})\ldots\tau_{d_n}(e_{\alpha_n})\>^\DR_g\prod_{i=1}^n a_i^{d_i}.
$$
Therefore, formula~\eqref{eq:main geometric formula} is equivalent to
\begin{multline}\label{eq:main formula with insertion}
\sum_{\substack{d_1,\ldots,d_n\ge 0\\\sum d_i=d+1}}\<\tau_0(e_1)\tau_{d_1}(e_{\alpha_1})\ldots\tau_{d_n}(e_{\alpha_n})\>^{\DR}_ga_1^{d_1}\ldots a_n^{d_n}=\\
=\sum_{\Gamma\in\ST^{d-2g+2}_{g,n+1}}C(\Gamma)\int_{\oM_{g,n+1}}\DR_\Gamma\left(-\sum a_i,a_1,\ldots,a_n\right)\lambda_g c_{g,n+1}\left(e_1\otimes\otimes_{i=1}^n e_{\alpha_i}\right).
\end{multline}
In~\cite[Section 6.6.3]{BDGR16} we proved that the correlator $\<\tau_0(e_1)\tau_{d_1}(e_{\alpha_1})\ldots\tau_{d_n}(e_{\alpha_n})\>^\DR_g$ is equal to the coefficient of $b_1 b_2\ldots b_{2g+n-1}$ in the polynomial
$$
\frac{1}{(2g+n-1)!}\sum_{\Gamma\in\AMST^n_{g,2g+n}}\int_{\DR_\Gamma(-\sum b_i,b_1,\ldots,b_{2g+n-1})}\lambda_g c_{g,2g+2n}(e_1^{2g+n}\otimes\otimes_{i=1}^n e_{\alpha_i})\prod_{i=1}^n\psi_{2g+n-1+i}^{d_i}.
$$
On the left-hand side of Fig.~\ref{fig:transformation of the integral} we schematically represent an example of an integral from this formula. Note that the modified stable tree $\Gamma$ in this example coincides with the modified stable tree on the left-hand side of Fig.~\ref{fig:map phi}.
\begin{figure}[t]
\begin{tikzpicture}[scale=0.4]

\draw plot [smooth,tension=1.8] coordinates {(0,6) (-7,0) (0,-6)};
\draw plot [smooth,tension=1.8] coordinates {(0,6) (1.5,4.5) (0,3)};
\draw plot [smooth,tension=1.8] coordinates {(0,3) (1.5,1.5) (0,0)};
\draw plot [smooth,tension=1.8] coordinates {(0,0) (1.5,-1.5) (0,-3)};
\draw plot [smooth,tension=1.8] coordinates {(0,-3) (1.5,-4.5) (0,-6)};

\draw plot [smooth,tension=1.8] coordinates {(6,3.1) (1.5,4.5) (6,5.9)};
\draw plot [smooth,tension=1.8] coordinates {(6,3.1) (6.7,3.8) (6,4.5)};
\draw plot [smooth,tension=1.8] coordinates {(6,4.5) (6.7,5.2) (6,5.9)};

\draw plot [smooth,tension=1.8] coordinates {(6,0.1) (1.5,1.5) (6,2.9)};
\draw plot [smooth,tension=1.8] coordinates {(6,0.1) (6.7,0.8) (6,1.5)};
\draw plot [smooth,tension=1.8] coordinates {(6,1.5) (6.7,2.2) (6,2.9)};

\draw plot [smooth,tension=1.8] coordinates {(11.2,3.7) (6.7,5.2) (11.2,6.7)};
\draw plot [smooth,tension=1.8] coordinates {(11.2,3.7) (11.7,4.2) (11.2,4.7)};
\draw plot [smooth,tension=1.8] coordinates {(11.2,4.7) (11.7,5.2) (11.2,5.7)};
\draw plot [smooth,tension=1.8] coordinates {(11.2,5.7) (11.7,6.2) (11.2,6.7)};

\coordinate [label=right: $\scriptstyle{\psi^2 e_{\alpha_4}}$] (B) at (7.8,5.5);
\coordinate [label=center: $\scriptstyle{\bullet}$] (B) at (7.8,5.5);
\coordinate [label=left: $\scriptstyle{a_3}$] (B) at (11.7,4.2);
\coordinate [label=right: $\scriptstyle{e_1}$] (B) at (11.7,4.2);
\coordinate [label=left: $\scriptstyle{a_2}$] (B) at (11.7,5.2);
\coordinate [label=right: $\scriptstyle{e_1}$] (B) at (11.7,5.2);
\coordinate [label=left: $\scriptstyle{a_1}$] (B) at (11.7,6.2);
\coordinate [label=right: $\scriptstyle{e_1}$] (B) at (11.7,6.2);
\coordinate [label=left: $\scriptstyle{a_4}$] (B) at (6.7,3.8);
\coordinate [label=right: $\scriptstyle{e_1}$] (B) at (6.7,3.8);
\coordinate [label=left: $\scriptstyle{a_5}$] (B) at (6.7,2.2);
\coordinate [label=right: $\scriptstyle{e_1}$] (B) at (6.7,2.2);
\coordinate [label=left: $\scriptstyle{a_6}$] (B) at (6.7,0.8);
\coordinate [label=right: $\scriptstyle{e_1}$] (B) at (6.7,0.8);
\coordinate [label=left: $\scriptstyle{a_7}$] (B) at (1.5,-1.5);
\coordinate [label=right: $\scriptstyle{e_1}$] (B) at (1.5,-1.5);
\coordinate [label=left: $\scriptstyle{a_8}$] (B) at (1.5,-4.5);
\coordinate [label=right: $\scriptstyle{e_1}$] (B) at (1.5,-4.5);

\draw (-4.5,0) arc (-126.87:-90+36.87:2.5);
\draw plot [smooth,tension=1.2] coordinates {(-4,-0.29) (-3,0) (-2,-0.29)};

\coordinate [label=right: $\scriptstyle{\psi^{d_1} e_{\alpha_1}}$] (B) at (-3.5,-2.3);
\coordinate [label=center: $\scriptstyle{\bullet}$] (B) at (-3.5,-2.3);

\begin{scope}[shift={(6,4.5)},scale=0.7]
\draw (-4.5,0) arc (-126.87:-90+36.87:2.5);
\draw plot [smooth,tension=1.2] coordinates {(-4,-0.29) (-3,0) (-2,-0.29)};
\end{scope}
\begin{scope}[shift={(6,1.5)},scale=0.7]
\draw (-4.5,0) arc (-126.87:-90+36.87:2.5);
\draw plot [smooth,tension=1.2] coordinates {(-4,-0.29) (-3,0) (-2,-0.29)};
\end{scope}

\coordinate [label=right: $\scriptstyle{\psi^{d_2} e_{\alpha_2}}$] (B) at (2.9,5);
\coordinate [label=center: $\scriptstyle{\bullet}$] (B) at (2.9,5);

\coordinate [label=right: $\scriptstyle{\psi^{d_3} e_{\alpha_3}}$] (B) at (2.9,2);
\coordinate [label=center: $\scriptstyle{\bullet}$] (B) at (2.9,2);

\coordinate [label=center: $\scriptstyle{\lambda_{g_1}}$] (B) at (-4,-5.7);
\coordinate [label=center: $\scriptstyle{\lambda_{g_2}}$] (B) at (4,6.3);
\coordinate [label=center: $\scriptstyle{\lambda_{g_3}}$] (B) at (4,-0.3);

\begin{scope}[shift={(20,0)}]

\draw plot [smooth,tension=1.8] coordinates {(0,6) (-7,1.5) (0,-3)};
\draw plot [smooth,tension=1.8] coordinates {(0,6) (1.5,4.5) (0,3)};
\draw plot [smooth,tension=1.8] coordinates {(0,3) (1.5,1.5) (0,0)};
\draw plot [smooth,tension=1.8] coordinates {(0,0) (1.5,-1.5) (0,-3)};
\draw plot [smooth,tension=1.8] coordinates {(6,3.1) (1.5,4.5) (6,5.9)};
\draw plot [smooth,tension=1.8] coordinates {(6,3.1) (6.7,3.8) (6,4.5)};
\draw plot [smooth,tension=1.8] coordinates {(6,4.5) (6.7,5.2) (6,5.9)};

\draw plot [smooth cycle,tension=1.2] coordinates {(1.5,1.5) (4.1,2.5) (6.7,1.5) (4.1,0.5)};

\draw (-4.5,1.5) arc (-126.87:-90+36.87:2.5);
\draw plot [smooth,tension=1.2] coordinates {(-4,-0.29+1.5) (-3,1.5) (-2,-0.29+1.5)};

\begin{scope}[shift={(6,4.5)},scale=0.7]
\draw (-4.5,0) arc (-126.87:-90+36.87:2.5);
\draw plot [smooth,tension=1.2] coordinates {(-4,-0.29) (-3,0) (-2,-0.29)};
\end{scope}
\begin{scope}[shift={(5.2,1.5)},scale=0.7]
\draw (-4.5,0) arc (-126.87:-90+36.87:2.5);
\draw plot [smooth,tension=1.2] coordinates {(-4,-0.29) (-3,0) (-2,-0.29)};
\end{scope}

\coordinate [label=center: $\scriptstyle{\lambda_{g_1}}$] (B) at (-4,-3);
\coordinate [label=center: $\scriptstyle{\lambda_{g_2}}$] (B) at (4,6.3);
\coordinate [label=center: $\scriptstyle{\lambda_{g_3}}$] (B) at (4,-0.1);
\coordinate [label=left: $\scriptstyle{a_1+a_2+a_3}$] (B) at (6.7,5.2);
\coordinate [label=right: $\scriptstyle{e_{\alpha_4}}$] (B) at (6.7,5.2);
\coordinate [label=left: $\scriptstyle{a_4}$] (B) at (6.7,3.8);
\coordinate [label=right: $\scriptstyle{\psi^{d_2-1}e_{\alpha_2}}$] (B) at (6.7,3.8);
\coordinate [label=left: $\scriptstyle{a_5+a_6}$] (B) at (6.7,1.5);
\coordinate [label=right: $\scriptstyle{\psi^{d_3-2}e_{\alpha_3}}$] (B) at (6.7,1.5);
\coordinate [label=left: $\scriptstyle{a_7+a_8}$] (B) at (1.5,-1.5);
\coordinate [label=right: $\scriptstyle{\psi^{d_1-2}e_{\alpha_1}}$] (B) at (1.5,-1.5);

\coordinate [label=left: $\scriptstyle{e_1}$] (B) at (-7,1.5);

\end{scope}

\begin{scope}[shift={(4.5,0)}]
\draw [thick] (4.5,0) -- (6.5,0);
\draw [thick] (4.5,0.2) -- (4.5,-0.2);
\draw [thick] (6.5,0) -- (6.2,0.2);
\draw [thick] (6.5,0) -- (6.2,-0.2);
\end{scope}

\coordinate [label=left: $\scriptstyle{e_1}$] (B) at (-7,0);

\end{tikzpicture}
\caption{Map $\phi$ and integrals over double ramification cycles}
\label{fig:transformation of the integral}
\end{figure}
We have
\begin{align}
&\int_{\DR_\Gamma(-\sum b_i,b_1,\ldots,b_{2g+n-1})}\lambda_g c_{g,2g+2n}(e_1^{2g+n}\otimes\otimes_{j=1}^n e_{\alpha_j})\prod_{j=1}^n\psi_{2g+n-1+j}^{d_j}=\label{big integral}\\
=&\prod_{h\in H^e_+(\Gamma)}b(h)\sum_{\nu\colon H^e(\Gamma)\to\{1,\ldots,N\}}\eta^{\nu(h)\nu(\iota(h))}\times\notag\\
&\times\prod_{v\in V(\Gamma)}\int_{\DR_{g(v)}\left(0,(b(l))_{l\in L_1[v]},(b(h))_{h\in H^e[v]}\right)}\lambda_{g(v)}\psi_0^{d_{p(v)}}c_{g,|H[v]|}(e_{\alpha_{p(v)}}\otimes e_1^{|L_1[v]|}\otimes\otimes_{h\in H^e[v]}e_{\nu(h)}).\notag
\end{align}
In order to simplify our formulas a little bit, it is convenient to use the notation~\eqref{eq:H'} and set $\nu(l_0(\Gamma)):=1$. Then we can rewrite formula~\eqref{big integral} in the following way:
\begin{align}
&\int_{\DR_\Gamma(-\sum b_i,b_1,\ldots,b_{2g+n-1})}\lambda_g c_{g,2g+2n}(e_1^{2g+n}\otimes\otimes_{j=1}^n e_{\alpha_j})\prod_{j=1}^n\psi_{2g+n-1+j}^{d_j}=\label{big integral2}\\
=&\prod_{h\in H^e_+(\Gamma)}b(h)\sum_{\nu\colon H^e(\Gamma)\to\{1,\ldots,N\}}\eta^{\nu(h)\nu(\iota(h))}\times\notag\\
&\times\prod_{v\in V(\Gamma)}\int_{\DR_{g(v)}\left(0,(b(l))_{l\in L'_1[v]},(b(h))_{h\in H'[v]}\right)}\lambda_{g(v)}\psi_0^{d_{p(v)}}c_{g,|H[v]|}(e_{\alpha_{p(v)}}\otimes e_1^{|L'_1[v]|}\otimes\otimes_{h\in H'[v]}e_{\nu(h)}).\notag
\end{align}
Suppose that $v\in V^\reg(\Gamma)$. Then equation~\eqref{eq:divisibility} implies that the integral
\begin{gather}\label{part of the big integral}
\int_{\DR_{g(v)}\left(0,(b(l))_{l\in L'_1[v]},(b(h))_{h\in H'[v]}\right)}\lambda_{g(v)}\psi_0^{d_{p(v)}}c_{g,|H[v]|}(e_{\alpha_{p(v)}}\otimes e_1^{|L'_1[v]|}\otimes\otimes_{h\in H'[v]}e_{\nu(h)})
\end{gather}
is equal to 
\begin{gather*}
\int_{\DR_{g(v)}\left(\sum_{l\in L_1'[v]}b(l),(b(h))_{h\in H'[v]}\right)}\lambda_{g(v)}\psi_0^{d_{p(v)}-|L_1'[v]|}c_{g,|H[v]|-|L_1'[v]|}(e_{\alpha_{p(v)}}\otimes\otimes_{h\in H'[v]}e_{\nu(h)})+\sum_{i=1}^{2g+n-1}O(b_i^2)
\end{gather*}
in the case $d_{p(v)}\ge |L_1'[v]|$ and is equal to $\sum_{i=1}^{2g+n-1}O(b_i^2)$, if $d_{p(v)}<|L_1'[v]|$. Suppose that $v\in V^\exc(\Gamma)$. Then the set $H'[v]$ consists of only one element, $H'[v]=\{l\}$. The integral~\eqref{part of the big integral} is equal to $\eta_{\alpha_{p(v)}\nu(l)}$, if $|L'_1[v]|=d_{p(v)}+1$, and is equal to zero otherwise. 

We say that an admissible modified stable tree $\Gamma\in\AMST^n_{g,2g+n}$ is compatible with an $n$-tuple of non-negative integers $(d_1,\ldots,d_n)$ if the following two conditions are satisfied:
\begin{enumerate}
\item[a)] For any $v\in V^\reg(\Gamma)$ we have $d_{p(v)}\ge |L_1'[v]|$.
\item[b)] For any $v\in V^\exc(\Gamma)$ we have $d_{p(v)}+1=|L_1'[v]|$.
\end{enumerate}
We obtain that the coefficient of $b_1b_2\ldots b_{2g+n-1}$ in~\eqref{big integral} can be non-zero only if $\Gamma$ is compatible with $(d_1,\ldots,d_n)$. Suppose that an admissible modified stable tree $\Gamma\in\AMST^n_{g,2g+n}$ is compatible with an $n$-tuple $(d_1,\ldots,d_n)$, where $\sum d_i=d+1$. Then from the computations above it follows that the coefficient of $b_1b_2\ldots b_{2g+n-1}$ in~\eqref{big integral} is equal to the coefficient of~$b_1b_2\ldots b_{2g+n-1}$ in
\begin{align*}
&\left(\prod_{v\in V^\exc(\Gamma)}\sum_{l\in L_1'[v]}b(l)\right)\times\\
&\times\int_{\DR_{\phi(\Gamma)}\left(-\sum b_i,\left(\sum_{l\in L_1'[v(l_{i+2g+n-1}(\Gamma))]}b(l)\right)_{1\le i\le n}\right)}\lambda_g c_{g,n+1}(e_1\otimes\otimes_{i=1}^n e_{\alpha_i})\prod_{v\in V^\reg(\Gamma)}\psi_{p(v)}^{d_{p(v)}-|L_1'[v]|}.
\end{align*}
An example of an integral from this formula is illustrated on the right-hand side of Fig.~\ref{fig:transformation of the integral}. It is easy to see that the coefficient of $b_1b_2\ldots b_{2g+n-1}$ in the last expression is equal to the coefficient of $a_1^{d_1}\ldots a_n^{d_n}$ in
$$
\left(\prod_{v\in V(\Gamma)}|L_1'[v]|!\right)\int_{\DR_{\phi(\Gamma)}(-\sum a_i,a_1,\ldots,a_n)}\lambda_g c_{g,n+1}(e_1\otimes\otimes_{i=1}^n e_{\alpha_i})\prod_{v\in V^{\reg}(\Gamma)}(a_{p(v)}\psi_{p(v)})^{d_{p(v)}-|L_1'[v]|}.
$$
Let $e:=|V^{\exc}(\Gamma)|$. Note that 
$$
\sum_{p\in V^\reg(\Gamma)}(d_{p(v)}-|L_1'[v]|)=d+1-(2g+n-1-e).
$$
Note also that for any $v\in V^\reg(\Gamma)$ the leg $l_{p(v)}(\phi(\Gamma))\in L(\phi(\Gamma))$ satisfies the property:
$$
p(v)=\min_{l'\in L'[v(l_{p(v)}(\phi(\Gamma)))]}i(l').
$$
This motivates the following definition. For $0\le e\le n-1$ and an admissible stable tree $\Gamma\in\AST^{n-e}_{g,n+1}$ define a set $S_{\Gamma,d}\subset\mbZ^n_{\ge 0}$ by
$$
S_{\Gamma,d}:=\left\{(c_1,\ldots,c_n)\in\mbZ_{\ge 0}^n\left|
\begin{smallmatrix}
\text{$c_i=0$, if $i\notin\{\min_{l\in L'[v]}i(l)\}_{v\in V(\Gamma)}$}\\
\sum c_i=d+1-(2g+n-1-e)
\end{smallmatrix}\right.\right\}.
$$
We obtain the following equation:
\begin{align*}
&\sum_{\substack{d_1,\ldots,d_n\ge 0\\\sum d_i=d+1}}\<\tau_0(e_1)\tau_{d_1}(e_{\alpha_1})\ldots\tau_{d_n}(e_{\alpha_n})\>^\DR_g\prod_{i=1}^n a_i^{d_i}=\\
=&\sum_{e=0}^{n-1}\sum_{\Gamma\in\AST^{n-e}_{g,n+1}}\sum_{(c_1,\ldots,c_n)\in S_{\Gamma,d}}\int_{\DR_\Gamma(-\sum a_i,a_1,\ldots,a_n)}\lambda_g c_{g,n+1}(e_1\otimes\otimes_{i=1}^n e_{\alpha_i})\prod_{i=1}^n(a_i\psi_i)^{c_i}.\notag
\end{align*}
We can now see that the following relation in the cohomology of~$\oM_{g,n+1}$ implies formula~\eqref{eq:main formula with insertion}:
\begin{multline}\label{eq:relation in cohomology}
\sum_{e=0}^{n-1}\sum_{\Gamma\in\AST^{n-e}_{g,n+1}}\sum_{(c_1,\ldots,c_n)\in S_{\Gamma,d}}\lambda_g\DR_\Gamma(a_0,a_1,\ldots,a_n)\prod_{i=1}^n(a_i\psi_i)^{c_i}=\\
=\sum_{\Gamma\in\ST^{d-2g+2}_{g,n+1}}C(\Gamma)\lambda_g\DR_\Gamma(a_0,a_1,\ldots,a_n),
\end{multline}
where $a_0:=-\sum_{i=1}^n a_i$. This relation will be proved in the next section.

%%%%%%%%%%%%%%%%%%%%%%%%%%%%%%%%%%%%%%%%%%%%%%%%%%%%%%%%%%

\subsubsection{Relation in the cohomology of $\oM_{g,n}$}\label{subsubsection:relation}

We prove relation~\eqref{eq:relation in cohomology} by the double induction on~$d$ and on~$n$. The base cases are when $d=2g-1$ or $n=1$. If $d=2g-1$, then the condition $\sum c_i=d+1-(2g+n-1-e)$ in the definition of the set $S_{\Gamma,d}$ immediately implies that $e=n-1$ and that the left-hand side of~\eqref{eq:relation in cohomology} is equal to $\DR_g(a_0,\ldots,a_n)$. The right-hand side of~\eqref{eq:relation in cohomology} is clearly the same. Suppose $n=1$. Then the left-hand side of~\eqref{eq:relation in cohomology} is equal to 
\begin{gather}\label{eq:base case,left}
\lambda_g\DR_g(-a_1,a_1)(a_1\psi_1)^{d+1-2g},
\end{gather}
while the right-hand side of~\eqref{eq:relation in cohomology} is equal to
\begin{gather}\label{eq:base case,right}
\sum_{\Gamma\in\ST^{d-2g+2}_{g,2}}C(\Gamma)\lambda_g\DR_\Gamma(-a_1,a_1).
\end{gather}
The class $\DR_\Gamma(-a_1,a_1)$ is zero unless $\Gamma$ is a chain. Therefore, applying formula~\eqref{eq:DR times psi} to~\eqref{eq:base case,left} $d+1-2g$ times, we get~\eqref{eq:base case,right}. So, the base cases for our induction are proved.

Suppose now that $d\ge 2g$ and $n\ge 2$. We rewrite the left-hand side of~\eqref{eq:relation in cohomology} in the following way:
\begin{align}
\sum_{e=0}^{n-1}\sum_{\Gamma\in\AST^{n-e}_{g,n+1}}\sum_{(c_1,\ldots,c_n)\in S_{\Gamma,d}}&\lambda_g\DR_\Gamma(a_0,\ldots,a_n)\prod_{i=1}^n(a_i\psi_i)^{c_i}=\notag\\
=&a_1\psi_1\sum_{e=0}^{n-1}\sum_{\Gamma\in\AST^{n-e}_{g,n+1}}\sum_{(c_1,\ldots,c_n)\in S_{\Gamma,d-1}}\lambda_g\DR_\Gamma(a_0,\ldots,a_n)\prod_{i=1}^n(a_i\psi_i)^{c_i}+\label{part 1 of induction}\\
&+\sum_{e=0}^{n-1}\sum_{\Gamma\in\AST^{n-e}_{g,n+1}}\sum_{(0,c_2,\ldots,c_n)\in S_{\Gamma,d}}\lambda_g\DR_\Gamma(a_0,\ldots,a_n)\prod_{i=2}^n(a_i\psi_i)^{c_i}.\label{part 2 of induction}
\end{align}
By the induction assumption, expression~\eqref{part 1 of induction} is equal to
\begin{align}
&a_1\psi_1\sum_{\Gamma\in\ST^{d-2g+1}_{g,n+1}}C(\Gamma)\lambda_g\DR_\Gamma(a_0,\ldots,a_n)\stackrel{\text{by~\eqref{eq:DR times psi}}}{=}\notag\\
=&\sum_{\Gamma\in\ST^{d-2g+1}_{g,n+1}}C(\Gamma)\lambda_g\sum_{\substack{g_1,g_2\ge 0\\g_1+g_2=g(v(l_1(\Gamma)))}}\sum_{\substack{I\sqcup J=H_+'[v(l_1(\Gamma))]\\l_1(\Gamma)\in J\\2g_1+|I|>0\\2g_2+|J|-1}}\frac{2g_1+|I|}{r(v(l_1(\Gamma)))}\DR_{\Spl(\Gamma,v(l_1(\Gamma)),g_1,I)}(a_0,\ldots,a_n)\label{final part 1,first line}\\
&-\sum_{\Gamma\in\ST^{d-2g+1}_{g,n+1}}C(\Gamma)\lambda_g\sum_{\substack{g_1,g_2\ge 0\\g_1+g_2=g(v(l_1(\Gamma)))}}\sum_{\substack{I\sqcup J=H_+'[v(l_1(\Gamma))]\\l_1(\Gamma)\in I\\2g_2+|J|-1>0}}\frac{2g_2+|J|-1}{r(v(l_1(\Gamma)))}\DR_{\Spl(\Gamma,v(l_1(\Gamma)),g_1,I)}(a_0,\ldots,a_n)\label{final part 1,second line}.
\end{align}

Let us now analyze expression~\eqref{part 2 of induction}. From the definition of an admissible stable tree it immediately follows that for any $\Gamma\in\AST^{n-e}_{g,n+1}$ the leg~$l_1(\Gamma)$ is incident to the root of~$\Gamma$. The stable rooted tree $\Gamma$ is obtained by attaching the stable rooted trees $\Gamma_h$, $h\in H^e[v(l_0(\Gamma))]$ together with the legs from $L[v(l_0(\Gamma))]$ to the vertex $v(l_0(\Gamma))$. Note that the number of legs in each tree $\Gamma_h$ is strictly less than $n+1$. Therefore, the induction assumption implies that expression~\eqref{part 2 of induction} is equal to
\begin{gather}\label{final part 2}
\sum_{\substack{\Gamma\in\ST^{d-2g+2}_{g,n+1}\\v(l_1(\Gamma))=v(l_0(\Gamma))}}\tC(\Gamma)\lambda_g\DR_\Gamma(a_0,\ldots,a_n),
\end{gather}
where
$$
\tC(\Gamma):=\prod_{v\in V(\Gamma)\backslash\{v(l_0(\Gamma))\}}\frac{r(v)}{\sum_{\tv\in\Desc[v]}r(\tv)}.
$$
It remains to prove that the sum of~\eqref{final part 1,first line},~\eqref{final part 1,second line} and~\eqref{final part 2} is equal to the right-hand side of~\eqref{eq:relation in cohomology}. We see that all expressions~\eqref{final part 1,first line},~\eqref{final part 1,second line},~\eqref{final part 2} and the right-hand side of~\eqref{eq:relation in cohomology} are sums of classes 
\begin{gather}\label{main classes}
\lambda_g\DR_\Gamma(a_0,\ldots,a_n),\quad \Gamma\in\ST^{d-2g+2}_{g,n+1},
\end{gather}
with some rational coefficients. Consider a stable tree $\Gamma\in\ST^{d-2g+2}_{g,n+1}$. It remains to check that the coefficients of the class~\eqref{main classes} in the sum of~\eqref{final part 1,first line},~\eqref{final part 1,second line}~\eqref{final part 2} and in the right-hand side of~\eqref{eq:relation in cohomology} are equal. Let $v:=v(l_1(\Gamma))$. Introduce the notations
\begin{align*}
R:=&\sum_{v'\in\Desc[v]}r(v'),\\
R_h:=&\sum_{v'\in\Desc[v(\iota(h))]}r(v'),\quad\text{for $h\in H^e_+[v]$}.
\end{align*}
There are two cases.

{\it Case 1}. Suppose $v\ne v(l_0(\Gamma))$. Clearly, the set $H^e[v]\backslash H^e_+[v]$ consists of a unique element. Let us denote it by $h_-$ and let $\tv:=v(\iota(h_-))$ (see Fig~\ref{fig:example of a tree}).
\begin{figure}[t]
\begin{tikzpicture}[scale=0.4]
\draw[dashed] (-2,0) -- (-4,0);
\draw (0,0) -- (-2,0);
\draw (0,0) -- (3,3);
\draw (0,0) -- (2,0);
\draw[dashed] (2,0) -- (4,0);
\draw (0,0) -- (2,-1.5);
\draw[dashed] (2,-1.5) -- (4,-3);
\draw (3,3) -- (6,8);
\draw (3,3) -- (6,5);
\draw (3,3) -- (6,3);
\draw (3,3) -- (6,1.5);
\draw (3,3) -- (6,0);
\draw (6,8) -- (8.5,7.5);
\draw[dashed] (8.5,7.5) -- (11,7);
\draw (6,8) -- (8.5,8);
\draw[dashed] (8.5,8) -- (11,8);
\draw (6,8) -- (8.5,8.5);
\draw[dashed] (8.5,8.5) -- (11,9);
\draw (6,5) -- (8.5,5.25);
\draw[dashed] (8.5,5.25) -- (11,5.5);
\draw (6,5) -- (8.5,4.75);
\draw[dashed] (8.5,4.75) -- (11,4.5);

\fill[white] (0,0) circle(7mm);
\draw (0,0) circle(7mm) node {$\widetilde{v}$};
\fill[white] (3,3) circle(7mm);
\draw (3,3) circle(7mm) node {$v$};
\fill[white] (6,8) circle(7mm);
\draw (6,8) circle(7mm);
\fill[white] (6,5) circle(7mm);
\draw (6,5) circle(7mm);

\coordinate [label=-45: $1$] () at (6,0);
\coordinate [label=center: $\scriptstyle{h_-}$] () at (1.8,2.5);

\end{tikzpicture}
\caption{}
\label{fig:example of a tree}
\end{figure}
Let
\begin{align*}
\tR:=&\sum_{v'\in\Desc[\tv]}r(v'),\\
B:=&\prod_{v'\in V(\Gamma)\backslash(\{v,\tv\}\cup\cup_{h\in H^e_+[v]}v(\iota(h)))}\frac{r(v')}{\sum_{v''\in\Desc[v']}r(v'')}.
\end{align*}
So the constant $C(\Gamma)$ can be written as
$$
C(\Gamma)=B\cdot\frac{r(\tv)}{\tR}\frac{r(v)}{R}\prod_{h\in H^e_+[v]}\frac{r(v(\iota(h)))}{R_h}.
$$
Clearly, the stable tree $\Gamma$ can be obtained by splitting of the tree $\Con(\Gamma,v,h_-)$. Therefore, the coefficient of the class~\eqref{main classes} in~\eqref{final part 1,first line} is equal to
\begin{gather}\label{eq:case 1,first contribution}
\frac{r(\tv)}{r(\tv)+r(v)}\cdot B\cdot\frac{r(\tv)+r(v)}{\tR}\prod_{h\in H^e_+[v]}\frac{r(v(\iota(h)))}{R_h}=B\cdot\frac{r(\tv)}{\tR}\prod_{h\in H^e_+[v]}\frac{r(v(\iota(h)))}{R_h}.
\end{gather}
On the other hand, for any $h\in H^e_+[v]$ the stable tree $\Gamma$ can be obtained by splitting of the tree $\Con(\Gamma,v,h)$. Therefore, the coefficient of the class~\eqref{main classes} in~\eqref{final part 1,second line} is equal to
\begin{align}
&-\sum_{h\in H^e_+[v]}\frac{r(v(\iota(h)))}{r(v(\iota(h)))+r(v)}\cdot B\cdot\frac{r(\tv)}{\tR}\frac{r(v)+r(v(\iota(h)))}{R}\prod_{h'\in H^e_+[v]\backslash\{h\}}\frac{r(v(\iota(h')))}{R_{h'}}=\notag\\
=&-B\cdot\sum_{h\in H^e_+[v]}\frac{r(\tv)}{\tR}\frac{r(v(\iota(h)))}{R}\prod_{h'\in H^e_+[v]\backslash\{h\}}\frac{r(v(\iota(h')))}{R_{h'}}.\label{eq:case 1,second contribution}
\end{align}
Obviously, the class~\eqref{main classes} does not appear in~\eqref{final part 2}. Summing~\eqref{eq:case 1,first contribution} and~\eqref{eq:case 1,second contribution}, we get
$$
B\cdot\frac{r(\tv)}{\tR}\left(\prod_{h\in H^e_+[v]}\frac{r(v(\iota(h)))}{R_h}\right)\left(1-\sum_{h\in H^e_+[v]}\frac{R_h}{R}\right)=C(\Gamma).
$$
So, this case is done.

{\it Case 2}. Suppose $v=v(l_0(\Gamma))$. Let
\begin{gather*}
B:=\prod_{v'\in V(\Gamma)\backslash(\{v\}\cup\cup_{h\in H^e_+[v]}v(\iota(h)))}\frac{r(v')}{\sum_{v''\in\Desc[v']}r(v'')}.
\end{gather*}
Therefore,
$$
C(\Gamma)=B\cdot\frac{r(v)}{R}\prod_{h\in H^e_+[v]}\frac{r(v(\iota(h)))}{R_h}.
$$
It is easy to see that the class~\eqref{main classes} does not appear in~\eqref{final part 1,first line}. By the same arguments, as in the first case, the class~\eqref{main classes} appears in~\eqref{final part 1,second line} with the coefficient
\begin{align}
&-\sum_{h\in H^e_+[v]}\frac{r(v(\iota(h)))}{r(v(\iota(h)))+r(v)}\cdot B\cdot\frac{r(v(\iota(h)))+r(v)}{R}\prod_{h'\in H^e_+[v]\backslash\{h\}}\frac{r(v(\iota(h')))}{R_{h'}}=\notag\\
=&-B\cdot\sum_{h\in H^e_+[v]}\frac{r(v(\iota(h)))}{R}\prod_{h'\in H^e_+[v]\backslash\{h\}}\frac{r(v(\iota(h')))}{R_{h'}}.\label{eq:case 2,first contribution}
\end{align}
One can easily see that the coefficient of the class~\eqref{main classes} in~\eqref{final part 2} is equal to
\begin{gather}\label{eq:case 2,second contribution}
\tC(\Gamma)=B\cdot\prod_{h\in H^e[v]}\frac{r(v(\iota(h)))}{R_{h}}.
\end{gather}
Summing~\eqref{eq:case 2,first contribution} and~\eqref{eq:case 2,second contribution}, we obtain
$$
B\cdot\left(\prod_{h\in H^e_+[v]}\frac{r(v(\iota(h)))}{R_h}\right)\left(1-\sum_{h\in H^e_+[v]}\frac{R_h}{R}\right)=C(\Gamma).
$$
Case 2 is also done. Relation~\eqref{eq:relation in cohomology} is proved and, hence, Theorem~\ref{theorem:main geometric formula} is also proved.

%%%%%%%%%%%%%%%%%%%%%%%%%%%%%%%%%%%%%%%%%%%%%%%%%%%%%%%%%%%%%%%%%%%%%%%
%%%%%%%%%%%%%%%%%%%%%%%%%%%%%%%%%%%%%%%%%%%%%%%%%%%%%%%%%%%%%%%%%%%%%%%

\section{Miura transformation for the Dubrovin-Zhang operator}\label{section:Miura for DZ}

In this section we show that our strong DR/DZ equivalence conjecture \cite[Section~7.3]{BDGR16} together with formula~\eqref{eq:geometric formula for k=0} give a simple description of a Miura transformation that should reduce the Hamiltonian operator of the Dubrovin-Zhang hierarchy to the standard form. Remarkably, the description is given purely in terms of the potential of the cohomological field theory. The main goal of this section is to prove that this Miura transformation indeed reduces the Dubrovin-Zhang operator to the standard form. This gives a new evidence for the strong DR/DZ equivalence conjecture. 

In Sections~\ref{subsection:recall of DZ theory} and~\ref{subsection:strong DR/DZ equivalence} we briefly recall the theory of the Dubrovin-Zhang hierarchies and our strong DR/DZ equivalence conjecture from~\cite{BDGR16}. The Miura transformation for the Dubrovin-Zhang operator is given at the end of Section~\ref{subsection:strong DR/DZ equivalence}. The main result is proved in Section~\ref{subsection:proof of Miura for DZ}.

Throughout this section we fix a semisimple cohomological field theory $c_{g,n}\colon V^{\otimes n}\to H^\even(\oM_{g,n},\mbC)$ with $\dim V=N$. 

\subsection{Brief recall of the Dubrovin-Zhang theory}\label{subsection:recall of DZ theory}

Here we recall the construction of the Dubrovin-Zhang hierarchy. We follow the approach from~\cite{BPS12b} (see also~\cite{BPS12a}).

The potential of the cohomological field theory is defined by
\begin{align*}
F(t^*_*,\eps):=&\sum_{g\ge 0}F_g(t^*_*)\eps^{2g},\\
F_g(t^*_*):=&\sum_{\substack{n\ge 0\\2g-2+n>0}}\sum_{d_1,\ldots,d_n\ge 0}\<\tau_{d_1}(e_{\alpha_1})\ldots\tau_{d_n}(e_{\alpha_n})\>_g \frac{t^{\alpha_1}_{d_1}\ldots t^{\alpha_n}_{d_n}}{n!},
\end{align*}
where
$$
\<\tau_{d_1}(e_{\alpha_1})\ldots\tau_{d_n}(e_{\alpha_n})\>_g:=\int_{\oM_{g,n}}c_{g,n}(\otimes_{i=1}^n e_{\alpha_1})\psi_1^{d_1}\ldots\psi_n^{d_n}.
$$
Recall the string and the dilaton equations for $F$:
\begin{align*}
&\frac{\d F}{\d t^1_0}=\sum_{n\ge 0}t^\alpha_{n+1}\frac{\d F}{\d t^\alpha_n}+\frac{1}{2}\eta_{\alpha\beta}t^\alpha_0t^\beta_0+\eps^2\<\tau_0(e_1)\>_1,\\
&\frac{\d F}{\d t^1_1}=\sum_{n\ge 0}t^\alpha_n\frac{\d F}{\d t^\alpha_n}+\eps\frac{\d F}{\d\eps}-2F+\eps^2\frac{N}{24}.
\end{align*}

We will use rings of differential polynomials in different variables and Miura transformations between them. We refer the reader to~\cite[Section~3.4]{BDGR16} for the corresponding notations. We also refer the reader to~\cite[Section 3.1]{BDGR16} for a brief review of the theory of tau-symmetric Hamiltonian hierarchies.

Introduce power series $(w^\top)^\alpha\in\mbC[[x,t^*_*,\eps]]$ by
$$
(w^\top)^\alpha:=\left.\eta^{\alpha\mu}\frac{\d^2 F}{\d t^\mu_0\d t^1_0}\right|_{t^1_0\mapsto t^1_0+x}.
$$
Let $(w^\top)^\alpha_n:=\d_x^n(w^\top)^\alpha$. For $k\ge 0$ denote by $\mbC[[t^*_*]]^{(k)}$ the vector subspace of $\mbC[[t^*_*]]$ spanned by monomials $t^{\alpha_1}_{d_1}\ldots t^{\alpha_n}_{d_n}$ with $\sum d_i\ge k$. From the string equation for~$F$ it follows that
\begin{gather}\label{eq:property of wtop1}
\left.(w^\top)^\alpha_n\right|_{\eps=x=0}-t^\alpha_n-\delta_{n,1}\delta^{\alpha,1}\in\mbC[[t^*_*]]^{(n+1)}.
\end{gather}
Therefore, any power series in $t^\alpha_n$ and $\eps$ can be expressed as a power series in $\left(\left.(w^\top)^\alpha_n\right|_{x=0}-\delta_{n,1}\delta^{\alpha,1}\right)$ and $\eps$ in a unique way. Consider formal variables $w^1,\ldots,w^N$. In~\cite{BPS12b} the authors proved that for any $1\le\alpha,\beta\le N$ and $p,q\ge 0$ there exists a unique differential polynomial $\Omega^{\DZ}_{\alpha,p;\beta,q}\in\hcA^{[0]}_{w^1,\ldots,w^N}$ such that
$$
\Omega^\DZ_{\alpha,p;\beta,q}(w^\top,w^\top_x,\ldots;\eps)=\left.\frac{\d^2 F}{\d t^\alpha_p\d t^\beta_q}\right|_{t^1_0\mapsto t^1_0+x}.
$$
In particular, $\Omega^\DZ_{\alpha,0;1,0}=\eta_{\alpha\mu}w^\mu$. The equations of the Dubrovin-Zhang hierarchy are given by
\begin{gather}\label{eq:DZ system}
\frac{\d w^\alpha}{\d t^\beta_q}=\eta^{\alpha\mu}\d_x\Omega^\DZ_{\mu,0;\beta,q},\quad 1\le\alpha,\beta\le N,\quad q\ge 0.
\end{gather}
Clearly, the series~$(w^\top)^\alpha$ is a solution of these equations. It is called the topological solution.

The system~\eqref{eq:DZ system} has a Hamiltonian structure. The Hamiltonians are given by
\begin{gather}\label{eq:DZ Hamiltonians}
\oh^\DZ_{\alpha,p}=\int\Omega^\DZ_{\alpha,p+1;1,0}dx,\quad p\ge 0.
\end{gather}
The construction of the Hamiltonian operator is more complicated. Let 
$$
(v^\top)^\alpha:=\left.(w^\top)^\alpha\right|_{\eps=0}.
$$
Then any power series in $t^\alpha_n$ and $\eps$ can be expressed as a power series in $\left(\left.(v^\top)^\alpha_n\right|_{x=0}-\delta_{n,1}\delta^{\alpha,1}\right)$ and $\eps$ in a unique way. In particular, for $g\ge 1$ we can express the function $F_g$ as a function of $\left.(v^\top)^\alpha_n\right|_{x=0}$. Then $F_g$ depends only on $\left.(v^\top)^\alpha_n\right|_{x=0}$ with $n\le 3g-2$ (see e.g.~\cite[Proposition 4]{BPS12b}). This property is called the $3g-2$ property. Consider formal variables $v^1,\ldots,v^N$. Let~$\cA^\wk_{v^1,\ldots,v^N}$ be the ring of formal power series in $(v^\alpha_n-\delta^{\alpha,1}\delta_{n,1})$ with complex coefficients. We have a natural inclusion
$$
\cA_{v^1,\ldots,v^N}\subset\cA^\wk_{v^1,\ldots,v^N}.
$$
Let $\hcA^\wk_{v^1,\ldots,v^N}:=\cA^\wk_{v^1,\ldots,v^N}\otimes\mbC[[\eps]]$. Clearly, there exists a unique element $w^\alpha(v^*_*;\eps)\in\hcA^\wk_{v^1,\ldots,v^N}$ such that
$$
w^\alpha(v^\top,v^\top_x,\ldots;\eps)=(w^\top)^\alpha.
$$
We have 
\begin{gather}\label{eq:w-v change of variables}
w^\alpha(v^*_*;\eps)=v^\alpha+\sum_{g\ge 1}\eps^{2g}f^\alpha_g(v^*_*).
\end{gather}
The $3g-2$ property implies that the function $f^\alpha_g(v^*_*)\in\cA^\wk_{v^1,\ldots,v^N}$ depends only on $v^\gamma_n$ with $n\le 3g$. Then formula~\eqref{eq:w-v change of variables} can be considered as a change of variables between $v^\gamma$ and $w^\gamma$. Define an operator $K^\DZ(v^*_*;\eps)=\left((K^\DZ)^{\alpha\beta}(v^*_*;\eps)\right)$ by
\begin{gather}\label{eq:definition of the DZ operator}
(K^\DZ)^{\alpha\beta}(v^*_*;\eps)=\sum_{p,q\ge 0}\frac{\d w^\alpha(v^*_*;\eps)}{\d v^\mu_p}\d_x^p\circ\eta^{\mu\nu}\d_x\circ(-\d_x)^q\circ\frac{\d w^\beta(v^*_*;\eps)}{\d v^\nu_q}.
\end{gather}
Since $f^\alpha_g(v^*_*)$ depends only on $v^\gamma_n$ with $n\le 3g$, the expression on the right-hand side of~\eqref{eq:definition of the DZ operator} is well-defined. We have
$$
(K^\DZ)^{\alpha\beta}(v^*_*;\eps)=\sum_{i\ge 0}(K^\DZ)^{\alpha\beta}_i(v^*_*;\eps)\d_x^i.
$$
Let $(K^\DZ)^{\alpha\beta}_i(w^*_*;\eps)$ be the function $(K^\DZ)^{\alpha\beta}_i(v^*_*;\eps)$ expressed in the variables $w^\gamma$ using the change of variables~\eqref{eq:w-v change of variables}. We have $(K^\DZ)^{\alpha\beta}_i(w^*_*;\eps)\in\hcA^\wk_{w^1,\ldots,w^N}$. In~\cite{BPS12b} the authors proved that we actually have
$$
(K^\DZ)^{\alpha\beta}_i(w^*_*;\eps)\in\hcA^{[-i+1]}_{w^1,\ldots,w^N}.
$$
The operator $K^\DZ=\sum_{i\ge 0}K^\DZ_i(w^*_*;\eps)\d_x^i$ is Hamiltonian. Together with the local functionals~\eqref{eq:DZ Hamiltonians} it defines the Hamiltonian structure for the Dubrovin-Zhang system~\eqref{eq:DZ system}.

Finally, the tau-structure for the Dubrovin-Zhang hierarchy is given by the differential polynomials
$$
h^\DZ_{\alpha,p}=\Omega^\DZ_{\alpha,p+1;1,0},\quad p\ge -1.
$$
Since $h^\DZ_{\alpha,-1}=\eta_{\alpha\mu}w^\mu$, we see that the coordinates $w^\alpha$ are normal. 

%%%%%%%%%%%%%%%%%%%%%%%%%%%%%%%%%%%%%%%%%%%%%%%%%%%%%%%%%%%%%%%%%%%%%%

\subsection{Strong DR/DZ equivalence conjecture}\label{subsection:strong DR/DZ equivalence}

In~\cite[Section 7.3]{BDGR16} we proved that there exists a unique differential polynomial $\cP\in\hcA^{[-2]}_{w^1,\ldots,w^N}$ such that the power series $F^\red\in\mbC[[t^*_*,\eps]]$, defined by
\begin{gather}\label{eq:definition of Fred}
F^\red:=F+\left.\cP(w^\top,w^\top_x,w^\top_{xx},\ldots;\eps)\right|_{x=0},
\end{gather}
satisfies the following vanishing property:
\begin{gather}\label{eq:property of Fred}
\<\tau_{d_1}(e_{\alpha_1})\ldots\tau_{d_n}(e_{\alpha_n})\>^{\red}_g=0,\quad\text{if}\quad \sum d_i\le 2g-2,
\end{gather}
where $\<\tau_{d_1}(e_{\alpha_1})\ldots\tau_{d_n}(e_{\alpha_n})\>^{\red}_g$ are the coefficients of the expansion of $F^{\red}$:
$$
F^\red(t^*_*,\eps):=\sum_{g,n\ge 0}\frac{\eps^{2g}}{n!}\sum_{d_1,\ldots,d_n\ge 0}\<\tau_{d_1}(e_{\alpha_1})\ldots\tau_{d_n}(e_{\alpha_n})\>^\red_g \frac{t^{\alpha_1}_{d_1}\ldots t^{\alpha_n}_{d_n}}{n!}.
$$
We called the power series $F^\red$ the reduced potential of the cohomological field theory. We proved that the reduced potential~$F^\red$ satisfies the string and the dilaton equations:
\begin{align*}
&\frac{\d F^\red}{\d t^1_0}=\sum_{n\ge 0}t^\alpha_{n+1}\frac{\d F^\red}{\d t^\alpha_n}+\frac{1}{2}\eta_{\alpha\beta}t^\alpha_0 t^\beta_0,\\
&\frac{\d F^\red}{\d t^1_1}=\eps\frac{\d F^\red}{\d\eps}+\sum_{n\ge 0}t^\alpha_n\frac{\d F^\red}{\d t^\alpha_n}-2F^{\red}+\eps^2\frac{N}{24}.
\end{align*}
Recall (see~\cite[Section 4]{BDGR16}) that the tau-structure for the double ramification hierarchy is given by the differential polynomials $h^\DR_{\alpha,p}=\frac{\delta\og_{\alpha,p+1}}{\delta u^1}$. The normal coordinates for this tau-structure are
\begin{gather}\label{eq:normal coordinates for DR}
\tu^\alpha(u^*_*;\eps)=\eta^{\alpha\mu}h^\DR_{\mu,-1}=\eta^{\alpha\mu}\frac{\delta\og_{\mu,0}}{\delta u^1}.
\end{gather}
In~\cite[Section 7.3]{BDGR16} we proposed the following conjecture.
\begin{conjecture}\label{conjecture:strong}
The normal Miura transformation defined by the differential polynomial~$\cP$ transforms the Dubrovin-Zhang hierarchy to the double ramification hierarchy written in the normal coordinates $\tu^\alpha$.
\end{conjecture}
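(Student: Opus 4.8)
The plan is to reduce the conjecture to the single identity $F^\red = F^\DR$ and then to establish this identity using the geometric formula of Theorem~\ref{theorem:main geometric formula}. By the general correspondence between normal Miura transformations and tau-functions of tau-symmetric hierarchies, the transformation generated by $\cP$ sends the Dubrovin--Zhang hierarchy to the double ramification hierarchy in the normal coordinates $\tu^\alpha$ precisely when it sends the DZ Hamiltonian operator to the DR one and the DZ partition function to the DR one. The operator half, namely that $\cP$ carries $K^\DZ$ into the standard operator $\eta\d_x$, is exactly Theorem~\ref{theorem:Miura for DZ}. Since $\cP$ acts on potentials by $F \mapsto F^\red = F + \cP(w^\top, w^\top_x, \ldots;\eps)|_{x=0}$ (equation~\eqref{eq:definition of Fred}), and since the DR hierarchy in normal coordinates has potential $F^\DR$, the remaining content of the conjecture is exactly the equality of potentials $F^\red = F^\DR$.

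Next I would record that both potentials satisfy the same three structural constraints: the string equation, the dilaton equation, and the vanishing $\langle\tau_{d_1}(e_{\alpha_1})\cdots\tau_{d_n}(e_{\alpha_n})\rangle_g = 0$ whenever $\sum d_i \le 2g-2$. For $F^\DR$ these are~\eqref{eq:string for DR},~\eqref{eq:dilaton for DR} and the vanishing range $\sum d_i \ge 2g-1$; for $F^\red$ they are the reduced string and dilaton equations together with~\eqref{eq:property of Fred}. Hence $F^\red - F^\DR$ obeys the same linear string and dilaton relations and vanishes whenever $\sum d_i \le 2g-2$. Using the string equation to eliminate $\tau_0(e_1)$ insertions and the dilaton equation to eliminate $\tau_1(e_1)$ insertions, I would set up a double induction on the genus $g$ and on $d = \sum d_i$, whose base case is the comparison of the bottom-range correlators $\sum d_i = 2g-1$.

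The heart of the argument is to feed the geometric formula into this induction. At the bottom of the range, formula~\eqref{eq:geometric formula for k=0} gives the DR correlators with $\sum d_i = 2g-1$ as $\frac{1}{\sum a_i}\int_{\oM_{g,n+1}}\DR_g(-\sum a_i, a_1,\ldots,a_n)\lambda_g c_{g,n+1}(e_1\otimes\otimes_{i=1}^n e_{\alpha_i})$, and for general $d$ the full formula~\eqref{eq:main geometric formula} expresses them as a sum over stable trees of integrals of $\lambda_g\DR_\Gamma$ against $c_{g,n+1}$. The strategy is to derive a matching intersection-theoretic expression for the reduced correlators $\langle\cdots\rangle^\red_g$ --- assembled from the ordinary CohFT numbers $\int_{\oM_{g,n}}c_{g,n}(\otimes_{i=1}^n e_{\alpha_i})\prod\psi_i^{d_i}$ corrected by the polynomial $\cP$ --- and then to reduce $F^\red = F^\DR$ to a family of identities in the tautological cohomology of $\oM_{g,n+1}$. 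These identities would be attacked with the divisibility properties of Section~\ref{subsubsection:divisibility properties}, the $\DR\cdot\psi$ relation~\eqref{eq:DR times psi}, and the vanishing of $\lambda_g$ away from the compact-type locus.

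The hard part is the matching step just described. The double ramification side is completely explicit through Theorem~\ref{theorem:main geometric formula}, but the reduced potential is defined only indirectly, through the topological solution of the Dubrovin--Zhang hierarchy and the abstractly characterized differential polynomial $\cP$; there is no a priori intersection-theoretic presentation of its correlators. Producing such a presentation of the same precision as the DR one, and then matching the two term-by-term as tautological relations, is exactly where the difficulty of the DR/DZ equivalence is concentrated. This is why the conjecture is at present only accessible in restricted settings --- for instance for rank~$1$ cohomological field theories up to genus~$5$, as carried out later in this paper.
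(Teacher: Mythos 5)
The statement you were asked to prove is a \emph{conjecture}: the paper itself contains no proof of Conjecture~\ref{conjecture:strong}, only partial evidence, namely Theorem~\ref{theorem:Miura for DZ} (the candidate Miura transformation carries $K^\DZ$ to $\eta\d_x$) and the verification for rank-$1$ cohomological field theories up to genus~$5$ in Section~\ref{subsection:DR/DZ equivalence up to genus 5}. So there is no paper proof to compare your attempt against, and your proposal, which you yourself concede does not close the key step, is consistent with the actual status of the statement.

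Your framing does match the paper's own. The reduction of the conjecture to the single identity $F^\red=F^\DR$ is exactly the equivalence the paper quotes from~\cite[Section~7.3]{BDGR16}, and identifying Theorem~\ref{theorem:Miura for DZ} as the operator half is correct --- though note that, by that same equivalence, the potential identity $F^\red=F^\DR$ alone already implies the full conjecture, so the operator statement is a consequence (and hence evidence), not an independent requirement as your first paragraph suggests. Where your proposal stops is precisely where the open problem lives: you have the DR side in explicit intersection-theoretic form via Theorem~\ref{theorem:main geometric formula} and~\eqref{eq:geometric formula for k=0}, but no comparable presentation of the reduced correlators, which are defined only through the topological solution of the DZ hierarchy and the abstractly characterized polynomial $\cP$. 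Your proposed double induction on $g$ and $\sum d_i$ therefore has no base-case mechanism to compare the two sides, and no amount of string/dilaton bookkeeping supplies one. The paper circumvents this in the only currently tractable setting by a different route: Proposition~\ref{proposition:sufficient condition for strong} gives a sufficient condition in terms of a Miura transformation of a special form, and in rank~$1$ this is checked by matching the standard form of the DZ hierarchy (via Theorem~\ref{theorem:standard form of DZ} and the explicit density~\eqref{eq:DZ standard density}) against the DR hierarchy's standard form from Proposition~\ref{proposition:DR is standard}, coefficient by coefficient through genus~$5$. In short: your proposal contains no false step, but it is a correct problem analysis rather than a proof, which is unavoidable given that the statement is open.
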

\noindent
We called this conjecture the strong DR/DZ equivalence conjecture. In~\cite[Section~7.3]{BDGR16} we proved that the strong DR/DZ equivalence conjecture is true if and only if $F^\DR=F^\red$.

Note that formulas~\eqref{eq:normal coordinates for DR} and~\eqref{eq:geometric formula for k=0} together with the string equation for~$F^\DR$ imply that that the normal coordinates $\tu^\alpha(u^*_*;\eps)$ can be described using the double ramification correlators:
\begin{gather}\label{eq:u-tu in terms of FDR}
\tu^\alpha(u^*_*;\eps)=u^\alpha+\sum_{g,n\ge 1}\frac{\eps^{2g}}{n!}\sum_{d_1+\ldots+d_n=2g}\eta^{\alpha\mu}\<\tau_0(e_1)\tau_0(e_\mu)\prod\tau_{d_i}(e_{\alpha_i})\>^{\DR}_g\prod u^{\alpha_i}_{d_i}.
\end{gather}
If Conjecture~\ref{conjecture:strong} is true, then $\<\tau_0(e_1)\tau_0(e_\mu)\prod\tau_{d_i}(e_{\alpha_i})\>^{\DR}_g=\<\tau_0(e_1)\tau_0(e_\mu)\prod\tau_{d_i}(e_{\alpha_i})\>^{\red}_g$. Together with equation~\eqref{eq:u-tu in terms of FDR}, it motivates the following theorem.
%Suppose that Conjecture~\ref{conjecture:strong} is true, then $F^\DR=F^\red$. Using equation~\eqref{eq:u-tu in terms of FDR} we obtain the following theorem.

\begin{theorem}\label{theorem:Miura for DZ}
Define Miura transformations $w^\alpha\mapsto\tu^\alpha(w^*_*;\eps)$ and $u^\alpha\mapsto\tu^\alpha(u^*_*;\eps)$ by
\begin{align*}
\tu^\alpha(w^*_*;\eps)=&w^\alpha+\eta^{\alpha\mu}\d_x\{\cP,\oh^\DZ_{\mu,0}\}_{K^\DZ},\\
\tu^\alpha(u^*_*;\eps)=&u^\alpha+\sum_{g,n\ge 1}\frac{\eps^{2g}}{n!}\sum_{d_1+\ldots+d_n=2g}\eta^{\alpha\mu}\<\tau_0(e_1)\tau_0(e_\mu)\prod\tau_{d_i}(e_{\alpha_i})\>^{\red}_g\prod u^{\alpha_i}_{d_i}.
\end{align*}
Then the Miura transformation $w^\alpha\mapsto u^\alpha(w^*_*;\eps)$ transforms the operator $K^\DZ$ to $\eta\d_x$.
\end{theorem}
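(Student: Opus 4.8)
The plan is to route everything through the common intermediate coordinate $\tu$ and to read the theorem as the statement that the Dubrovin--Zhang operator construction is natural under the normal Miura transformation generated by $\cP$. Recall the transformation law for Hamiltonian operators: a Miura transformation $u^\alpha\mapsto\widehat u^\alpha(u^*_*;\eps)$ sends $K^{\mu\nu}$ to $\widehat K^{\alpha\beta}=\sum_{s,t}\frac{\d\widehat u^\alpha}{\d u^\mu_s}\,\d_x^s\circ K^{\mu\nu}\circ(-\d_x)^t\,\frac{\d\widehat u^\beta}{\d u^\nu_t}$, and the very definition \eqref{eq:definition of the DZ operator} says that $K^\DZ$ is the image of $\eta\d_x$ under $v\mapsto w(v^*_*;\eps)$, so that $v$ is a flat coordinate for the Dubrovin--Zhang hierarchy. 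The composite $w\mapsto u$ of the theorem is the first transformation followed by the inverse of the second one; by the transformation law it carries $K^\DZ$ to $\eta\d_x$ if and only if the operator obtained from $K^\DZ$ by the first transformation coincides, in the $\tu$-variables, with the operator obtained from $\eta\d_x$ by the second transformation. Written over the common coordinate $\tu$, this is equivalent to showing that the theorem's $u$ equals the Dubrovin--Zhang flat coordinate $v$.

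First I would identify the hierarchy living in the $\tu$-variables. The first Miura transformation is exactly the normal Miura transformation generated by $\cP$ acting on the normal coordinates $w$, so it produces a tau-symmetric hierarchy $\mathcal H$ whose normal coordinate is $\tu$, whose Hamiltonian operator $\widetilde K$ is the image of $K^\DZ$, and --- since a normal Miura transformation shifts the potential by the generating polynomial evaluated at the topological solution --- whose potential is exactly the reduced potential $F^\red$ of \eqref{eq:definition of Fred}, with correlators $\<\,\cdot\,\>^\red$. Composing the two changes of variables $v\mapsto w\mapsto\tu$ and tracking operators along $\eta\d_x\mapsto K^\DZ\mapsto\widetilde K$ shows that $\widetilde K$ is the image of $\eta\d_x$ under $v\mapsto\tu$; hence $v$ is a flat coordinate for $\mathcal H$ too, i.e. $\widetilde K$ equals $\eta\d_x$ when written in the coordinate $v$.

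It then remains to show that the second Miura transformation is nothing but the flat-to-normal change of $\mathcal H$ written in the flat coordinate $v$, which forces $u=v$. For this I would invoke the relation established for the double ramification hierarchy as \eqref{eq:u-tu in terms of FDR}, which holds for any tau-symmetric hierarchy whose operator is $\eta\d_x$ in flat coordinates: it expresses the normal coordinate $\tu^\alpha=\eta^{\alpha\mu}\Omega_{\mu,0;1,0}$ as the explicit series in the flat variables whose coefficients are the two-point correlators $\<\tau_0(e_1)\tau_0(e_\mu)\prod\tau_{d_i}(e_{\alpha_i})\>$ of the potential, the two insertions of $\tau_0$ arising from the string equation and from the geometric formula \eqref{eq:geometric formula for k=0}. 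Applying this to $\mathcal H$, whose potential is $F^\red$ and whose flat coordinate is $v$, reproduces precisely the defining formula of the second transformation with $u$ replaced by $v$; comparing the two series gives $u=v$. Consequently $w\mapsto u$ is the inverse of $v\mapsto w$, and by the definition of $K^\DZ$ it carries $K^\DZ$ to $\eta\d_x$.

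The main obstacle is this last identification: one must know that the correlator formula \eqref{eq:u-tu in terms of FDR} for the flat-to-normal change is not special to the double ramification hierarchy but holds verbatim for the transformed hierarchy $\mathcal H$, and in particular that $\mathcal H$ genuinely carries $\eta\d_x$ as its Hamiltonian operator in the flat coordinate $v$. This is exactly where using the \emph{reduced} correlators $\<\,\cdot\,\>^\red$, rather than the a priori unknown $\<\,\cdot\,\>^\DR$, is essential, so that the argument does not circularly presuppose the DR/DZ equivalence conjecture. A secondary technical point to verify carefully is that the normal Miura transformation generated by $\cP$ indeed shifts the Dubrovin--Zhang potential to $F^\red$ and takes $w$ to the coordinate $\tu$ appearing in the first transformation, so that the two occurrences of $\tu$ in the statement really coincide.
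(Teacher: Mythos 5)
Your reduction breaks at the pivotal step, and the statement you reduce to is false. After pushing both operators to the common coordinate $\tu$, the theorem is \emph{not} equivalent to ``the theorem's $u$ equals the Dubrovin--Zhang flat coordinate $v$'': it is equivalent to the weaker statement that the composite change of variables $v\mapsto u$ preserves $\eta\d_x$, and transformations preserving $\eta\d_x$ need not be the identity. Worse, $u=v$ fails in general. The change of variables $v\mapsto w(v^*_*;\eps)$ entering \eqref{eq:definition of the DZ operator} is not a Miura transformation in the polynomial class the paper works with; it is a \emph{rational} (quasi-Miura) transformation. Take the trivial rank-$1$ CohFT: there every correlator with $\sum d_i\le 2g-2$ vanishes for dimension reasons, so $F^\red=F$ and $\cP=0$, hence $\tu=w$ and also $u=\tu=w$ (the second Miura transformation is trivial since the relevant correlators with $\sum d_i=2g$ also vanish); thus $u(v^*_*;\eps)=w(v^*_*;\eps)$ is the KdV quasi-triviality transformation, whose genus-one term is proportional to $\d_x^2\log v^1_x$ --- not the identity, and not even polynomial. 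What is true, and what the paper actually proves, is only that the \emph{polynomial part} of $u(v^*_*;\eps)$ equals $v^\alpha$; that weaker fact, combined with polynomiality of the coefficients of $K^\DZ$ in the $u$-variables, is what forces $K^\DZ_u=\eta\d_x$.

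The same confusion sinks your ``general principle''. You assert that \eqref{eq:u-tu in terms of FDR} holds ``for any tau-symmetric hierarchy whose operator is $\eta\d_x$ in flat coordinates'' and apply it to the transformed hierarchy with flat coordinate $v$; this would say that $\tu^\alpha(v^*_*;\eps)$ equals the polynomial series with coefficients $\<\tau_0(e_1)\tau_0(e_\mu)\prod\tau_{d_i}(e_{\alpha_i})\>^\red_g$, i.e.\ that the quasi-Miura transformation $v\mapsto\tu$ is polynomial --- false, as the KdV example shows. In the paper, \eqref{eq:u-tu in terms of FDR} for the DR hierarchy is not a formal consequence of tau-symmetry: it rests on the geometric formula \eqref{eq:geometric formula for k=0}. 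The correct statement on the DZ side is \eqref{eq:P and red}: the reduced correlators only give the \emph{polynomial part} of $\tu^\alpha(v^*_*;\eps)$, with the vanishing property \eqref{eq:property of Fred} controlling which powers of $v^1_x$ can occur. This is why the actual proof must be carried out in a class of rational Miura transformations: one needs Proposition~\ref{proposition:rationality} (poles only in $v^1_x$, tameness, $\d w^\alpha/\d v^1=\delta^{\alpha,1}$), then Lemma~\ref{lemma:composition of rational Miura transformations} and Lemma~\ref{lemma:inverse} to get $u^\alpha(v^*_*;\eps)^\pol=v^\alpha$, and finally the key Lemma~\ref{lemma:rational Miura}: a rational Miura transformation with identity polynomial part and no $v^1$-dependence that conjugates $\eta\d_x$ into an operator with \emph{polynomial} coefficients must yield exactly $\eta^{\alpha\beta}\d_x$. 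This machinery, which is the substance of the theorem, is absent from your proposal and cannot be bypassed by the naturality argument you outline.
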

\noindent
We will prove this theorem in the next section.

%%%%%%%%%%%%%%%%%%%%%%%%%%%%%%%%%%%%%%%%%%%%%%%%%%%%%%%%%%%%%%%%%%%%%%

\subsection{Proof of Theorem~\ref{theorem:Miura for DZ}}\label{subsection:proof of Miura for DZ}

We split the proof into three steps. In Section~\ref{subsubsection:rational Miura} we introduce rational Miura transformations and discuss their properties. In Section~\ref{subsubsection:rationality of w} we prove that the change of variables $v^\alpha\mapsto w^\alpha(v^*_*;\eps)$ from Section~\ref{subsection:recall of DZ theory} is a rational Miura transformation. Finally, in Section~\ref{subsubsection:final step} we prove Theorem~\ref{theorem:Miura for DZ}.

\subsubsection{Rational Miura transformations}\label{subsubsection:rational Miura}

For $d\in\mbZ$ let~$\cA^{\rt,[d]}_{v^1,\ldots,v^N}$ be the vector space spanned by expressions of the form
\begin{gather}\label{eq:rational function}
\sum_{i\ge m}\frac{P_i(v^*_*)}{(v^1_x)^i},
\end{gather}
where $m\in\mbZ$, $P_i\in\cA^{[d+i]}_{v^1,\ldots,v^N}$ and $\frac{\d P_i}{\d v^1_x}=0$. Let $\cA^{\rt}_{v^1,\ldots,v^N}:=\bigoplus_{d\in\mbZ}\cA^{\rt,[d]}_{v^1,\ldots,v^N}$. Since 
$$
\frac{1}{(v^1_x)^i}=(1+(v^1_x-1))^{-i}=\sum_{k\ge 0}{-i\choose k}(v^1_x-1)^k,
$$
we have a natural inclusion 
$$
\cA^{\rt}_{v^1,\ldots,v^N}\subset\cA^\wk_{v^1,\ldots,v^N}.
$$
In the same way, as for differential polynomials, we introduce a grading by $\deg v^\alpha_i=i$. Then the subspace $\cA^{\rt,[d]}_{v^1,\ldots,v^N}\subset \cA^{\rt}_{v^1,\ldots,v^N}$ consists precisely of elements of degree $d$. For an element $f(v^*_*)=\sum_{i\ge m}\frac{P_i(v^*_*)}{(v^1_x)^i}\in\cA^\rt_{v^1,\ldots,v^N}$ define the polynomial part by
\begin{gather*}
f(v^*_*)^\pol:=\sum_{i=m}^0\frac{P_i(v^*_*)}{(v^1_x)^i}\in\cA_{v^1,\ldots,v^N}.
\end{gather*}
Define the extended space $\hcA^\rt_{v^1,\ldots,v^N}:=\cA^\rt_{v^1,\ldots,v^N}[[\eps]]$. Denote by 
$$
\hcA^{\rt,[d]}_{v^1,\ldots,v^N}\subset\hcA^\rt_{v^1,\ldots,v^N}
$$
the subspace of elements of degree $d$, where we, as usual, set $\deg\eps=-1$. 

A rational function~\eqref{eq:rational function} is called tame, if there exists a non-negative integer~$C$ such that $\frac{\d P_i}{\d v^\alpha_k}=0$ for $k>C$. The subspace of tame elements in~$\cA^\rt_{v^1,\ldots,v^N}$ will be denoted by 
$$
\cA^{\rt,\t}_{v^1,\ldots,v^N}\subset\cA^\rt_{v^1,\ldots,v^N}.
$$
An element $f(v^*_*;\eps)=\sum_{g\ge 0}\eps^g f_g(v^*_*)\in\hcA^{\rt}_{v^1,\ldots,v^N}$ will be called tame if all functions $f_g\in\cA^{\rt,\t}_{v^1,\ldots,v^N}$ are tame. The subspace of tame elements in~$\hcA^\rt_{v^1,\ldots,v^N}$ will be denoted by 
$$
\hcA^{\rt,\t}_{v^1,\ldots,v^N}\subset\hcA^\rt_{v^1,\ldots,v^N}.
$$
Consider changes of variables of the form
\begin{gather}
v^\alpha\mapsto w^\alpha(v^*_*;\eps)=v^\alpha+\eps f^\alpha(v^*_*;\eps),\quad \alpha=1,\ldots,N,\quad f^\alpha\in\hcA^{\rt,\t,[1]}_{v^1,\ldots,v^N}\label{eq:rational Miura transformation}.
\end{gather}
We will call them rational Miura transformations. These transformations form a group. Any tame rational function $f(v^*_*;\eps)\in\hcA^{\rt,\t}_{v^1,\ldots,v^N}$ can be rewritten as a tame rational function in the new variables $w^\alpha$. The resulting tame rational function will be denoted by $f(w^*_*;\eps)$. Clearly, the polynomial part~$w^\alpha(v^*_*;\eps)^\pol$ of a rational Miura transformation~\eqref{eq:rational Miura transformation} is a usual Miura transformation. 

Define a subspace $S_{v^1,\ldots,v^N}\subset\hcA^{\rt,\t}_{v^1,\ldots,v^N}$ by
$$
S_{v^1,\ldots,v^N}:=\left\{\left.f\in\hcA^{\rt,\t}_{v^1,\ldots,v^N}\right|f^\pol=0,\frac{\d f}{\d v^1}=0\right\}.
$$
It is easy to see that the subspace $S_{v^1,\ldots,v^N}$ is closed under multiplication and also under the derivations $\d_x$ and $\frac{\d}{\d v^\gamma_n}$.

\begin{lemma}\label{lemma:rational Miura}
Let $v^\alpha\mapsto w^\alpha(v^*_*;\eps)$ be a rational Miura transformation such that 
$$
(w^\alpha)^\pol(v^*_*;\eps)=v^\alpha\quad\text{and}\quad \frac{\d w^\alpha(v^*_*;\eps)}{\d v^1}=\delta^{\alpha,1}.
$$
Consider an operator $K=(K^{\alpha\beta})$ defined by 
\begin{gather}\label{eq:hamiltonian operator after singular transformation}
K^{\alpha\beta}:=\sum_{p,q\ge 0}\frac{\d w^\alpha(v^*_*;\eps)}{\d v^\mu_p}\d_x^p\circ\eta^{\mu\nu}\d_x\circ(-\d_x)^q\circ\frac{\d w^\beta(v^*_*;\eps)}{\d v^\nu_q}=\sum_{i\ge 0}K^{\alpha\beta}_i(v^*_*;\eps)\d_x^i.
\end{gather}
Suppose that $K^{\alpha\beta}_i(w^*_*;\eps)\in\hcA_{w^1,\ldots,w^N}$. Then $K^{\alpha\beta}=\eta^{\alpha\beta}\d_x$.
\end{lemma}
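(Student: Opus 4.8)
The plan is to show that the off-diagonal correction $\Delta^{\alpha\beta} := K^{\alpha\beta} - \eta^{\alpha\beta}\d_x$ vanishes identically, by proving that each of its coefficients lies in the subspace $S_{v^1,\ldots,v^N}$ when written in the $v$-variables, while the hypothesis forces it to be a genuine differential polynomial in the $w$-variables; these two facts are incompatible unless the coefficient is zero. First I would set $P^\alpha := w^\alpha(v^*_*;\eps) - v^\alpha$. The two hypotheses $(w^\alpha)^\pol = v^\alpha$ and $\frac{\d w^\alpha}{\d v^1} = \delta^{\alpha,1}$ say exactly that $(P^\alpha)^\pol = 0$ and $\frac{\d P^\alpha}{\d v^1} = 0$, and together with tameness (built into the definition of a rational Miura transformation) this gives $P^\alpha \in S_{v^1,\ldots,v^N}$; note also $P^\alpha = O(\eps)$. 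Substituting $\frac{\d w^\alpha}{\d v^\mu_p} = \delta^\alpha_\mu\delta_{p,0} + \frac{\d P^\alpha}{\d v^\mu_p}$ into the defining formula \eqref{eq:hamiltonian operator after singular transformation} and expanding, the single term carrying no $P$-factor reproduces $\eta^{\alpha\beta}\d_x$, so every surviving term of $\Delta^{\alpha\beta}$ carries at least one factor $\frac{\d P}{\d v}$. Since $S_{v^1,\ldots,v^N}$ is closed under multiplication, under $\d_x$ and under $\frac{\d}{\d v^\gamma_n}$, collecting the coefficient of each $\d_x^i$ (moving all derivatives to the right by the Leibniz rule) shows $\Delta^{\alpha\beta}_i(v^*_*;\eps) \in S_{v^1,\ldots,v^N}$ for every $i$.

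On the other hand, the hypothesis $K^{\alpha\beta}_i(w^*_*;\eps) \in \hcA_{w^1,\ldots,w^N}$, together with the triviality of the coefficients of $\eta^{\alpha\beta}\d_x$, gives $\Delta^{\alpha\beta}_i(w^*_*;\eps) \in \hcA_{w^1,\ldots,w^N}$. I would now run an induction on the $\eps$-order. At $\eps = 0$ one has $w^\alpha = v^\alpha$, hence $K|_{\eps=0} = \eta\d_x$ and $\Delta|_{\eps=0} = 0$. Assume the first $K$ powers of $\eps$ in $\Delta^{\alpha\beta}_i$ vanish, and denote its $\eps^K$-coefficient by $d_K$. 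Because the inverse change of variables is again a rational Miura transformation, hence of the form $v = w + O(\eps)$, and because all lower-order coefficients vanish by the inductive hypothesis, the $\eps^K$-coefficient of $\Delta^{\alpha\beta}_i$ computed in the $w$-variables equals $d_K$ with its $v$-symbols formally replaced by the corresponding $w$-symbols. This replacement sends $S_{v^1,\ldots,v^N}$ into $S_{w^1,\ldots,w^N}$, so the resulting element has vanishing polynomial part in $w$; but it also lies in $\hcA_{w^1,\ldots,w^N}$, i.e.\ it equals its own polynomial part. An element of $\hcA_{w^1,\ldots,w^N}\cap S_{w^1,\ldots,w^N}$ is zero, whence $d_K = 0$. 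This closes the induction and yields $\Delta^{\alpha\beta}_i = 0$ for all $i$, i.e.\ $K^{\alpha\beta} = \eta^{\alpha\beta}\d_x$.

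The operator bookkeeping of the first paragraph is routine once the closure properties of $S_{v^1,\ldots,v^N}$ are invoked. The main obstacle is the step reconciling the two coordinate systems: the condition ``zero polynomial part'' is natural in the $v$-variables while ``being a differential polynomial'' is natural in the $w$-variables, and the rational Miura transformation mixes $\eps$-orders and acts nontrivially on $v^1_x$, so the two conditions cannot be compared directly. The $\eps$-adic induction is precisely what makes them comparable: at the lowest surviving order the transformation degenerates to the identity (mere symbol replacement), so the polynomial-part/singular-part decomposition is preserved at that order and the contradiction $\hcA_w \cap S_w = \{0\}$ can be extracted. Both hypotheses on $w^\alpha$ are used essentially, since each supplies one of the two defining conditions of $S_{v^1,\ldots,v^N}$.
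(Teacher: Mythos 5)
Your proof is correct, and its overall skeleton coincides with the paper's: both arguments reduce the lemma to showing that the non-leading coefficients $K^{\alpha\beta}_i(v^*_*;\eps)-\delta_{i,1}\eta^{\alpha\beta}$ lie in the subspace $S_{v^1,\ldots,v^N}$ (using exactly the two hypotheses to place $w^\alpha-v^\alpha$ there, together with the closure of $S$ under products, $\d_x$ and $\frac{\d}{\d v^\gamma_n}$), and then exploit the fact that $S_{w^1,\ldots,w^N}\cap\hcA_{w^1,\ldots,w^N}=0$. Where you genuinely diverge is the bridge between the two coordinate systems. The paper disposes of it in one line, asserting as an observation that the substitution $v^\alpha\mapsto v^\alpha(w^*_*;\eps)$ maps $S_{v^1,\ldots,v^N}$ into $S_{w^1,\ldots,w^N}$, and applies this to the full $\eps$-series at once; justifying that observation would itself require a Taylor-expansion induction using the closure properties of $S$ and the group property of rational Miura transformations, which the paper leaves to the reader. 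You instead run an induction on the $\eps$-order: since the inverse transformation is $v=w+O(\eps)$ and all lower-order coefficients vanish by the inductive hypothesis, the lowest surviving $\eps$-coefficient transforms by a mere formal renaming of $v$-symbols into $w$-symbols, for which preservation of $S$ is a tautology. This buys self-containedness --- you never need substitution-invariance of $S$ under a genuine $\eps$-dependent rational change of variables --- at the cost of some $\eps$-adic bookkeeping; the paper's route is shorter but rests on an unproved (though true) invariance claim.
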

\begin{proof}
From formula~\eqref{eq:hamiltonian operator after singular transformation} one can easily see that
$$
K^{\alpha\beta}_i(v^*_*;\eps)-\delta_{i,1}\eta^{\alpha\beta}\in S_{v^1,\ldots,v^N}.
$$
Observe that if $f(v^*_*;\eps)\in S_{v^1,\ldots,v^N}$, then $f(w^*_*;\eps)\in S_{w^1,\ldots,w^N}$. Since $S_{w_1,\ldots,w^N}\cap \hcA_{w_1,\ldots,w^N}=0$, we get $K^{\alpha\beta}_i(w^*_*,\eps)-\delta_{i,1}\eta^{\alpha\beta}=0$. The lemma is proved.
\end{proof}

\begin{lemma}\label{lemma:composition of rational Miura transformations}
Consider three sets of variables $v^\alpha$, $u^\alpha$ and $w^\alpha$. Suppose that we have rational Miura transformations $v^\alpha\mapsto u^\alpha(v^*_*;\eps)$ and $u^\alpha\mapsto w^\alpha(u^*_*;\eps)$ such that
\begin{align*}
&\frac{\d u^\alpha(v^*_*;\eps)}{\d v^1}=\delta^{\alpha,1},&& \frac{\d u^\alpha(v^*_*;\eps)^\pol}{\d v^1_x}=0,\\
&\frac{\d w^\alpha(u^*_*;\eps)}{\d u^1}=\delta^{\alpha,1},&& \frac{\d w^\alpha(u^*_*;\eps)^\pol}{\d u^1_x}=0.
\end{align*}
Then the polynomial part of the composition of these rational Miura transformations is equal to the composition of their polynomial parts.
\end{lemma}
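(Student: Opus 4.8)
The plan is to split each rational Miura transformation into its polynomial part and a ``singular remainder'' lying in $S_{v^1,\ldots,v^N}$ (resp. $S_{u^1,\ldots,u^N}$), and then to show that these remainders never contribute to the polynomial part of the composite, so that what survives is exactly the composition of the polynomial parts. Writing $u^\alpha=v^\alpha+\eps f^\alpha$ and $w^\alpha=u^\alpha+\eps g^\alpha$, I would first read off from the hypotheses that $f^\alpha$ is independent of $v^1$ and $g^\alpha$ of $u^1$ (differentiate the two displayed identities in $v^1$, resp. $u^1$), and that $(f^\alpha)^\pol$ contains no power of $v^1_x$ at all, and likewise for $g$ (because $(u^\alpha)^\pol=v^\alpha+\eps(f^\alpha)^\pol$ is assumed independent of $v^1_x$, and a polynomial part has only non-negative powers of $v^1_x$). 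Consequently $(u^\alpha)^\sing:=u^\alpha-(u^\alpha)^\pol=\eps\bigl(f^\alpha-(f^\alpha)^\pol\bigr)\in S_{v^1,\ldots,v^N}$ and $(w^\alpha)^\sing\in S_{u^1,\ldots,u^N}$.

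The technical core is that $S$ is \emph{not} an ideal in $\hcA^{\rt,\t}$, so I must identify a ring over which it is a module. Let $\mathcal R_v\subset\hcA^{\rt,\t}_{v^1,\ldots,v^N}$ consist of the tame rational functions that are independent of $v^1$ and have no positive power of $v^1_x$. A power count in $v^1_x$ shows that $\mathcal R_v$ is a subring containing $S_{v^1,\ldots,v^N}$, that it is closed under $\d_x$ (using that $\d_x$ preserves independence of both $v^1$ and $v^1_x$ for differential polynomials), and, decisively, that $\mathcal R_v\cdot S_{v^1,\ldots,v^N}\subset S_{v^1,\ldots,v^N}$, since a strictly negative power of $v^1_x$ times a non-positive one stays strictly negative while independence of $v^1$ is preserved. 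The same bookkeeping yields the substitution statement I need: the pullback $\sigma$ along $u^\alpha=u^\alpha(v)$ maps $\mathcal R_u$ into $\mathcal R_v$; and since $u^1_x=v^1_x(1+\eps s)$ with $s\in S_{v^1,\ldots,v^N}$ one gets $1/(u^1_x)^i\in S_{v^1,\ldots,v^N}$, whence, writing $h\in S_u$ as $\sum_{i\ge1}Q_i/(u^1_x)^i$ with $Q_i$ independent of $u^1_0,u^1_x$ and using $\sigma(Q_i)\in\mathcal R_v$ together with the module property, $\sigma(h)\in S_{v^1,\ldots,v^N}$. This is the analogue, in the direction we need, of the invariance of $S$ recorded just before Lemma~\ref{lemma:rational Miura}.

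With these tools the conclusion is a short computation. Setting $\bar u^\alpha:=(u^\alpha)^\pol$, $g^\alpha_0:=(g^\alpha)^\pol$ and $g^\alpha_{<0}:=(g^\alpha)^\sing$, I would write
\[
w^\alpha(u(v))-(w^\alpha)^\pol(\bar u(v))=(u^\alpha)^\sing+\eps\bigl(g^\alpha_0(u(v))-g^\alpha_0(\bar u(v))\bigr)+\eps\,g^\alpha_{<0}(u(v)).
\]
The first term lies in $S_{v^1,\ldots,v^N}$ by the first paragraph; the third is $\sigma$ applied to $g^\alpha_{<0}\in S_{u^1,\ldots,u^N}$, hence in $S_{v^1,\ldots,v^N}$ by the substitution statement; and for the middle term I would Taylor-expand $g^\alpha_0$ about $\bar u(v)$ in the increment $u(v)-\bar u(v)=(u)^\sing(v)$, so that each resulting term carries at least one factor $\d_x^k(u^\beta)^\sing\in S_{v^1,\ldots,v^N}$ multiplied by a derivative of $g^\alpha_0$ (which is independent of $u^1_0,u^1_x$) evaluated at $\bar u(v)$, an element of $\mathcal R_v$; the module property then places each term, and hence the middle term, in $S_{v^1,\ldots,v^N}$. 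Thus the right-hand side has vanishing polynomial part, i.e. $[w^\alpha(u(v))]^\pol=(w^\alpha)^\pol(\bar u(v))$, and the latter is manifestly a differential polynomial, namely the composition of the polynomial Miura transformations $(u^\alpha)^\pol$ and $(w^\alpha)^\pol$. This is exactly the assertion.

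The main obstacle is precisely the failure of $S$ to be an ideal: the whole argument rests on isolating the ring $\mathcal R_v$, verifying $\mathcal R_v\cdot S_{v^1,\ldots,v^N}\subset S_{v^1,\ldots,v^N}$, and checking its compatibility with the pullback $\sigma$ (in particular that $1/(u^1_x)^i$ stays singular). These are careful but elementary power-of-$v^1_x$ counts, and once they are in place the decomposition above makes the polynomial part of the composite collapse onto the composite of the polynomial parts.
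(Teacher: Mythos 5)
Your proof is correct and takes the same route as the paper: the paper's entire proof is the remark that the singularities of $u^\alpha(v^*_*;\eps)$ and $w^\alpha(u^*_*;\eps)$ cannot contribute nontrivially to the polynomial part of the composition, and your decomposition into polynomial parts plus remainders in $S$, together with the module property $\mathcal{R}_v\cdot S_{v^1,\ldots,v^N}\subset S_{v^1,\ldots,v^N}$ and the stability of $S$ under the substitution $u^\alpha=u^\alpha(v^*_*;\eps)$, is exactly the bookkeeping that makes this remark precise. The paper declares these verifications straightforward and omits them; your write-up supplies them correctly.
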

\begin{proof}
The proof is straightforward. One should just notice that the singularities of~$w^\alpha(u^*_*;\eps)$ and $u^\alpha(v^*_*;\eps)$ cannot give a non-trivial contribution in the polynomial part of the composition of these rational Miura transformations. 
\end{proof}

Let us formulate one more technical statement in this section.
\begin{lemma}\label{lemma:inverse}
Consider variables $u^\alpha$ and $w^\alpha$. Suppose we have a Miura transformation $u^\alpha\mapsto w^\alpha(u^*_*;\eps)$ such that $\frac{\d w^\alpha(u^*_*;\eps)}{\d u^1}=\delta^{\alpha,1}$ and $\frac{\d w^\alpha(u^*_*;\eps)}{\d u^1_x}=0$. Then the inverse Miura transformation $w^\alpha\mapsto u^\alpha(w^*_*;\eps)$ satisfies the same properties: $\frac{\d u^\alpha(w^*_*;\eps)}{\d w^1}=\delta^{\alpha,1}$ and $\frac{\d u^\alpha(w^*_*;\eps)}{\d w^1_x}=0$.
\end{lemma}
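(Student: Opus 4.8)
The plan is to deduce both properties of the inverse transformation from the single fact that the coordinate derivations coincide, namely $\frac{\d}{\d u^1}=\frac{\d}{\d w^1}$ and $\frac{\d}{\d u^1_x}=\frac{\d}{\d w^1_x}$ as derivations of the algebra of differential polynomials; once this is established the assertion follows immediately by evaluating these derivations on the coordinate functions $u^\alpha$. The route to the equality of derivations is via the chain rule, and the crux is controlling how $\d_x$ interacts with independence of $u^1$ and $u^1_x$.

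First I would record the two commutation relations, which are immediate from $\d_x=\sum_{\gamma,s\ge0}u^\gamma_{s+1}\frac{\d}{\d u^\gamma_s}$:
\begin{equation*}
\left[\frac{\d}{\d u^1},\d_x\right]=0,\qquad \left[\frac{\d}{\d u^1_x},\d_x\right]=\frac{\d}{\d u^1}.
\end{equation*}
These yield the key closure statement: the subspace $\mathcal{J}$ of differential polynomials $f$ with $\frac{\d f}{\d u^1}=\frac{\d f}{\d u^1_x}=0$ is stable under $\d_x$. Indeed, for $f\in\mathcal{J}$,
\begin{equation*}
\frac{\d(\d_x f)}{\d u^1}=\d_x\frac{\d f}{\d u^1}=0,\qquad \frac{\d(\d_x f)}{\d u^1_x}=\d_x\frac{\d f}{\d u^1_x}+\frac{\d f}{\d u^1}=0.
\end{equation*}
I expect this to be the heart of the matter: the preservation of $u^1_x$-independence is not automatic and genuinely consumes the $u^1$-independence hypothesis, so both conditions in the lemma are used together and neither alone would suffice.

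Next I would write the Miura transformation as $w^\beta=\delta^{\beta,1}u^1+g^\beta$; the two hypotheses say exactly that $g^\beta\in\mathcal{J}$ for every $\beta$. Prolonging to the jets gives $w^\beta_j=\d_x^j w^\beta=\delta^{\beta,1}u^1_j+\d_x^j g^\beta$, and since $\d_x^j g^\beta\in\mathcal{J}$ by the closure statement, I obtain for all $j\ge0$
\begin{equation*}
\frac{\d w^\beta_j}{\d u^1}=\delta^{\beta,1}\delta_{j,0},\qquad \frac{\d w^\beta_j}{\d u^1_x}=\delta^{\beta,1}\delta_{j,1}.
\end{equation*}

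Finally, the chain rule $\frac{\d f}{\d u^1}=\sum_{\beta,j}\frac{\d w^\beta_j}{\d u^1}\frac{\d f}{\d w^\beta_j}$ collapses, thanks to the two displayed identities, to $\frac{\d f}{\d u^1}=\frac{\d f}{\d w^1}$, and similarly $\frac{\d f}{\d u^1_x}=\frac{\d f}{\d w^1_x}$, valid for every differential polynomial $f$. Applying these to $f=u^\alpha$, whose $u$-expression is the generator $u^\alpha$ itself, gives
\begin{equation*}
\frac{\d u^\alpha}{\d w^1}=\frac{\d u^\alpha}{\d u^1}=\delta^{\alpha,1},\qquad \frac{\d u^\alpha}{\d w^1_x}=\frac{\d u^\alpha}{\d u^1_x}=0,
\end{equation*}
which is exactly the assertion. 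The only routine points left to check are the legitimacy of the chain rule in the completed setting and the collapse of the sum to a single term, both standard given the identities above.
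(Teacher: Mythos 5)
Your proposal is correct and takes essentially the same approach as the paper: the paper's own proof of this lemma consists of the single remark that it is ``a direct computation based on the chain rule,'' and your argument is exactly that computation carried out in full. The details you supply --- the commutators $[\frac{\d}{\d u^1},\d_x]=0$ and $[\frac{\d}{\d u^1_x},\d_x]=\frac{\d}{\d u^1}$, the resulting $\d_x$-stability of the space of differential polynomials independent of $u^1$ and $u^1_x$, and the collapse of the chain rule via the jet prolongation formulas --- are precisely what the paper leaves implicit, and they are all valid.
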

\begin{proof}
This is a direct computation based on the chain rule.
\end{proof}

\subsubsection{Rationality of the function $w^\alpha(v^*_*,\eps)$}\label{subsubsection:rationality of w}

Consider the function~$w^\alpha(v^*_*;\eps)$ from Section~\ref{subsection:recall of DZ theory}.
\begin{proposition}\label{proposition:rationality}
We have $w^\alpha(v^*_*;\eps)\in\hcA^{\rt,\t,[0]}_{v^1,\ldots,v^N}$ and, moreover, $\frac{\d w^\alpha(v^*_*;\eps)}{\d v^1}=\delta^{\alpha,1}$.
\end{proposition}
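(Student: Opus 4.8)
The plan is to reduce the statement to the genus expansion of the potential written in the jet variables of the genus-zero topological solution, and to exploit that $v^1_x$ is the unique jet variable that is a unit in $\cA^\wk_{v^1,\ldots,v^N}$. Indeed, \eqref{eq:property of wtop1} gives $\left.(v^\top)^1_x\right|_{\eps=x=0}=t^1_1+1+(\text{higher order})$, so $v^1_x$ equals $1$ modulo the maximal ideal and hence $1/v^1_x\in\cA^\wk_{v^1,\ldots,v^N}$; this both justifies the inclusion $\cA^\rt_{v^1,\ldots,v^N}\subset\cA^\wk_{v^1,\ldots,v^N}$ and explains why $v^1_x$ is the distinguished coordinate in the definition of $\cA^\rt_{v^1,\ldots,v^N}$. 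I take this as the guiding principle: I must show that the only denominators occurring in $w^\alpha(v^*_*;\eps)$ are powers of $v^1_x$.

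First I would record two structural facts. By genus-zero integrability the topological solution satisfies $\d_{t^\mu_0}(v^\top)^\gamma=\d_x\Omega^{[0],\gamma}_{\mu,0}(v^\top)$ with densities $\Omega^{[0],\gamma}_{\mu,0}$ that are ordinary functions of $v^1,\ldots,v^N$; hence, acting on any function of the jets $(v^\top)^\gamma_n$, the flow $\d_{t^\mu_0}$ is a derivation with differential-polynomial coefficients, and together with $\d_{t^1_0}=\d_x$ it preserves $\cA^\rt_{v^1,\ldots,v^N}$ and raises degree by $1$. Second, by definition $(w^\top)^\alpha=\eta^{\alpha\mu}\d_{t^\mu_0}\d_{t^1_0}F$ after the shift, so the genus-$g$ correction is $f^\alpha_g=\eta^{\alpha\mu}\d_{t^\mu_0}\d_x\widetilde F_g$, where $\widetilde F_g$ denotes $F_g$ re-expressed in the jets of $v^\top$. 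A degree count (using that $\d_x$ and $\d_{t^\mu_0}$ each raise degree by $1$) shows that $w^\alpha\in\hcA^{\rt,[0]}_{v^1,\ldots,v^N}$ is equivalent to $\widetilde F_g\in\cA^{\rt,[2g-2]}_{v^1,\ldots,v^N}$ for all $g\ge1$. Tameness is then automatic: by the $3g-2$ property $f^\alpha_g$ involves only the finitely many variables $v^\gamma_n$ with $n\le 3g$, so every numerator $P_i$ satisfies $\d P_i/\d v^\gamma_k=0$ for $k>3g$.

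The heart of the argument is thus $\widetilde F_g\in\cA^\rt_{v^1,\ldots,v^N}$, which I would prove by induction on $g$. The denominators can arise only from inverting the change of variables between the times $t^*_*$ and the jets of $v^\top$: because $\d_x=\d_{t^1_0}$ and $v^1_x=\d_x v^1$ occupies the distinguished slot of the relevant triangular Jacobian (the unit direction being $e_1$), each step of the inversion introduces only a further power of $v^1_x$, and the $3g-2$ property guarantees that the recursion closes after finitely many steps. I expect this bookkeeping — showing that no jet variable other than $v^1_x$ can appear in a denominator — to be the main obstacle; everything else is degree- and finiteness-tracking.

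Finally, for the assertion $\d w^\alpha/\d v^1=\delta^{\alpha,1}$ I would invoke the string equation for $F$: in jet form it forces the higher-genus corrections to be independent of the unit coordinate, that is $\d\widetilde F_g/\d v^1=0$ for $g\ge1$, whence $\d f^\alpha_g/\d v^1=0$ and only the linear term $v^\alpha$ survives differentiation by $v^1$. This part is comparatively routine once the rational structure of the $\widetilde F_g$ has been established.
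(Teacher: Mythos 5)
Your two peripheral steps (tameness via the $3g-2$ property, and $\frac{\d w^\alpha(v^*_*;\eps)}{\d v^1}=\delta^{\alpha,1}$ via the string equation) agree with what the paper does, but the core of your plan fails in two ways. First, the intermediate statement you propose to prove by induction on $g$, namely $\widetilde F_g\in\cA^{\rt,[2g-2]}_{v^1,\ldots,v^N}$, is false already at the base case $g=1$: the genus-one free energy written in the jets of $v^\top$ contains a logarithmic term (for the trivial rank-one theory it is $\frac{1}{24}\log v_x$, and in general one gets the logarithm of a polynomial in the $v^\gamma_x$ plus the $G$-function), and $\log v^1_x$ is not of the form $\sum_{i\ge m}P_i(v^*_*)/(v^1_x)^i$. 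Hence your asserted equivalence between $w^\alpha\in\hcA^{\rt,[0]}_{v^1,\ldots,v^N}$ and $\widetilde F_g\in\cA^{\rt,[2g-2]}_{v^1,\ldots,v^N}$ holds only in the direction you do not need (rationality of $\widetilde F_g$ implies rationality of $w^\alpha$, since $\d_x$ and $\d_{t^\mu_0}$ preserve $\cA^\rt_{v^1,\ldots,v^N}$), and the hypothesis of that implication is false. One must work with $w^\alpha$ itself, i.e.\ with second $t$-derivatives of $F_g$, for which the logarithm disappears; this is exactly what the paper does.

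Second, and more fundamentally, the step you yourself defer as ``the main obstacle'' is not an argument, and no bookkeeping of the kind you describe can supply it. Inverting the change of variables between $t^*_*$ and the jets $\left.(v^\top)^\alpha_n\right|_{x=0}$, which by \eqref{eq:property of wtop1} is a formal change of coordinates, produces power series in the shifted variables, i.e.\ elements of $\cA^\wk_{v^1,\ldots,v^N}$; it introduces no denominators at all, so no ``triangular Jacobian'' analysis of that inversion can explain why only powers of $v^1_x$ occur. Rationality is a nontrivial structural property, and the input the paper uses for it is the dilaton equation, which your sketch never invokes: the genus-$g$ part $(w^\top)^{\alpha,[g]}$ of the topological solution is an eigenfunction of the operator $O_\dil=\frac{\d}{\d t^1_1}-x\frac{\d}{\d x}-\sum_{n\ge 0}t^\gamma_n\frac{\d}{\d t^\gamma_n}$ with eigenvalue $2g$, while $O_\dil (v^\top)^\gamma_n=n(v^\top)^\gamma_n$. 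Matching the Taylor expansion at $x=0$ order by order in the $t$-degree filtration then forces the block of $t$-degree $k$ to assemble into $(v^1_x)^{2g-k}$ times a degree-$k$ differential polynomial independent of $v^1_x$ — the would-be coefficients involving a factor $v^1_x$ vanish precisely because of the dilaton relation for correlators with a $\tau_1(e_1)$ insertion — and the inverse powers of $v^1_x$ appear exactly for $k>2g$. Without this homogeneity mechanism (or an equivalent structural input), your proposal does not prove the rationality claim, which is the actual content of the proposition.
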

\begin{remark}
While this work was under preparation, we were informed that this proposition was independently proved by S.~Shadrin, D.~Lewanski and A.~Popolitov.
\end{remark}
\begin{proof}[Proof of Proposition~\ref{proposition:rationality}]
The proof is very similar to the construction of the differential polynomial $\cP$ from~\cite[Section 7.3]{BDGR16}. Consider the $\eps$-expansion of the topological solution~$(w^\top)^\alpha$:
$$
(w^\top)^\alpha(x,t^*_*,\eps)=\sum_{g\ge 0}\eps^{2g}(w^\top)^{\alpha,[g]}(x,t^*_*).
$$
Define a linear differential operator $O_\dil$ by
$$
O_\dil:=\frac{\d}{\d t^1_1}-x\frac{\d}{\d x}-\sum_{n\ge 0}t^\gamma_n\frac{\d}{\d t^\gamma_n}.
$$
For $g\ge 1$ let us construct a sequence of functions $w^{\alpha,[g,k]}\in\cA^{\rt,[2g]}_{v^1,\ldots,v^N}$, $k\ge -1$, such that
\begin{align}
&w^{\alpha,[g,k]}=w^{\alpha,[g,k-1]}+(v^1_x)^{2g-k}P^{\alpha,[g,k]},\quad k\ge 0,\quad P^{\alpha,[g,k]}\in\cA^{[k]}_{v^1,\ldots,v^N},\quad\frac{\d P^{\alpha,[g,k]}}{\d v^1_x}=0,\label{eq:construction of rational,eq1}\\
&\left.\left((w^\top)^{\alpha,[g]}-w^{\alpha,[g,k]}(v^\top,v^\top_x,\ldots)\right)\right|_{x=0}\in\mbC[[t^*_*]]^{(k+1)},\label{eq:construction of rational,eq2}\\
&O_{\dil}w^{\alpha,[g,k]}(v^\top,v^\top_x,\ldots)=2g\cdot w^{\alpha,[g,k]}(v^\top,v^\top_x,\ldots).\label{eq:construction of rational,eq3}
\end{align}
Let $w^{\alpha,[g,-1]}:=0$. Suppose that $k\ge 0$ and that $w^{\alpha,[g,k-1]}$ is already constructed. Let 
$$
\<\tau_{d_1}(e_{\alpha_1})\ldots\tau_{d_n}(e_{\alpha_n})\>^{\alpha,[g,k-1]}:=\left.\frac{\d^n w^{\alpha,[g,k-1]}(v^\top,v^\top_x,\ldots)}{\d t^{\alpha_1}_{d_1}\ldots\d t^{\alpha_n}_{d_n}}\right|_{x=t^*_*=0}.
$$
Define 
\begin{align}
&w^{\alpha,[g,k]}:=w^{\alpha,[g,k-1]}+\notag\\
&+\sum_{n\ge 0}\frac{\eps^{2g}}{n!}\sum_{d_1+\ldots+d_n=k}\underline{\left(\eta^{\alpha\mu}\<\tau_0(e_1)\tau_0(e_\mu)\prod_{i=1}^n\tau_{d_i}(e_{\alpha_i})\>_g-\<\prod_{i=1}^n\tau_{d_i}(e_{\alpha_i})\>^{\alpha,[g,k-1]}\right)}(v^1_x)^{2g-k}\prod_{i=1}^n v^{\alpha_i}_{d_i}.\label{eq:underlined difference}
\end{align}
Let us prove properties~\eqref{eq:construction of rational,eq1}-\eqref{eq:construction of rational,eq3}. We have
$$
O_\dil\left((w^\top)^{\alpha,[g]}-w^{\alpha,[g,k-1]}(v^\top,v^\top_x,\ldots)\right)=2g\left((w^\top)^{\alpha,[g]}-w^{\alpha,[g,k-1]}(v^\top,v^\top_x,\ldots)\right).
$$
Using~\eqref{eq:construction of rational,eq2} for~$w^{\alpha,[g,k-1]}$, we see that the underlined expression in~\eqref{eq:underlined difference} is equal to zero, if $\alpha_i=d_i=1$ for some $i$. Therefore, formula~\eqref{eq:construction of rational,eq1} is clear. Equation~\eqref{eq:construction of rational,eq3} follows from the fact that $O_\dil(v^\top)^\alpha_n=n(v^\top)^\alpha_n$. Property~\eqref{eq:construction of rational,eq2} follows from~\eqref{eq:property of wtop1}.

From~\eqref{eq:construction of rational,eq1} it follows that the limit $w^{\alpha,[g]}:=\lim_{k\to\infty}w^{\alpha,[g,k]}\in\cA^{\rt,[2g]}_{v^1,\ldots,v^N}$ is well-defined. Formula~\eqref{eq:construction of rational,eq2} implies that
$$
(v^\top)^\alpha+\sum_{g\ge 1}\eps^{2g}w^{\alpha,[g]}(v^\top,v^\top_x,\ldots)=(w^\top)^{\alpha,[g]}.
$$
Therefore, $w^\alpha(v^*_*;\eps)=v^\alpha+\sum_{g\ge 1}\eps^{2g}w^{\alpha,[g]}\in\hcA^{\rt,[0]}_{v^1,\ldots,v^N}$. The tameness of $w^\alpha(v^*_*;\eps)$ was already explained in Section~\ref{subsection:recall of DZ theory}.

It remains to show that $\frac{\d w^\alpha(v^*_*;\eps)}{\d v^1}=\delta^{\alpha,1}$. Let 
$$
O_\str:=\frac{\d}{\d t^1_0}-\sum_{n\ge 0}t^\gamma_{n+1}\frac{\d}{\d t^\gamma_n}.
$$
From the string equation for the potential~$F$ it follows that $O_\str(w^\top)^\alpha=O_{\str}(v^\top)^\alpha=\delta^{\alpha,1}$. Therefore, $\frac{\d w^\alpha(v^*_*;\eps)}{\d v^1}=\delta^{\alpha,1}$. The proposition is proved.
\end{proof}

\subsubsection{Final step}\label{subsubsection:final step}

Consider the rational Miura transformation $v^\alpha\mapsto w^\alpha(v^*_*;\eps)$ from the previous section. Since the variables $u^\alpha$ and $\tu^\alpha$ are related to $w^\alpha$ by Miura transformations, we see that they are related to the variables~$v^\alpha$ by rational Miura transformations, that we denote by~$u^\alpha(v^*_*;\eps)$ and~$\tu^\alpha(v^*_*;\eps)$ respectively. From equation~\eqref{eq:definition of the DZ operator} it follows that the operator $K^\DZ$ in the variables $u^\alpha$ is equal to
$$
(K^\DZ)^{\alpha\beta}_u=\sum_{p,q\ge 0}\frac{\d u^\alpha(v^*_*;\eps)}{\d v^\mu_p}\d_x^p\circ\eta^{\mu\nu}\d_x\circ(-\d_x)^q\circ\frac{\d u^\beta(v^*_*;\eps)}{\d v^\nu_q}.
$$
Lemma~\ref{lemma:rational Miura} implies that it is sufficient to show that 
\begin{gather}\label{eq:two sufficient conditions}
\frac{\d u^\alpha(v^*_*;\eps)}{\d v^1}=\delta^{\alpha,1}\quad\text{and}\quad u^\alpha(v^*_*;\eps)^\pol=v^\alpha.
\end{gather}
We have
$$
\tu^\alpha(v^\top,v^\top_x,\ldots;\eps)=\left.\eta^{\alpha\mu}\frac{\d^2 F^\red}{\d t^\mu_0\d t^1_0}\right|_{t^1_0\mapsto t^1_0+x}.
$$
The string equation for~$F^\red$ implies that $O_\str\tu^\alpha(v^\top,v^\top_x,\ldots;\eps)=\delta^{\alpha,1}$. Therefore, $\frac{\d\tu^\alpha(v^*_*;\eps)}{\d v^1}=\delta^{\alpha,1}$. From the string equation for~$F^\red$ and property~\eqref{eq:property of Fred} it follows that $\frac{\d\tu^\alpha(u^*_*,\eps)}{\d u^1}=\delta^{\alpha,1}$. Thus, $\frac{\d u^\alpha(v^*_*;\eps)}{\d v^1}=\delta^{\alpha,1}$.

Let us now prove the second equation in~\eqref{eq:two sufficient conditions}. Let
$$
\tu^\alpha(v^*_*;\eps)=v^\alpha+\sum_{g\ge 1}\eps^{2g}\sum_{k\ge -2g}\frac{P^\alpha_{g,k}(v^*_*)}{(v^1_x)^k},\quad P^\alpha_{g,k}\in\cA^{[2g+k]}_{v^1,\ldots,v^N}.
$$
Property~\eqref{eq:property of Fred} together with the string equation for $F^\red$ imply that
$$
\left.\Coef_{\eps^{2g}}\tu^\alpha(v^\top,v^\top_x,\ldots;\eps)\right|_{x=0}\in\mbC[[t^*_*]]^{(2g)}.
$$
Using also~\eqref{eq:property of wtop1}, we conclude that $P^\alpha_{g,k}=0$ for $k<0$ and
\begin{gather}\label{eq:P and red}
P^\alpha_{g,0}(v^*_*)=\sum_{n\ge 1}\sum_{d_1+\ldots+d_n=2g}\eta^{\alpha\mu}\<\tau_0(e_\mu)\tau_0(e_1)\prod\tau_{d_i}(e_{\alpha_i})\>^\red_g \frac{\prod v^{\alpha_i}_{d_i}}{n!}.
\end{gather}
Thus, 
$$
\tu^\alpha(u^*_*;\eps)=\left.\tu^\alpha(v^*_*;\eps)^\pol\right|_{v^\gamma_n=u^\gamma_n}.
$$
The rational Miura transformation $v^\alpha\mapsto u^\alpha(v^*_*;\eps)$ is the composition of the transformations \begin{gather}\label{eq:two transformations}
v^\alpha\mapsto\tu^\alpha(v^*_*;\eps)\quad\text{and}\quad\tu^\alpha\mapsto u^\alpha(\tu^*_*;\eps).
\end{gather}
We already know that $\frac{\d\tu^\alpha(v^*_*;\eps)}{\d v^1}=\delta^{\alpha,1}$. Equations~\eqref{eq:P and red},~\eqref{eq:property of Fred} and the string and the dilaton equations for $F^\red$ imply that $\frac{\d P^\alpha_{g,0}}{\d v^1_x}=0$. Therefore, $\frac{\d\tu^\alpha(v^*_*;\eps)^\pol}{\d v^1_x}=0$. So, the first transformation in~\eqref{eq:two transformations} satisfies the assumptions of Lemma~\ref{lemma:composition of rational Miura transformations}. Using Lemma~\ref{lemma:inverse} we see that the second transformation in~\eqref{eq:two transformations} also satisfies the assumptions of Lemma~\ref{lemma:composition of rational Miura transformations}. We conclude that $u^\alpha(v^*_*;\eps)^\pol=v^\alpha$. Theorem~\ref{theorem:Miura for DZ} is proved.

%%%%%%%%%%%%%%%%%%%%%%%%%%%%%%%%%%%%%%%%%%%%%%%%%%%%%%%%%%%%%%%%%%%%%%%
%%%%%%%%%%%%%%%%%%%%%%%%%%%%%%%%%%%%%%%%%%%%%%%%%%%%%%%%%%%%%%%%%%%%%%%

\section{Double ramification and Dubrovin-Zhang hierarchies of rank $1$}\label{section:DR and DZ rank 1}

In this section we focus on cohomological field theories of rank $1$, i.e.~ $\dim V=1$, and the corresponding double ramification and Dubrovin-Zhang hierarchies.

In Section~\ref{subsection:tau-symmetric deformations of the Riemann hierarchy} we recall certain definitions and the main conjecture from the work~\cite{DLYZ16} about tau-symmetric deformations of the Riemann hierarchy. In Section~\ref{subsection:DR hierarchy is standard} we show that the double ramification hierarchy is a standard deformation of the Riemann hierarchy in the sense of~\cite{DLYZ16}. In Section~\ref{subsection:standard for DZ} we prove an existence of a normal Miura transformation that reduces the Dubrovin-Zhang hierarchy to its unique standard form. This proves a part of the conjecture from~\cite{DLYZ16} for the Dubrovin-Zhang hierarchies. In Section~\ref{subsection:DR/DZ equivalence up to genus 5} we prove the strong DR/DZ equivalence conjecture at the approximation up to genus~$5$.

Since $\dim V=1$, a rank $1$ cohomological field theory is described by classes $c_{g,n}=c_{g,n}(e_1^{\otimes n})\in H^\even(\oM_{g,n},\mbC)$. Recall that, according to~\cite{Teleman}, rank~$1$ cohomological field theories with $\eta_{1,1}=\alpha$ are parameterized by numbers $s_1,s_2,\ldots$ in the following way:
\begin{gather}\label{eq:general rank 1 CohFT}
c_{g,n}=\alpha^{1-g}e^{-\sum_{i\ge 1}\frac{(2i)!}{B_{2i}}s_i\Ch_{2i-1}(\mathbb E)}.
\end{gather}
Here $\Ch_{2i-1}$ denotes the $(2i-1)$-th component of the Chern character and we use the same rescaling of the coefficient of $\Ch_{2i-1}(\mathbb E)$ in the exponent, as in~\cite[page 384]{DLYZ16}. Since $\dim V=1$, we will omit Greek indices in many notations. For example, the correlators of a cohomological field theory will be denoted by $\<\tau_{d_1}\ldots\tau_{d_n}\>_g$, the Hamiltonians of the Dubrovin-Zhang hierarchy will be denoted by $\oh_d^\DZ$, ... .

\subsection{Tau-symmetric deformations of the Riemann hierarchy}\label{subsection:tau-symmetric deformations of the Riemann hierarchy}

The Riemann hierarchy is the tau-symmetric Hamiltonian hierarchy given by the Hamiltonians 
$$
\oh^\mathrm{R}_{d}:=\int\frac{u^{d+2}}{(d+2)!}dx,\quad d\ge 0,
$$
the Hamiltonian operator $\d_x$ and the tau-symmetric densities $h^\mathrm{R}_d:=\frac{u^{d+2}}{(d+2)!}$, $d\ge -1$. 

A tau-symmetric deformation of the Riemann hierarchy is a tau-symmetric Hamiltonian hierarchy given by Hamiltonians $\oh_d$, $d\ge 0$, Hamiltonian operator $K$ and tau-symmetric densities~$h_d$, $d\ge -1$, such that
$$
\oh_d|_{\eps=0}=\oh^\mathrm{R}_d,\qquad K|_{\eps=0}=\d_x,\qquad h_d|_{\eps=0}=h^\mathrm{R}_d,\qquad K\frac{\delta\oh_0}{\delta u}=u_x.
$$
Here the last condition means that the Hamiltonian $\oh_0$ generates the spatial translations.

Denote by $\cP_n$ the set of all partitions of $n$. For a partition $\lambda=(\lambda_1,\ldots,\lambda_l)$, $\lambda_1\ge\ldots\lambda_l\ge 1$, let $l(\lambda):=l$. Introduce a subset $\cP'_n\subset\cP_n$ by
$$
\cP'_n:=\left\{\lambda\in\cP_n\left|
\begin{smallmatrix}
l(\lambda)\ge 2,\\
\lambda_1=\lambda_2,\\
\lambda_i\ge 2.
\end{smallmatrix}\right.
\right\}.
$$
For a partition $\lambda\in\cP_n$ let $u_\lambda:=\prod_{i=1}^{l(\lambda)}u_{\lambda_i}$. A tau-symmetric deformation of the Riemann hierarchy is said to be standard, if $K=\d_x$ and a density $\th_1$ for the Hamiltonian $\oh_1$ can be chosen in the following form:
\begin{gather}\label{eq:standard density}
\th_1=\frac{u^3}{6}-\frac{\eps^2}{24}a_0 u_x^2+\sum_{g\ge 2}\eps^{2g}\sum_{\lambda\in\cP'_{2g}}\alpha_\lambda u_\lambda,
\end{gather}
for some complex coefficients $a_0$ and $\alpha_\lambda$. It is easy to show that if such a density exists, then it is unique. In~\cite{DLYZ16} the authors proposed the following conjecture.

\begin{conjecture}\label{conjecture:DLYZ16}
Consider an arbitrary tau-symmetric deformation of the Riemann hierarchy.

\begin{enumerate}

\item[1.] Suppose that the deformation is standard. Then for the unique density of the form~\eqref{eq:standard density} we have the following.

\begin{itemize}
\item[a)] If $a_0=0$, then $\alpha_\lambda=0$ for all $\lambda$.
\item[b)] If $a_0\ne 0$, then all coefficients $\alpha_\lambda$ are uniquely determined by the coefficients $a_0$ and~$\alpha_{(2^g)}$, $g\ge 2$.
\end{itemize}

\item[2.] There exists a unique normal Miura transformation that transforms the hierarchy to a standard deformation. This deformation is called the standard form of the hierarchy.

\end{enumerate}
\end{conjecture}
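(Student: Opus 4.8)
The plan is to treat the two parts of Conjecture~\ref{conjecture:DLYZ16} separately and to argue for an \emph{arbitrary} tau-symmetric deformation, not only for the geometric (Dubrovin--Zhang) examples. For part~2 the strategy is to factor the sought normal Miura transformation into two successive normalizations: first bring the Hamiltonian operator $K$ into the standard form $\d_x$, and then, holding $K=\d_x$ fixed, bring a density $\th_1$ of $\oh_1$ into the normal form~\eqref{eq:standard density}. Part~1 is then a statement about hierarchies \emph{already} in standard form, so I would prove it last. The point I want to stress at the outset is that the explicit geometry of the double ramification correlators, formula~\eqref{eq:geometric formula for k=0}, which drives the Dubrovin--Zhang case, is not available here, so every step must be extracted from the abstract integrability and tau-symmetry axioms.

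For the first normalization I would invoke the scalar Darboux theorem for Hamiltonian operators: any operator $K=\d_x+O(\eps)$ that is a dispersive deformation of $\d_x$ on a one-component phase space is reducible to $\d_x$ by a Miura transformation, because the relevant Poisson cohomology of $\d_x$ vanishes in positive degree (see~\cite{DZ05,DLYZ16}). The delicate point is that this transformation must be chosen \emph{normal}, i.e.\ compatible with the tau-structure. I would therefore work order by order in $\eps^{2g}$: the ambiguity in the Miura transformations fixing $K=\d_x$ is, at each genus, governed by a single generating differential polynomial, and I would use exactly this freedom to trivialize the operator while preserving the normal coordinate $\tu=\eta^{1,1}\delta\oh_0/\delta u$. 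That the composition is again a well-defined normal Miura transformation follows from the group structure of such transformations.

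The second normalization is essentially graded linear algebra in $\eps^{2g}$. With $K=\d_x$ fixed, the residual normal Miura transformations act on the space of admissible densities for $\oh_1$; modulo $\d_x$-images this space is, in each genus $g$, finite-dimensional and spanned by the monomials $u_\lambda$, $\lambda\in\cP_{2g}$. The core computation is to evaluate an infinitesimal normal Miura transformation on this quotient and show its image is a complement of the span of $\{u_\lambda:\lambda\in\cP'_{2g}\}$. The combinatorial content is the identification of $\cP'_n$ (partitions with $l(\lambda)\ge2$, $\lambda_1=\lambda_2$, all parts $\ge2$) as indexing a basis of the remaining quotient; I would prove this by producing, for each $\lambda\notin\cP'_{2g}$, an explicit transformation or total-derivative relation killing $u_\lambda$, together with a dimension count ruling out further reduction. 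Uniqueness then follows from the same computation, since the stabilizer of the standard form among normal Miura transformations is trivial in positive genus.

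The main obstacle is part~1, the integrability constraint. Here I would impose $\{\oh_1,\oh_2\}_{\d_x}=0$ together with the tau-symmetry relations that produce $\oh_2$ from $\oh_1$, and read off the resulting equations on $a_0$ and the $\alpha_\lambda$. The expected mechanism is that these equations are triangular with respect to the ordering of partitions by $(l(\lambda),\lambda_1)$, with the diagonal coefficient proportional to $a_0$: when $a_0\ne0$ one solves recursively for every $\alpha_\lambda$ in terms of $a_0$ and the diagonal data $\alpha_{(2^g)}$, while for $a_0=0$ the system becomes homogeneous and forces $\alpha_\lambda=0$. Establishing this triangularity abstractly is the hard step, precisely because in the Dubrovin--Zhang setting it was underwritten by the geometry of the correlators; in general one must instead read the same triangular structure off the compatibility of the bracket with the partition grading, and I expect this to demand a careful leading-order analysis of $\{\oh_1,\oh_2\}_{\d_x}$ expressed in the $u_\lambda$.
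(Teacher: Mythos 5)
The statement you are addressing is presented in the paper as Conjecture~\ref{conjecture:DLYZ16}, quoted from~\cite{DLYZ16}: it is an \emph{open conjecture}, and the paper does not prove it. What the paper proves is strictly weaker: part~2 only, and only for the Dubrovin--Zhang hierarchy attached to a rank-$1$ cohomological field theory (Theorem~\ref{theorem:standard form of DZ}). That proof is not abstract at all. It constructs the normal Miura transformation explicitly from the reduced potential $F^\red$ via the differential polynomial $\cP$ of~\eqref{eq:definition of Fred}, so that normality is built in from the start rather than imposed afterwards; it then invokes Theorem~\ref{theorem:Miura for DZ} (which itself rests on the rationality result, Proposition~\ref{proposition:rationality}, and on the geometric formula~\eqref{eq:geometric formula for k=0} for DR correlators) to show that this particular transformation takes $K^\DZ$ to $\d_x$; and it verifies the density condition~\eqref{eq:standard density} using the vanishing property~\eqref{eq:property of Fred} of $F^\red$ together with Lemmas~\ref{lemma:unique density} and~\ref{lemma:sufficient condition for good density}. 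Your plan to establish the full conjecture ``from the abstract integrability and tau-symmetry axioms'' therefore aims at a statement the paper deliberately leaves open, and it cannot be compared favorably against the special case the paper actually settles.

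Beyond this mismatch of scope, your proposal has genuine gaps at exactly the points you yourself flag as delicate. First, in your initial normalization, vanishing of the Poisson cohomology of $\d_x$ produces \emph{some} Miura transformation trivializing $K$, but a normal Miura transformation is a far more restricted object: it is generated by a single differential polynomial acting through the tau-structure, of the form $u^\alpha\mapsto u^\alpha+\eta^{\alpha\mu}\d_x\{\cP,\oh_{\mu,0}\}_K$, and such transformations in general do \emph{not} preserve the operator, while the canonical transformations that do preserve $\d_x$ are generated by Hamiltonian flows --- a different group. The sentence ``I would use exactly this freedom to trivialize the operator while preserving the normal coordinate'' asserts precisely the compatibility that constitutes the hard content of part~2; no argument is given, and in the paper this step is only achieved by exhibiting the transformation concretely from $F^\red$. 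Second, for part~1 you state that the triangularity of the equations coming from $\{\oh_1,\oh_2\}_{\d_x}=0$ is ``the hard step'' and that you ``expect'' it to follow from a leading-order analysis; no such analysis is supplied, and this is exactly the part that~\cite{DLYZ16} could verify only order by order, up to $\eps^{12}$. As written, the proposal is a program whose two decisive steps are left as hopes: it neither proves the conjecture nor recovers the rank-$1$ Dubrovin--Zhang case that the paper does establish.
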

\noindent
The authors of~\cite{DLYZ16} checked the uniqueness statement in the second part of the conjecture. Moreover they verified the conjecture at the approximation up to $\eps^{12}$.

Consider a cohomological field theory of rank $1$ with $\eta_{1,1}=1$. Clearly, the corresponding double ramification and Dubrovin-Zhang hierarchies are tau-symmetric deformations of the Riemann hierarchy. In the next section we will prove that the double ramification hierarchy is a standard deformation. In Section~\ref{subsection:standard for DZ} we will prove that part 2 of Conjecture~\ref{conjecture:DLYZ16} is true for the Dubrovin-Zhang hierarchy.

%%%%%%%%%%%%%%%%%%%%%%%%%%%%%%%%%%%%%%%%%%%%%%%%%%%%%%%%%%%%%%%%%%%%%%

\subsection{Double ramification hierarchy as a standard deformation}\label{subsection:DR hierarchy is standard}

Introduce a subset $\cP^\circ_n\subset\cP_n$ by
$$
\cP^\circ_n:=\left\{\lambda\in\cP_n\left|
\begin{smallmatrix}
l(\lambda)\ge 2,\\
\lambda_1=\lambda_2.
\end{smallmatrix}\right.
\right\}.
$$
\begin{lemma}\label{lemma:unique density}
Let $d\ge 2$. Consider a differential polynomial $h=\sum_{\lambda\in\cP_d}h_\lambda(u)u_\lambda\in\cA^{[d]}_u$, where~$h_\lambda(u)$ are formal power series in $u$.
\begin{enumerate}

\item[1.] For the local functional $\oh=\int h dx$ there exists a unique density $\th\in\cA^{[d]}_u$ of the form
\begin{gather}\label{eq:unique density}
\th=\sum_{\lambda\in\cP^\circ_d}\th_\lambda(u)u_\lambda,
\end{gather}
where $\th_\lambda(u)$ are formal power series in~$u$.

\item[2.] Let $d=2g$. Suppose that $\frac{\d h_\lambda(u)}{\d u}=0$ for all $\lambda$ and that $h_\lambda=0$ unless $\lambda_i\ge 2$. Then $\frac{\d\th_\lambda(u)}{\d u}=0$ for all~$\lambda$ and $\th_\lambda=0$ for $\lambda\in\cP_d\backslash\cP'_d$. Moreover, we have $\th_{(2^g)}=h_{(2^g)}$.

\end{enumerate}
\end{lemma}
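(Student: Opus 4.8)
The plan is to read the statement as a \emph{standard-monomial} result for the quotient $\cA^{[d]}_u/\d_x\cA^{[d-1]}_u$, which for $d\ge 2$ is exactly the space of degree-$d$ local functionals. First I would record the underlying linear structure: since $\deg u_i=i$ and coefficients are power series in $u=u_0$, the space $\cA^{[d]}_u$ is a free $\mbC[[u]]$-module with basis $\{u_\lambda\}_{\lambda\in\cP_d}$, and two densities of degree $d\ge 2$ represent the same local functional precisely when they differ by an element of $\d_x\cA^{[d-1]}_u$. Thus part~1 is equivalent to the claim that $\{u_\lambda\}_{\lambda\in\cP^\circ_d}$ is a set of representatives for $\cA^{[d]}_u/\d_x\cA^{[d-1]}_u$.

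The combinatorial engine is a single integration-by-parts identity: for $\mu\in\cP_{d-1}$ and $\phi\in\mbC[[u]]$ one has $\d_x(\phi\,u_\mu)=\phi'\,u_1 u_\mu+\phi\sum_j u_{\mu^{(j)}}$, where $\mu^{(j)}$ is $\mu$ with its $j$-th part raised by one. I would fix the total order $\prec$ on the finite set $\cP_d$ given by lexicographic comparison of the weakly decreasing part sequences (comparing the largest parts first). Raising the top part strictly increases $\lambda_1$, so the $\prec$-leading term of $\d_x(\phi u_\mu)$ is $c_\mu\,\phi\,u_{R(\mu)}$ with $c_\mu\in\mbZ_{>0}$, where $R(\mu)$ denotes $\mu$ with its top part raised by one; every other monomial occurring (including the $\phi'$ term, of partition $\mu\cup\{1\}$) has strictly smaller top part. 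The key fact I would verify directly is that $R\colon\cP_{d-1}\to\cP_d\setminus\cP^\circ_d$ is a \emph{bijection}: $R(\mu)$ has $\lambda_1>\lambda_2$ or is a single part, hence lies outside $\cP^\circ_d$; and every $\lambda\notin\cP^\circ_d$ has the unique preimage obtained by lowering its top part by one, which is a legitimate partition exactly because $\lambda_1>\lambda_2$ or $\lambda=(d)$.

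Both halves of part~1 then follow from the standard-monomial mechanism. For \emph{existence}, given any density I would repeatedly locate the $\prec$-maximal non-$\cP^\circ_d$ monomial present, say of type $\tau$ with coefficient $\phi_\tau$, and subtract $c_\tau^{-1}\,\d_x(\phi_\tau u_{R^{-1}(\tau)})$ to cancel it; since only monomials of strictly smaller top part are introduced, the $\prec$-maximal non-$\cP^\circ_d$ type strictly decreases, and finiteness of $\cP_d$ forces termination at a density supported on $\cP^\circ_d$. For \emph{uniqueness} I would show that no nonzero element of $\d_x\cA^{[d-1]}_u$ has a $\cP^\circ_d$ leading monomial: writing it as $\sum_\mu\d_x(\phi_\mu u_\mu)$ and letting $\sigma$ be the $\prec$-largest of the types $R(\mu)$ with $\phi_\mu\neq0$, injectivity of $R$ guarantees that no other generator contributes a monomial of type $\sigma$, so the $\sigma$-component is $c_{\mu_0}\phi_{\mu_0}u_\sigma\neq0$ and is non-$\cP^\circ_d$. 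Hence no nontrivial $\cP^\circ_d$-combination lies in $\d_x\cA^{[d-1]}_u$.

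For part~2 I would run the \emph{same} reduction on the special $h$ and track three invariants. (i) \emph{Constancy}: when $\phi_\tau$ is constant the term $\phi_\tau'\,u_1 u_\mu$ vanishes, so each subtraction produces only constant-coefficient monomials and, by induction, $\th$ has constant coefficients. (ii) \emph{Parts $\ge2$}: the partition $R^{-1}(\tau)$ used at each step lowers the top part of $\tau$, and for a reducible $\tau$ with all parts $\ge2$ (so $\tau_1>\tau_2\ge2$, whence $\tau_1\ge3$, or $\tau=(d)$ with $d\ge4$) this lowered top is still $\ge2$; as raising parts and the absence of the $\phi'$ term never create a part equal to $1$, the support of $\th$ stays inside partitions with all parts $\ge2$, i.e. inside $\cP'_d$, giving $\th_\lambda=0$ for $\lambda\notin\cP'_d$ and $\frac{\d\th_\lambda}{\d u}=0$. (iii) \emph{The $(2^g)$ coefficient}: for constant coefficients $\d_x$ preserves the number of factors, so the $g$-part sector is altered only by subtracting $\d_x(\phi u_\mu)$ with $\mu\in\cP_{2g-1}$ having $g$ parts; but no $g$-part partition of $2g-1$ has all parts $\ge2$, while the only $g$-part partition of $2g$ with parts $\ge2$, namely $(2^g)\in\cP'_{2g}$ (here $g\ge2$), is $\prec$-minimal and hence never reduced. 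Thus the coefficient of $(2^g)$ is untouched and $\th_{(2^g)}=h_{(2^g)}$. The heart of the argument, and the step I expect to demand the most care, is the uniqueness half of part~1: checking that the leading terms of $\d_x\cA^{[d-1]}_u$ meet every non-$\cP^\circ_d$ monomial once and never cancel, which rests entirely on the bijectivity of $R$ and on choosing an order in which raising the top part dominates; everything else is bookkeeping on the single integration-by-parts identity.
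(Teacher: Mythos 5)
Correct, and essentially the paper's own proof: your reduction step $c_\tau^{-1}\,\d_x(\phi_\tau u_{R^{-1}(\tau)})$ is literally the paper's subtraction of $\d_x\left(\frac{u_{\lambda_1-1}^{m+1}}{m+1}h_{\lambda}(u)\prod_{i\ge m+2}u_{\lambda_i}\right)$, run with the same lexicographic induction on the maximal partition outside $\cP^\circ_d$. The only difference is that you spell out two points the paper leaves implicit — the leading-monomial/bijectivity argument showing that a nonzero density supported on $\cP^\circ_d$ cannot be $\d_x$-exact (the paper merely asserts this), and the invariant tracking for part 2 (which the paper declares ``clear from the proof of part 1'').
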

\begin{proof}
1. Let us prove the existence of such density. Suppose that the set
\begin{gather}\label{eq:bad partitions}
\{\lambda\in\cP_d\backslash\cP^\circ_d|h_\lambda(u)\ne 0\}
\end{gather}
is non-empty. Let~$\lambda^{(0)}$ be the lexicographically maximal partition in the set~\eqref{eq:bad partitions} and~$m$ be the multiplicity of the part~$\lambda^{(0)}_1-1$ in~$\lambda^{(0)}$. Define a differential polynomial~$h^{(1)}$ by
\begin{gather*}
h^{(1)}:=h-\d_x\left(\frac{u_{\lambda^{(0)}_1-1}^{m+1}}{m+1}h_{\lambda^{(0)}}(u)\prod_{i=m+2}^{l(\lambda^{(0)})}u_{\lambda^{(0)}_i}\right)=\sum_{\lambda\in\cP_d}h^{(1)}_\lambda(u)u_\lambda.
\end{gather*}
Obviously, $h^{(1)}$ is a density for $\oh$. It is also clear that the lexicographically maximal partition in the set~$\{\lambda\in\cP_d\backslash\cP^\circ_d|h^{(1)}_\lambda(u)\ne 0\}$ is lexicographically smaller than $\lambda^{(0)}$. Continuing this process, after a finite number of steps, we come to a density of~$\oh$ of the form~\eqref{eq:unique density}. 

The uniqueness part follows from the fact that a non-zero differential polynomial of the form~\eqref{eq:unique density} does not belong to the image of the operator~$\d_x$.

Part 2 of the lemma is clear from the proof of part 1.
\end{proof}

\begin{proposition}\label{proposition:DR is standard}
Consider an arbitrary cohomological field theory of rank $1$ with $\eta_{1,1}=1$. Then we have the following.
\begin{enumerate}

\item[1.] The corresponding double ramification hierarchy is a standard tau-symmetric deformation of the Riemann hierarchy.

\item[2.] For the unique density $\tg_1$ for $\og_1$ of the form~\eqref{eq:standard density},
\begin{equation*}
\tg_1=\frac{u^3}{6}-\frac{\eps^2}{24}a^\DR_0 u_x^2+\sum_{g\ge 2}\eps^{2g}\sum_{\lambda\in\cP'_{2g}}\alpha^\DR_\lambda u_\lambda,
\end{equation*}
we have 
$$
a^\DR_0=1,\qquad \alpha^\DR_{(2^g)}=(3g-2)\int_{\oM_g}\lambda_g c_{g,0}.
$$

\end{enumerate}
\end{proposition}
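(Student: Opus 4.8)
The plan is to verify the four defining conditions of a standard deformation and then to identify the two constants. Throughout write $\og_1=\og_{1,1}$ and $\og=(D-2)^{-1}\og_{1,1}$ as in Theorem~\ref{theorem:genus1}, with $D=\eps\d_\eps+\sum_{s\ge0}u_s\d_{u_s}$ the classical operator of Theorem~\ref{theorem:recursion}.

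\emph{Part 1.} For $\dim V=1$ the Hamiltonian operator of the double ramification hierarchy is $\eta\d_x=\d_x$, so the operator conditions $K=\d_x$ and $K|_{\eps=0}=\d_x$ hold identically. Since the Hodge bundle is trivial in genus $0$, every rank-$1$ CohFT shares the genus-$0$ data of the trivial theory, with Frobenius potential $\tfrac{u^3}{6}$; hence the dispersionless limit of the DR hierarchy is the Riemann hierarchy, giving $\og_d|_{\eps=0}=\oh^{\mathrm R}_d$ and $h^\DR_d|_{\eps=0}=h^{\mathrm R}_d$, while $\og_0=\int\tfrac12 u^2\,dx$ generates $\d_x$ and the hierarchy is tau-symmetric with densities $h^\DR_d=\frac{\delta\og_{d+1}}{\delta u}$. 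The substance of Part~1 is the existence of a density of $\og_1$ of the form \eqref{eq:standard density}. By part~2 of Lemma~\ref{lemma:unique density} this follows once one exhibits, for each $g\ge2$, a density of the genus-$g$ part of $\og_1$ with constant coefficients and all jet orders $\ge2$ (equivalently, free of $u$ and $u_x$); the genus-$1$ part automatically lies in $\cP^\circ_2=\{(1,1)\}$ and has the shape $\alpha_{(1,1)}u_x^2$. Such a density I would build from the $a_i^2$-divisibility of $\pi_*(\lambda_g\DR_g(-\sum a_i,a_1,\dots,a_n))$ recalled in Section~\ref{subsubsection:divisibility properties}: since $\d_x$ preserves the number of jet factors of a monomial, the $g\ge2$ claim splits into showing that the contributions of length $>g$ reduce to zero modulo $\d_x$, and that the remaining contributions of length $\le g$ can be made divisible by the square of every jet variable.

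\emph{Part 2, the constant $a^\DR_0$.} This is the universal genus-$1$ coefficient. Only the $n=2$ term contributes ($n=1$ vanishes for dimension reasons), and inserting Hain's formula for $\DR_1$ on $\oM_{1,3}$ together with $\int_{\oM_{1,1}}\lambda_1=\tfrac1{24}$ gives that the genus-$1$ part of $\og_1$ equals $\tfrac{\eps^2}{24}\int u\,u_{xx}\,dx=-\tfrac{\eps^2}{24}\int u_x^2\,dx$. Only the degree-$0$ part of $c_{1,3}$ enters, so the value is independent of the CohFT, and comparison with \eqref{eq:standard density} yields $a^\DR_0=1$.

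\emph{Part 2, the constants $\alpha^\DR_{(2^g)}$.} By part~2 of Lemma~\ref{lemma:unique density}, $\alpha^\DR_{(2^g)}$ is the coefficient of $u_2^g$ in any constant-coefficient density of $\og_1$ with parts $\ge2$, i.e.\ the $\cP'$-coefficient of the length-$g$ monomial $u_2^g$. The clean way to compute it is to pass to $\og$ via $\og_1=(D-2)\og$. The operator $D$ is diagonal on monomials, acting on $\eps^{2g}u_\lambda$ with eigenvalue $2g+\ell(\lambda)$, so on the length-$g$ monomial $\eps^{2g}u_2^g$ the operator $(D-2)$ is multiplication by $3g-2$. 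Because $\d_x$ preserves length, the length-$g$ part of $\og$ is reduced within itself, and it is computed from the $\psi$-free formula for $\og$: its length-$g$ part comes solely from the genus-$g$, $n=g$ integral $\int_{\oM_{g,g}}\DR_g(a_1,\dots,a_g)\lambda_g c_{g,g}(e_1^{\otimes g})$, which by \eqref{eq:DR and fundamental class} (so that $\pi_*\DR_g=g!\,a_1^2\cdots a_g^2[\oM_g]$ upon forgetting all $g$ points) and the unit axiom $c_{g,g}(e_1^{\otimes g})=\pi^*c_{g,0}$ equals $g!\,a_1^2\cdots a_g^2\int_{\oM_g}\lambda_g c_{g,0}$. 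This is already a pure multiple of $\prod a_i^2$, so no other length-$g$ partitions occur and the $\cP'$-coefficient of $u_2^g$ in $\og$ is exactly $\int_{\oM_g}\lambda_g c_{g,0}$; applying $(D-2)$ gives $\alpha^\DR_{(2^g)}=(3g-2)\int_{\oM_g}\lambda_g c_{g,0}$. As a check, at $g=2$ this reads $4\int_{\oM_2}\lambda_2 c_{2,0}=-\tfrac{s_1}{120}$, matching the dispersive classification.

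\emph{Main obstacle.} The genuine difficulty is entirely in Part~1: producing a density of $\og_1$ with constant coefficients and jet orders $\ge2$ for $g\ge2$. The absence of $u_x$ is the divisibility statement of Section~\ref{subsubsection:divisibility properties}, but it must be upgraded to simultaneous divisibility by the square of \emph{every} jet variable and combined with the vanishing, modulo $\d_x$, of all contributions of length exceeding $g$ --- a fact that is transparent for the $\psi$-free functional $\og$ at the top length $g$ but needs care at intermediate lengths. The appearance of $3g-2$ rather than the naive $2g-2$ (which is what one obtains directly from the $\psi$-carrying density of $\og_1$, before reduction) is exactly the bookkeeping that the diagonal operator $(D-2)$ performs automatically, and isolating it is what makes the detour through $\og$ essential.
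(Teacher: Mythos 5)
Your computation of the two constants is correct, and your detour through $\og=(D-2)^{-1}\og_1$ is in substance the same mechanism the paper uses: the dilaton-type identity $\int_{\DR_g(0,a_1,\ldots,a_n)}\lambda_g\psi_1 c_{g,n+1}=(2g-2+n)\int_{\DR_g(a_1,\ldots,a_n)}\lambda_g c_{g,n}$ is exactly the statement that $(D-2)$ acts on the genus-$g$, length-$n$ part of $\og$ by multiplication by $2g-2+n$, which is where $3g-2$ comes from in both treatments. But the proposal has a genuine gap, and you name it yourself: you never prove that for $g\ge 2$ the genus-$g$ part of $\og_1$ admits a density with constant coefficients all of whose parts are $\ge 2$. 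This is not a routine "upgrade" of the divisibility statement; it is the heart of the proof, and the paper settles it by a concrete chain you do not supply: (i) for rank $1$ and $g\ge 2$ the unit axiom gives $c_{g,n}=\pi_n^*c_{g,0}$, so by the projection formula each $n$-point contribution equals $\int_{\oM_{g,0}}\pi_{n*}\DR_g(a_1,\ldots,a_n)\,\lambda_g c_{g,0}$, which vanishes \emph{identically} for $n>g$ because $\pi_{n*}\DR_g(a_1,\ldots,a_n)\in H^{2(g-n)}(\oM_{g,0})$ --- not merely "modulo $\d_x$", as you suggest; (ii) for $n\le g$ one writes $(2g-2+n)\int_{\oM_{g,0}}\pi_{n*}\DR_g(a_1,\ldots,a_n)\lambda_g c_{g,0}=\frac{2g-2+n}{2g-2}\int_{\oM_{g,1}}\pi_{n*}\DR_g(0,a_1,\ldots,a_n)\,\psi_1\lambda_g c_{g,1}$, and the divisibility property of Section~\ref{subsubsection:divisibility properties}, combined with the symmetry of this polynomial in $a_1,\ldots,a_n$ and the pairwise coprimality of the $a_i^2$, shows it is divisible by $a_i^2$ for \emph{every} $i$; (iii) the $n=g$ case is then evaluated by \eqref{eq:DR and fundamental class}. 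Without (i)--(ii) your Part 1 is a plan, not a proof.

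Moreover, this gap is not confined to Part 1: your identification of $\alpha^{\DR}_{(2^g)}$ silently depends on it. The coefficient of $u_2^g$ in a density of a local functional is \emph{not} invariant under adding $\d_x$-exact terms (for instance $\d_x(u_1u_2^{g-1})=u_2^g+(g-1)u_1u_3u_2^{g-2}$), so "the coefficient of $u_2^g$ in a constant-coefficient density with parts $\ge 2$" is only meaningful --- and only equals $\alpha^{\DR}_{(2^g)}$ by part 2 of Lemma~\ref{lemma:unique density} --- once you know that the \emph{entire} genus-$g$ density, at all lengths $n\le g$, can be chosen with parts $\ge 2$; if some intermediate-length term contained a part equal to $1$, the reduction to the form \eqref{eq:standard density} could shift the $(2^g)$ coefficient. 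So the correct logical order is the paper's: first establish divisibility by all $a_i^2$ via steps (i)--(ii), then read off the coefficient $(3g-2)\,g!\,a_1^2\cdots a_g^2\int_{\oM_g}\lambda_g c_{g,0}$ from the $n=g$ term, and only then invoke Lemma~\ref{lemma:unique density} to conclude simultaneously the standard form and the value of $\alpha^{\DR}_{(2^g)}$.
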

\begin{proof}
We have
$$
\og_1=\sum_{g\ge 0,\,n\ge 2}\frac{(-\eps^2)^g}{n!}\sum_{a_1+\cdots+a_n=0}\left(\int_{\DR_g(0,a_1,\ldots,a_n)}\lambda_g\psi_1 c_{g,n+1}\right)\prod_{i=1}^n p_{a_i}.
$$
For $g\ge 1$ and $n\ge 2$ we have
\begin{gather}\label{eq:DR is standard,tmp1}
\int_{\DR_g(0,a_1,\ldots,a_n)}\lambda_g\psi_1 c_{g,n+1}=(2g-2+n)\int_{\DR_g(a_1,\ldots,a_n)}\lambda_g c_{g,n}.
\end{gather}
For $k\le n$ denote by $\pi_k\colon\oM_{g,n}\to\oM_{g,n-k}$ the forgetful map that forgets the last $k$ marked points. Using~\eqref{eq:DR and fundamental class}, we see that if $g=1$, then the right-hand side of~\eqref{eq:DR is standard,tmp1} is equal to
\begin{gather}\label{eq:rank 1,genus 1}
n\int_{\DR_1(a_1,\ldots,a_n)}\lambda_1 c_{1,n}=
\begin{cases}
0,&\text{if $n\ge 3$};\\
2a_1^2\int_{\oM_{1,1}}\lambda_1 c_{1,1}\stackrel{\text{by~\eqref{eq:general rank 1 CohFT}}}{=}\frac{a_1^2}{12},&\text{if $n=2$}.
\end{cases}
\end{gather}
Suppose $g\ge 2$. Then
\begin{gather}\label{eq:rank 1, higher g}
(2g-2+n)\int_{\DR_g(a_1,\ldots,a_n)}\lambda_g c_{g,n}=(2g-2+n)\int_{\pi_{n*}\DR_g(a_1,\ldots,a_n)}\lambda_g c_{g,0}.
\end{gather}
Note that the right-hand side is equal to zero unless $n\le g$. We also see that for $n=g$ the right-hand side of~\eqref{eq:rank 1, higher g} is equal to
\begin{gather}\label{eq:DR is standard,tmp2}
(3g-2)\int_{\pi_{g*}\DR_g(a_1,\ldots,a_g)}\lambda_g c_{g,0}=(3g-2)g!a_1^2\cdots a_g^2\int_{\oM_{g}}\lambda_g c_{g,0}.
\end{gather}
For an arbitrary $n\le g$ we write
$$
(2g-2+n)\int_{\pi_{n*}\DR_g(a_1,\ldots,a_n)}\lambda_g c_{g,0}=\frac{2g-2+n}{2g-2}\int_{\pi_{n*}\DR_g(0,a_1,\ldots,a_n)}\psi_1\lambda_g c_{g,1}.
$$
The divisibility property from Section~\ref{subsubsection:divisibility properties} implies that the integral $\int_{\pi_{n*}\DR_g(0,a_1,\ldots,a_n)}\psi_1\lambda_g c_{g,1}$ can be expressed as a polynomial 
$$
P(a_1,\ldots,a_n)=\sum_{d_1+\cdots+d_n=2g}P_{d_1,\ldots,d_n}a_1^{d_1}\cdots a_n^{d_n},\quad P_{d_1,\ldots,d_n}\in\mbC,
$$
where the coefficient $P_{d_1,\ldots,d_n}$ is equal to zero unless $d_i\ge 2$ for all $1\le i\le n$. Therefore, we obtain
$$
\og_1=\int\left(\frac{u^3}{6}-\frac{\eps^2}{24}u_x^2+\sum_{g\ge 2}\eps^{2g}\sum_{\lambda\in\cP_{2g}}\beta_\lambda u_\lambda\right)dx,
$$
for some constants $\beta_\lambda\in\mbC$ such that $\beta_\lambda=0$ unless $\lambda_i\ge 2$ for all $1\le i\le l(\lambda)$. Moreover, by~\eqref{eq:DR is standard,tmp2}, we have
\begin{gather*}
\beta_{(2^g)}=(3g-2)\int_{\oM_g}\lambda_g c_{g,0}.
\end{gather*}
Lemma~\ref{lemma:unique density} completes the proof of the proposition.
\end{proof}

We obtain the following formula for the constants $\alpha^\DR_{(2^g)}$ in terms of the parameters $s_i$ from~\eqref{eq:general rank 1 CohFT}:
\begin{gather}\label{eq:alpha-s relation}
\alpha^\DR_{(2^{g})}=(3g-2)\int_{\oM_g}\lambda_g e^{-\sum_{i\ge 1}\frac{(2i)!}{B_{2i}}s_i\Ch_{2i-1}(\mathbb E)}.
\end{gather}
In particular,
\begin{align}
\alpha^\DR_{(2^2)}=&-48s_1\int_{\oM_2}\lambda_2\lambda_1=-\frac{s_1}{120},\label{eq:correspondence1}\\
\alpha^\DR_{(2^3)}=&\left(-4032s_1^3-840s_2\right)\int_{\oM_3}\lambda_3\lambda_2\lambda_1=-\frac{s_1^3}{360}-\frac{s_2}{1728},\label{eq:correspondence2}\\
\alpha^\DR_{(2^4)}=&\left(-331776 s_1^5 - 172800 s_1^2 s_2 - 2520 s_3\right)\int_{\oM_4}\lambda_4\lambda_3\lambda_2=-\frac{2s_1^5}{525}-\frac{s_1^2s_2}{504}-\frac{s_3}{34560},\label{eq:correspondence3}\\
\alpha^\DR_{(2^5)}=&s_1^7\int_{\oM_5}\left(\frac{207028224}{35}\lambda_5\lambda_4\lambda_3-\frac{51757056}{5}\lambda_5\lambda_4\lambda_2\lambda_1\right)\label{eq:correspondence4}\\
&+s_1^4 s_2\int_{\oM_5}\left(10782720\lambda_5\lambda_4\lambda_3-10782720\lambda_5\lambda_4\lambda_2\lambda_1\right)\notag\\
&+s_1^2 s_3\int_{\oM_5}\left(943488\lambda_5\lambda_4\lambda_3-471744\lambda_5\lambda_4\lambda_2\lambda_1\right)\notag\\
&-s_4\int_{\oM_5}3120\lambda_5\lambda_4\lambda_3,\notag\\
&+s_1s_2^2\int_{\oM_5}\left(2246400\lambda_5\lambda_4\lambda_2\lambda_1-8985600 \lambda_5\lambda_4\lambda_3\right)=\notag\\
=&-\frac{754 s_1^7}{67375}-\frac{13 s_1^4s_2}{1320}-\frac{13 s_1^2 s_3}{52800}-\frac{13 s_4}{10644480}-\frac{13 s_1 s_2^2}{22176}.\notag
\end{align}
Here we use the formulas \cite{FP00,DLYZ16}
\begin{align*}
&\int_{\oM_g}\lambda_g\lambda_{g-1}\lambda_{g-2}=\frac{1}{2(2g-2)!}\frac{|B_{2g-2}|}{2g-2}\frac{|B_{2g}|}{2g},\quad g\ge 2,\\
&\int_{\oM_5}\lambda_5\lambda_4\lambda_2\lambda_1=\frac{1}{766402560}.
\end{align*}

%%%%%%%%%%%%%%%%%%%%%%%%%%%%%%%%%%%%%%%%%%%%%%%%%%%%%%%%%%%%%%%%%%%%%%%%%

\subsection{Standard form for the Dubrovin-Zhang hierarchy of rank $1$}\label{subsection:standard for DZ}

\begin{theorem}\label{theorem:standard form of DZ}
Consider a cohomological field theory of rank $1$ with $\eta_{1,1}=1$. Then part 2 of Conjecture~\ref{conjecture:DLYZ16} is true for the corresponding Dubrovin-Zhang hierarchy.
\end{theorem}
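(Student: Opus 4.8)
The plan is to produce the required normal Miura transformation directly from Theorem~\ref{theorem:Miura for DZ}, and then to check that the resulting hierarchy is a \emph{standard} deformation of the Riemann hierarchy; at that point the uniqueness assertion in Conjecture~\ref{conjecture:DLYZ16} (already established in~\cite{DLYZ16}) upgrades existence to the full statement of part~2. Specializing Theorem~\ref{theorem:Miura for DZ} to $N=1$, $\eta_{1,1}=1$, the Miura transformation $w\mapsto u(w^*_*;\eps)$ --- the composition of the normal Miura transformation generated by $\cP$ with the transformation $\tu\mapsto u(\tu^*_*;\eps)$ built from the reduced correlators --- transforms the Dubrovin-Zhang operator $K^\DZ$ into $\eta\d_x=\d_x$. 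Since both factors are normal Miura transformations, so is their composition, and normal Miura transformations preserve tau-symmetry; hence the image hierarchy is a tau-symmetric deformation of the Riemann hierarchy with Hamiltonian operator $\d_x$. It therefore only remains to exhibit, for its Hamiltonian $\oh_1$, a density of the standard form~\eqref{eq:standard density}.

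In the coordinate $u$ the transformed hierarchy is precisely the hierarchy attached to the reduced potential $F^\red$: the $\cP$-transformation replaces the Dubrovin-Zhang potential $F$ by $F^\red$ (this is the defining relation~\eqref{eq:definition of Fred}), and $u$ is the corresponding flat coordinate, related to the normal coordinate $\tu$ by the $F^\red$-analogue of~\eqref{eq:u-tu in terms of FDR}. Thus $\oh_1=\og_1$ is the first Hamiltonian of the $F^\red$-hierarchy, whose correlators obey the vanishing property~\eqref{eq:property of Fred} together with the string and dilaton equations for $F^\red$ recalled in Section~\ref{subsection:strong DR/DZ equivalence} --- exactly the identities enjoyed by the double ramification potential $F^\DR$. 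My aim is then to run, for $\og_1$, an argument parallel to the proof of Proposition~\ref{proposition:DR is standard}, but at the level of the potential rather than of the moduli space.

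Concretely, I would first observe that the string relation $\frac{\d h_p}{\d u}=h_{p-1}$ (with $h_{-1}=u$), together with $\oh_0=\int\frac{u^2}{2}dx$ forced by the fact that $\oh_0$ generates spatial translations, gives $\frac{\d^2 h_1^{[g]}}{\d u^2}=0$ for $g\ge1$ and $\int h_0^{[g]}dx=0$; writing $h_0^{[g]}=\d_x D^{[g]}$ and integrating by parts then removes all dependence on $u=u_0$, so $\oh_1$ has a density with constant coefficients. The heart of the matter is the second property: that for $g\ge2$ this density can be chosen with all jet indices $\ge2$, i.e.\ with no $u_1$, which is the analogue of the divisibility statement of Section~\ref{subsubsection:divisibility properties} used in Proposition~\ref{proposition:DR is standard}. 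Here the elimination of the $u_1$-terms must come from the dilaton equation for $F^\red$ (which controls $\tau_1$-insertions) combined with the vanishing~\eqref{eq:property of Fred}, after reconstructing the density of $\og_1$ from the reduced correlators. Once a density with constant coefficients and parts $\ge2$ is in hand, part~2 of Lemma~\ref{lemma:unique density} shows that the unique density of the form~\eqref{eq:unique density} in fact lies in $\cP'_{2g}$, i.e.\ is of the standard form~\eqref{eq:standard density}; this makes the image hierarchy a standard deformation and completes the existence statement.

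The main obstacle is exactly this reconstruction step: translating the abstract correlator identities of $F^\red$ (vanishing, string, dilaton) into the jet-index structure of the density of $\og_1$. In Proposition~\ref{proposition:DR is standard} this was achieved geometrically, via the forgetful-pushforward formula~\eqref{eq:DR and fundamental class} and the divisibility~\eqref{eq:divisibility} of double ramification cycles; since we must not assume the DR/DZ equivalence $F^\DR=F^\red$ (proved only up to genus~$5$ in the next section), the same conclusion has to be re-derived purely from the potential-level identities that $F^\red$ is known to satisfy. Because those identities coincide with the ones available for $F^\DR$, I expect the argument to go through once the correspondence between the reduced correlators and the jet expansion of $\og_1$ is set up, but carrying out this bookkeeping --- and in particular verifying that no $u_1$ survives for every $g\ge2$ --- is the delicate point of the proof.
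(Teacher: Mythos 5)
Your overall strategy coincides with the paper's: apply Theorem~\ref{theorem:Miura for DZ} to obtain a transformation taking $K^\DZ$ to $\d_x$, then show that the transformed Hamiltonian $\oh_1^\DZ[u]$ admits a density of the form~\eqref{eq:standard density}, and invoke the uniqueness already checked in~\cite{DLYZ16}. But there are two genuine gaps. First, part~2 of Conjecture~\ref{conjecture:DLYZ16} demands a \emph{normal} Miura transformation, and you assert without justification that the factor $\tu\mapsto u(\tu^*_*;\eps)$ built from the reduced correlators is normal (and that a composition of normal transformations is again normal). The paper instead proves that in rank~$1$ this factor is the identity: by the string equation and the vanishing property~\eqref{eq:property of Fred} one gets $\<\tau_0^2\prod\tau_{d_i}\>^\red_g=0$ whenever $\sum d_i=2g$ and $g\ge 1$ (two applications of the string equation drop the total $\psi$-degree to $2g-2$, where the vanishing applies), hence $\tu(u_*;\eps)=u$ and the composite is exactly the normal transformation generated by $\cP$.

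Second, and more seriously, the step you yourself flag as ``the delicate point'' --- showing that no $u_x$ survives in the density for every $g\ge 2$ --- is precisely the mathematical content of the paper's proof, and the route you propose for it does not close. You plan to produce a density with constant coefficients and all jet indices $\ge 2$ and then apply part~2 of Lemma~\ref{lemma:unique density}; but that lemma takes as a \emph{hypothesis} a density with $h_\lambda=0$ unless $\lambda_i\ge 2$, so you would be assuming the very thing to be proved. The paper's mechanism is different: Lemma~\ref{lemma:sufficient condition for good density} gives an if-and-only-if criterion at the level of the local functional, namely $\frac{\d\oh}{\d u}=0$ and $\frac{\d}{\d u_x}\frac{\delta\oh}{\delta u}=0$, so that no density has to be constructed by hand. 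The first condition follows from $\frac{\d\oh_1^\DZ[u]}{\d u}=\oh_0^\DZ[u]=\int\frac{u^2}{2}dx$; for the second, one identifies $\frac{\delta\oh_1^\DZ[u]}{\delta u}=h_0^\red$, proves the closed formula~\eqref{eq:formula for h0red} expressing $h_0^\red$ through the correlators $\<\tau_0\tau_1\prod\tau_{d_i}\>^\red_g$ (this requires an auxiliary lemma stating that a differential polynomial vanishing to sufficiently high order on the topological solution must be zero), and then kills the $u_x$-dependence by combining the dilaton equation, the string equation and~\eqref{eq:property of Fred}. Without these ingredients --- the variational-derivative criterion and the explicit correlator formula for $h_0^\red$ --- your sketch identifies the right identities to use but does not constitute a proof.
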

\begin{proof}
Consider the normal Miura transformation $w\mapsto\tu(w_*;\eps)$ and the Miura transformation $u\mapsto\tu(u_*;\eps)$ from Theorem~\ref{theorem:Miura for DZ}. From equation~\eqref{eq:property of Fred} and the string equation for~$F^\red$ it follows that $\<\tau_0^2\prod\tau_{d_i}\>^\red_g=0$, if $\sum d_i=2g$ and $g\ge 1$. Therefore, $\tu(u_*;\eps)=u$. By Theorem~\ref{theorem:Miura for DZ}, $K^\DZ_u=\d_x$. Let us prove that the Hamiltonian~$\oh_1^\DZ[u]$ has a density of the form~\eqref{eq:standard density}. Let 
$$
u^\red(x,t_*,\eps):=\left.\frac{\d^2 F^\red}{\d t_0^2}\right|_{t_0\mapsto t_0+x}.
$$
Denote by $h^\red_p\in\hcA^{[0]}_u$, $p\ge -1$, the tau-symmetric densities of the Dubrovin-Zhang hierarchy after the normal Miura transformation $w\mapsto u(w_*;\eps)$. The differential polynomial $h^\red_p$ is uniquely determined by the condition 
\begin{gather}\label{eq:equation for hpred}
h^\red_p(u^\red,u^\red_x,\ldots;\eps)=\left.\frac{\d^2 F^\red}{\d t_0\d t_{p+1}}\right|_{t_0\mapsto t_0+x}.
\end{gather}
The string equation for $F^\red$ implies that
$$
\frac{\d h^\red_p}{\d u}=h^\red_{p-1},\quad p\ge 0.
$$
Since $K^\DZ_u=\d_x$ and the Hamiltonian $\oh^\DZ_0[u]$ generates the spatial translations, we get $u_x=\d_x\frac{\delta\oh^\DZ_0[u]}{\delta u}$. Therefore,
$$
\oh^\DZ_0[u]=\int\frac{u^2}{2}dx.
$$
We obtain
$$
\frac{\d\oh^\DZ_1[u]}{\d u}=\oh_0^\DZ[u]=\int\frac{u^2}{2}dx.
$$
Therefore, 
$$
\oh_1^\DZ[u]=\int\left(\frac{u^3}{6}-\frac{\eps^2}{24}a_0u_x^2\right)dx+O(\eps^4)
$$
for some constant $a_0$.

\begin{lemma}\label{lemma:sufficient condition for good density}
Suppose $d\ge 4$ and $\oh\in\Lambda^{[d]}_u$. Then $\oh$ has a density $\th$ of the form
\begin{gather}\label{eq:good form of the density}
\th=\sum_{\lambda\in\cP'_d}C_\lambda u_\lambda,\quad C_\lambda\in\mbC,
\end{gather}
if and only if 
\begin{gather}\label{eq:sufficient condition for good density}
\frac{\d\oh}{\d u}=0\quad\text{and}\quad\frac{\d}{\d u_x}\frac{\delta\oh}{\delta u}=0.
\end{gather}
\end{lemma}
\begin{proof}
Obviously, if a density of the form~\eqref{eq:good form of the density} exists, then equations~\eqref{eq:sufficient condition for good density} are satisfied. Suppose now that the conditions~\eqref{eq:sufficient condition for good density} are true. Consider the unique density~$\th$ for~$\oh$ of the form~\eqref{eq:unique density}. The first condition in~\eqref{eq:sufficient condition for good density} immediately implies that $\frac{\d\th_\lambda(u)}{\d u}=0$. Then we compute
\begin{align*}
\frac{\d}{\d u_x}\frac{\delta\oh}{\delta u}=&\frac{\d}{\d u_x}\sum_{n\ge 0}(-\d_x)^n\frac{\d\th}{\d u_n}=-\sum_{n\ge 1}n(-\d_x)^{n-1}\frac{\d}{\d u_n}\frac{\d\th}{\d u}+\sum_{n\ge 0}(-\d_x)^n\frac{\d}{\d u_n}\frac{\d\th}{\d u_x}=\frac{\delta}{\delta u}\frac{\d\th}{\d u_x}.
\end{align*}
We obtain $\frac{\delta}{\delta u}\frac{\d\th}{\d u_x}=0$ and, therefore, $\frac{\d\th}{\d u_x}$ is $\d_x$-exact. Clearly, the differential polynomial $\frac{\d\th}{\d u_x}$ has the form~\eqref{eq:unique density}, so it can be $\d_x$-exact only if it is zero. Thus, $\th$ has the form~\eqref{eq:good form of the density} and the lemma is proved.
\end{proof}

We see that it remains to prove that $\frac{\d}{\d u_x}\frac{\delta\oh^\DZ_1[u]}{\delta u}=0$. We have (see~\cite[Section 3.7]{BDGR16}) $\frac{\delta\oh_1^\DZ[u]}{\delta u}=h^\red_0$. Let us prove that
\begin{gather}\label{eq:formula for h0red}
h^\red_0=\frac{u^2}{2}+\sum_{g,n\ge 1}\frac{\eps^{2g}}{n!}\sum_{d_1+\ldots+d_n=2g}\<\tau_0\tau_1\prod\tau_{d_i}\>^\red_g\prod u_{d_i}.
\end{gather}
From~\eqref{eq:property of Fred} and the string equation for $F^\red$ it follows that
$$
\left.u^\red_d\right|_{x=0}=t_d+\delta_{d,1}+\sum_{g\ge 0}\eps^{2g}R_{g,d}(t_*),
$$
where $R_{g,d}\in\mbC[[t_*]]^{(2g+d+1)}$. Denote the right-hand side of~\eqref{eq:formula for h0red} by $Q$. Using~\eqref{eq:equation for hpred}, we see that
$$
\left.\left(h_0^\red(u^\red,u^\red_x,\ldots;\eps)-Q(u^\red,u^\red_x,\ldots;\eps)\right)\right|_{x=0}=\sum_{g\ge 0}\eps^{2g}R_g(t_*),
$$
where $R_g\in\mbC[[t_*]]^{(2g+1)}$. The proof of equation~\eqref{eq:formula for h0red} is completed by the following lemma. 
\begin{lemma}
Suppose for a differential polynomial $P\in\hcA^{[0]}_u$ we have
\begin{gather}\label{eq:sufficient property to be zero}
\left.P(u^\red,u^\red_x,\ldots;\eps)\right|_{x=0}=\sum_{g\ge 0}\eps^g T_g(t_*),
\end{gather}
where $T_g\in\mbC[[t_*]]^{(g+1)}$. Then $P=0$.
\end{lemma}
\begin{proof}
Suppose that
$$
P(u_*;\eps)=\sum_{g\ge g_0}\eps^g P_g(u_*),\quad P_g\in\cA^{[g]}_u,\quad P_{g_0}\ne 0.
$$
Let 
$$
P_{g_0}(u_*)=\sum_{k=0}^{k_0}P_{g_0,k}(u_*)u_x^k,
$$
where $\frac{\d P_{g_0,k}}{\d u_x}=0$ and $P_{g_0,k_0}\ne 0$. Clearly, we have
\begin{gather}\label{eq:contradiction1}
\left.P(u^\red,u^\red_x,\ldots;\eps)\right|_{x=0}=\eps^{g_0}(P_{g_0,k_0}|_{u_d=t_d}+R(t_*))+O(\eps^{g_0+1}),
\end{gather}
where $R(t_*)\in\mbC[[t_*]]^{(g_0-k_0+1)}$. Since $P_{g_0,k_0}\ne 0$, we see that equation~\eqref{eq:contradiction1} contradicts~\eqref{eq:sufficient property to be zero}. Therefore, $P=0$ and the lemma is proved.
\end{proof}
Equations~\eqref{eq:equation for hpred},~\eqref{eq:property of Fred} and the string and the dilaton equations for~$F^\red$ imply that $\frac{\d h^\red_0}{\d u_x}=0$. Therefore, $\frac{\d}{\d u_x}\frac{\delta\oh^\DZ_1[u]}{\delta u}=0$ and the theorem is proved.
\end{proof}

%%%%%%%%%%%%%%%%%%%%%%%%%%%%%%%%%%%%%%%%%%%%%%%%%%%%%%%%%%%%%%%%%%%%%%%%%%%%%%%%%%%%%%%%%%

\subsection{Strong DR/DZ equivalence up to genus $5$}\label{subsection:DR/DZ equivalence up to genus 5}

In Section~\ref{subsubsection:strong DR/DZ} we recall a sufficient condition for the strong DR/DZ equivalence conjecture to be true. In Section~\ref{subsubsection:reduction to alpha=1} we consider a rank~$1$ cohomological field theory~\eqref{eq:general rank 1 CohFT} and show that the strong DR/DZ equivalence conjecture for general $\alpha$ follows from the case $\alpha=1$. Finally, in Section~\ref{subsubsection:proof up to genus 5} we prove the strong conjecture at the approximation up to genus~$5$.

\subsubsection{Sufficient condition for the strong DR/DZ equivalence conjecture}\label{subsubsection:strong DR/DZ}

Consider an arbitrary semisimple cohomological field theory, $c_{g,n}\colon V^{\otimes n}\to H^\even(\oM_{g,n},\mbC)$, where $\dim V=N$. Recall that by $\tu^\alpha(u^*_*;\eps)$ we denote the normal coordinates~\eqref{eq:normal coordinates for DR} for the double ramification hierarchy. Denote by $K^\DR_\tu$ the operator $\eta\d_x$ in the coordinates $\tu^\alpha$. In~\cite{BDGR16} we proved the following proposition.
\begin{proposition}[{\cite[Section 7.3]{BDGR16}}]\label{proposition:sufficient condition for strong}
Suppose that the Hamiltonians and the Hamiltonian operators of the double ramification hierarchy in the coordinates~$\tu^\alpha$ and the Dubrovin-Zhang hierarchy are related by a Miura transformation of the form
\begin{gather}\label{eq:sufficient form of a transformation}
\tu^\alpha\mapsto w^\alpha(\tu^*_*;\eps)=\tu^\alpha+\eta^{\alpha\mu}\d_x\left\{\cQ,\og_{\mu,0}[\tu]\right\}_{K^\DR_{\tu}},
\end{gather}
where $\cQ\in\hcA^{[-2]}_{\tu^1,\ldots,\tu^N}$ and $\frac{\d\cQ}{\d\tu^1}=\eps^2\<\tau_0(e_1)\>_1$. Then the strong DR/DZ equivalence conjecture is true.
\end{proposition}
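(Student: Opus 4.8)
The plan is to exploit the equivalence, established in~\cite[Section~7.3]{BDGR16} and recalled above, that the strong DR/DZ equivalence conjecture holds if and only if $F^\DR=F^\red$. Thus it suffices to show that the mere existence of a Miura transformation of the form~\eqref{eq:sufficient form of a transformation}, relating the double ramification hierarchy in the normal coordinates $\tu^\alpha$ to the Dubrovin--Zhang hierarchy, forces the identity $F^\DR=F^\red$. The two key inputs will be: first, that both $F^\DR$ and $F^\red$ satisfy the \emph{same} vanishing property $\<\tau_{d_1}(e_{\alpha_1})\ldots\tau_{d_n}(e_{\alpha_n})\>_g=0$ for $\sum d_i\le 2g-2$ (for $F^\DR$ this is the correlator vanishing recalled in Section~\ref{subsection:geometric formula}, for $F^\red$ it is~\eqref{eq:property of Fred}); and second, the uniqueness of the differential polynomial $\cP\in\hcA^{[-2]}_{w^1,\ldots,w^N}$ defining $F^\red$ through~\eqref{eq:definition of Fred}.

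First I would recall the action of a normal Miura transformation on the tau-function of a tau-symmetric hierarchy. A transformation of the shape $\tu^\alpha\mapsto w^\alpha=\tu^\alpha+\eta^{\alpha\mu}\d_x\{\cQ,\og_{\mu,0}\}_{K^\DR_\tu}$ with $\cQ\in\hcA^{[-2]}$ preserves the normalization $h_{\alpha,-1}=\eta_{\alpha\mu}u^\mu$ of the tau-structure, and its effect on the two-point functions $\Omega_{\alpha,p;\beta,q}$, integrated twice along the topological solution, is precisely a shift of the potential by $\cQ$ evaluated on that solution --- this is the exact mechanism by which~\eqref{eq:definition of Fred} produces $F^\red$ from $F$. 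Applying this to the transformation~\eqref{eq:sufficient form of a transformation}, whose inverse is again a normal Miura transformation of the Dubrovin--Zhang hierarchy generated by some $\cP'\in\hcA^{[-2]}_{w^1,\ldots,w^N}$, I would deduce that the Dubrovin--Zhang potential $F$ is carried to the double ramification potential $F^\DR$, namely
\begin{equation*}
F^\DR=F+\left.\cP'(w^\top,w^\top_x,\ldots;\eps)\right|_{x=0},
\end{equation*}
with the normalization condition $\frac{\d\cQ}{\d\tu^1}=\eps^2\<\tau_0(e_1)\>_1$ ensuring that the string equation for $F^\DR$ matches the one for $F^\red$ (in particular absorbing the $\eps^2\<\tau_0(e_1)\>_1$ term which appears in the Dubrovin--Zhang string equation but not in the reduced one).

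With this in hand, the conclusion is a uniqueness argument. The displayed identity exhibits $F^\DR$ as a normal-Miura shift of $F$ by a generator $\cP'\in\hcA^{[-2]}_{w^1,\ldots,w^N}$, and $F^\DR$ satisfies the vanishing property. But $\cP$ is, by construction, the \emph{unique} element of $\hcA^{[-2]}_{w^1,\ldots,w^N}$ whose associated shift of $F$ satisfies this vanishing property. Hence $\cP'=\cP$, the shift producing $F^\DR$ coincides with the one producing $F^\red$, and therefore $F^\DR=F^\red$. By the cited equivalence this proves the strong DR/DZ equivalence conjecture.

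The main obstacle I anticipate is making the transformation law for the potential under~\eqref{eq:sufficient form of a transformation} fully precise: one must verify that the inverse of~\eqref{eq:sufficient form of a transformation} is again of normal-Miura type, generated by a genuine element of $\hcA^{[-2]}_{w^1,\ldots,w^N}$, and that the resulting shift of the potential is exactly of the form~\eqref{eq:definition of Fred}, with careful bookkeeping of the string- and dilaton-equation normalizations so that the constant and linear ambiguities in the tau-function are fixed compatibly with the role of $\frac{\d\cQ}{\d\tu^1}=\eps^2\<\tau_0(e_1)\>_1$. This demands the full machinery of tau-structures and their behaviour under normal Miura transformations from~\cite{BDGR16}, but once it is set up the uniqueness of $\cP$ makes the final identification immediate.
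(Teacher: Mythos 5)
Your proposal is correct and takes essentially the route of the proof this paper points to: the proposition is not reproved here but quoted from \cite[Section 7.3]{BDGR16}, and the argument there is exactly your reduction — the normal Miura transformation generated by $\cQ$ shifts the tau-function by an element of $\hcA^{[-2]}$ evaluated on the distinguished solution, so the hypothesis exhibits $F^{\DR}$ as $F$ plus such a shift; the vanishing of DR correlators for $\sum d_i\le 2g-2$ then shows this shifted potential satisfies the defining property of $F^{\red}$, and the uniqueness of $\cP$ forces $F^{\DR}=F^{\red}$, which is equivalent to the strong conjecture. The points you flag as the main obstacle (that the inverse is again a normal Miura transformation with generator in $\hcA^{[-2]}$, the matching of the string/topological solutions, and the string--dilaton normalization absorbed by $\frac{\d\cQ}{\d\tu^1}=\eps^2\langle\tau_0(e_1)\rangle_1$) are precisely the technical content supplied by the normal Miura transformation machinery of \cite{BDGR16}.
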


\subsubsection{Reduction to the case $\alpha=1$}\label{subsubsection:reduction to alpha=1}

Consider a rank $1$ cohomological field theory~\eqref{eq:general rank 1 CohFT}. Then both potentials $F$ and $F^\red$ are power series in $t_0,t_1,\ldots$ and $\eps$ that additionally depend on the parameters $s_1,s_2,\ldots$ and $\alpha$. Define an operator $O$ by $O:=\alpha\frac{\d}{\d\alpha}+\frac{1}{2}\eps\frac{\d}{\d\eps}$. From Theorem~\ref{theorem:main geometric formula} we immediately see that 
\begin{gather}\label{eq:OFDR}
O F^\DR=F^\DR.
\end{gather}
Clearly, we have $O F=F$. Since $\eta^{1,1}=\frac{1}{\alpha}$, we get $O w^\top=0$. Then from the construction of the reduced potential $F^\red$ in~\cite[Section 7.3]{BDGR16} we can easily see that 
\begin{gather}\label{eq:OFred}
OF^\red=F^\red.
\end{gather}
Formulas~\eqref{eq:OFDR} and~\eqref{eq:OFred} imply that if $F^\DR$ and $F^\red$ are equal for $\alpha=1$, then they are equal for an arbitrary $\alpha$. Therefore, if the strong DR/DZ equivalence conjecture is true for $\alpha=1$, then it is true for an arbitrary $\alpha$.

\subsubsection{Proof of the equivalence up to genus $5$}\label{subsubsection:proof up to genus 5}

Consider a cohomological field theory~\eqref{eq:general rank 1 CohFT}. Let us prove the strong DR/DZ equivalence conjecture at the approximation up to genus~$5$. From the previous section we know that it is enough to consider the case $\alpha=1$. By Theorem~\ref{theorem:standard form of DZ}, the normal Miura transformation
$$
w\mapsto u(w_*;\eps)=w+\d_x^2\cP,
$$
transforms the Dubrovin-Zhang hierarchy to its standard form. We have the following formula for the unique density $\th_1$ for~$\oh_1^\DZ[u]$ of the form~\eqref{eq:standard density}:
\begin{equation}\label{eq:DZ standard density}
\begin{split}
\th_{1}=&\frac{u^3}{6}-\frac{\eps^2}{24}u_x^2-\frac{\eps^4}{120} s_1 u_{xx}^2-
\eps^6\left[\left(\frac{s_1^3}{360} +\frac{s_2}{1728}\right)u_{xx}^3+\frac{s_1^2}{420}u_{xxx}^2\right]\\
&-\eps^8\left[\left(\frac{2s_1^5}{525} +\frac{s_1^2 s_2}{504}+\frac{s_3}{34560}\right)u_{xx}^4+\left(\frac{11 s_1^4}{1400}+\frac{11 s_1 s_2}{6720}\right)u_{xxx}^2u_{xx}+\left(\frac{s_1^3}{1260}+\frac{s_2}{60480}\right)u_{xxxx}^2\right]\\
&-\eps^{10}\left[\left(\frac{754 s_1^7}{67375}+\frac{13 s_2 s_1^4}{1320}+\frac{13 s_3 s_1^2}{52800}+\frac{13 s_2^2 s_1}{22176}+\frac{13 s_4}{10644480}\right)u_{xx}^5\right.\\
&\hspace{1.2cm}+\left(\frac{58 s_1^6}{1375}+\frac{7s_2s_1^3}{330}+\frac{7 s_3 s_1}{26400}+\frac{s_2^2}{3168}\right)u_{xxx}^2 u_{xx}^2\\
&\hspace{1.2cm}\left.+\left(\frac{71 s_1^5}{12600}+\frac{s_1^2s_2}{756}+\frac{s_3}{276480}\right)u_{xxxx}^2u_{xx}+\left(\frac{s_1^4}{3465}+\frac{s_2 s_1}{66528}\right)u_{xxxxx}^2\right]\\
&+O(\eps^{12}).
\end{split}
\end{equation}
This formula is given in~\cite[page 433]{DLYZ16} at the approximation up to genus~$4$, and we are grateful to the authors of~\cite{DLYZ16} for providing us a software that computes the density~$\th_1$ at the approximation up to genus~$5$. We see here that $a_0=1$ and 
\begin{align*}
\alpha_{(2^2)}=&-\frac{s_1}{120},\\
\alpha_{(2^3)}=&-\frac{s_1^3}{360}-\frac{s_2}{1728},\\
\alpha_{(2^4)}=&-\frac{2s_1^5}{525}-\frac{s_1^2 s_2}{504}-\frac{s_3}{34560},\\
\alpha_{(2^5)}=&-\frac{754 s_1^7}{67375}-\frac{13 s_2 s_1^4}{1320}-\frac{13 s_3 s_1^2}{52800}-\frac{13 s_2^2 s_1}{22176}-\frac{13 s_4}{10644480}.
\end{align*}
From equations~\eqref{eq:correspondence1}--\eqref{eq:correspondence4} we see that $\alpha_{(2^g)}=\alpha^\DR_{(2^g)}$ for $g=2,3,4,5$. Since Conjecture~\ref{conjecture:DLYZ16} is true at the approximation up to~$\eps^{10}$, we obtain that the standard form of the Dubrovin-Zhang hierarchy coincides with the double ramification hierarchy up to genus~$5$. Note that $\tu(u_*;\eps)=\frac{\delta\og_0}{\delta u}=u$. We have 
$$
\left.(F^\red-F)\right|_{t_0\mapsto t_0+x}=\cP(w^\top,w^\top_x,\ldots;\eps).
$$
From the string equations for $F^\red$ and $F$ it follows that $\frac{\d\cP}{\d w^1}=-\eps^2\<\tau_0\>_1$. Then it is easy to see that the Miura transformation $u\mapsto w(u_*;\eps)$ has the form
$$
w(u_*;\eps)=u+\d_x^2\cQ,
$$
where $\frac{\d\cQ}{\d u^1}=\eps^2\<\tau_0\>_1$. Therefore, the sufficient condition from Proposition~\ref{proposition:sufficient condition for strong} is satisfied and we conclude that the strong DR/DZ equivalence conjecture is true at the approximation up to genus~$5$.

\end{document}